\documentclass[11pt,reqno]{amsart}
\usepackage{a4wide}
\usepackage[english]{babel}
\usepackage{enumitem}
\usepackage{amssymb,amsmath,amsthm,amsfonts,latexsym,mathtools}
\usepackage{bbm}
\usepackage{appendix}
\usepackage{hyperref}
\usepackage{color}
\DeclareMathOperator{\tr}{Tr}
\makeatletter

%\usepackage[mathlines,displaymath]{lineno}
%\linenumbers

%%%%%%%%%%%%%%%%%%%%%%%%%%%%%% Textclass specific LaTeX commands.
\numberwithin{equation}{section}
\numberwithin{figure}{section}
\theoremstyle{plain}
\newtheorem{theorem}{\protect\theoremname}
\theoremstyle{remark}
\newtheorem*{remark}{Remark}
\theoremstyle{plain}
\newtheorem{lemma}[theorem]{\protect\lemmaname}
\newlist{casenv}{enumerate}{4}
\setlist[casenv]{leftmargin=*,align=left,widest={iiii}}
\setlist[casenv,1]{label={{\itshape\ \casename} \arabic*.},ref=\arabic*}
\setlist[casenv,2]{label={{\itshape\ \casename} \roman*.},ref=\roman*}
\setlist[casenv,3]{label={{\itshape\ \casename\ \alph*.}},ref=\alph*}
\setlist[casenv,4]{label={{\itshape\ \casename} \arabic*.},ref=\arabic*}

\makeatother

\providecommand{\lemmaname}{Lemma}

\providecommand{\casename}{Case}
\providecommand{\theoremname}{Theorem}

\allowdisplaybreaks
\begin{document}
	
	\title{Blow-Up Profile of 2D Focusing Mixture Bose Gases}
	
	\author{Dinh-Thi NGUYEN}
	\address{Dinh-Thi Nguyen, Mathematisches Institut, Ludwig--Maximilians--Universit\"at M\"unchen (LMU), Theresienstrasse 39, 80333 Munich, Germany, and Munich Center for Quantum Science and Technology (MCQST), Schellingstrasse 4, 80799 Munich, Germany.} 
	\email{\href{mailto:nguyen@math.lmu.de}{nguyen@math.lmu.de}}
	
	\subjclass[2000]{81V70, 35J50, 35J61, 35J47, 47J30, 35Q40}
	\keywords{Bose--Einstein condensation, Ground state energy, Mean-field scaling, Mixture condensate, Multi-component bosons, Reduced density matrix}
	
	\maketitle

\begin{abstract}
	We study the collapse of a many-body system which is used to model two-component Bose--Einstein condensates with attractive intra-species interactions and either attractive or repulsive inter-species interactions. Such a system consists a mixture of two different species for $N$ identical bosons in $\mathbb R^2$, interacting with potentials rescaled in the mean-field manner $-N^{2\beta-1}w^{(\sigma)}(N^{\beta}x)$ with $\int_{\mathbb R^{2}}w^{(\sigma)}(x){\rm d}x=1$. Assuming that $0<\beta<1/2$, we first show that the leading order of the quantum energy is captured correctly by the Gross--Pitaevskii energy. Secondly, we investigate the blow-up behavior of the quantum energy as well as the ground states when $N\to\infty$ and either the total interaction strength of intra-species and inter-species or the strengths of intra-species interactions of each component approaches sufficiently slowly a critical value, which is the critical strength for the focusing Gross--Pitaevskii functional. We prove that the many-body ground states fully condensate on the (unique) Gagliardo--Nirenberg solution.
\end{abstract}

\tableofcontents

\section{Introduction}
After the first realization of Bose--Einstein condensate (BEC) in the laboratory in 1995 \cite{AndEnsMatWieCor-95,CorWie-02,DavMewAndDruDurKurKet-95,Ketterle-02}, theoretical studies have been developed for the one-component BEC. In that case, the NLS energy functional is commonly used to predict a collapse of the system when the interaction is attractive and the number of particles excesses a critical value \cite{BayPet-96,DalStr-96,UedLeg-98,MueBay-00,BaoCai-13,GuoSei-14}. This effect has been observed in some experiments \cite{BraSacTolHul-95,KagMurShl-98,SacStoHul-98,GerStrProHul-00,DonClaCorRobCorWie-01}. BEC with multiple species can display some interesting phenomena absent in single-component BEC. The simplest multi-component BEC is the binary mixture.

In this paper, we establish some results about 2D focusing mixture condensate in the critical regime of collapse. To be precise, we consider a Bose gas trapped into a quasi-2D layer by means of trapping potentials and we look at a nonlinear Schrodinger many-body system arising in a two-component BEC with attractive intra-species interactions and either attractive or repulsive inter-species interactions. It is described by the Hamiltonian for $N_{1}$ and $N_{2}$ identical bosons of different types in $\mathbb{R}^{2}$
\begin{align}
H_{N} = &  \sum_{i=1}^{N_{1}}\big(-\Delta_{x_{i}} + V_{1}(x_{i})\big)-\frac{1}{N_{1}-1}\sum_{1 \leq i < j \leq N_{1}}w_{N}^{(1)}(x_{i}-x_{j}) \nonumber\\
& + \sum_{r=1}^{N_{2}}\big(-\Delta_{y_{r}}+V_{2}(y_{r})\big)-\frac{1}{N_{2}-1}\sum_{1 \leq r < s \leq N_{2}}w_{N}^{(2)}(y_{r}-y_{s}) \nonumber\\
& - \frac{1}{N}\sum_{i=1}^{N_{1}}\sum_{r=1}^{N_{2}}w_{N}^{(12)}(x_{i}-y_{r}), \label{hamiltonian}
\end{align}
where $N = N_{1} + N_{2}$ is the total number of particles. The Hamiltonian \eqref{hamiltonian} acts on the Hilbert space
\begin{equation}\label{space:hilbert}
\mathcal{H}_{N} = \mathcal{H}_{N_{1}} \otimes \mathcal{H}_{N_{2}} := L_{\rm sym}^{2}(\mathbb{R}^{2N_{1}},{\rm d}x_{1},\ldots,{\rm d}x_{N_{1}})\otimes L_{\rm sym}^{2}(\mathbb{R}^{2N_{2}},{\rm d}y_{1},\ldots,{\rm d}y_{N_{2}}).
\end{equation}
Here, we denoted by $L_{\rm sym}^{2}(\mathbb{R}^{2N_j})$ the space of square-integrable functions in $(\mathbb R^{2})^{N_j}$ which are symmetric under permutations of the $N_j$ variables. The exchange symmetry is not present among variables of different types. The potentials $V_{1}$ and $V_{2}$, which can be different, stand for trapping potentials for each species, i.e, $V_{i}\geq 0$ and $V_{i}(x) \to +\infty$ as $|x| \to \infty$, for $i\in\{1,2\}$. The two-body interactions among particles of the same species and of different species approach a Dirac delta and are chosen in the form
\begin{equation}\label{assumption:potential-1}
w_{N}^{(\sigma)}(x) = -a_{\sigma}N^{2\beta}w^{(\sigma)}(N^\beta x),\quad  \sigma\in\{1,2,12\},
\end{equation}
for a fixed parameter $0<\beta<1$, and fixed functions $w^{(\sigma)}$  satisfy
\begin{equation}\label{assumption:potential-2}
w^{(\sigma)}(x)=w^{(\sigma)}(-x)\geq 0,\quad  (1+|x|)w^{(\sigma)},\widehat{w^{(\sigma)}}\in L^1(\mathbb R^{2}) \quad \text{and} \quad \int_{\mathbb R^{2}}w^{(\sigma)}(x){\rm d}x=1.
\end{equation}
The parameters $a_{1}>0$ and $a_{2}>0$, which are of order $1$, describe the strength of attractive intra-species inside each component. The inter-species interactions between two components of the system can be attractive ($a_{12}>0$) or repulsive ($a_{12}<0$). The choice of coupling constants proportional to $1/(N_j - 1)$ and $1/N$ ensures that the kinetic and the potential energies are comparable in the limit $N \to \infty$. In this limit regime, we assume that
\begin{equation} \label{number-particle}
\lim_{N\to\infty}\frac{N_{1}}{N}=c_{1}\in (0,1) \quad \text{and} \quad \lim_{N\to\infty}\frac{N_{2}}{N}=c_{2}=1-c_{1}.
\end{equation}
It is not restrictive to assume that the ratios $N_{1}/N$ and $N_{2}/N$ themselves are fixed, and so shall we henceforth.

We are interested in the large-$N$ behavior of the ground state energy per particle of $H_{N}$ in \eqref{hamiltonian} given by
\begin{equation}\label{energy:quantum}
E_{N}^{\rm Q} := N^{-1}\inf\sigma_{\mathcal{H}_{N}}H_{N} = N^{-1}\inf_{\substack{\Psi_{N}\in\mathcal{H}_{N} \\ \|\Psi_{N}\|_{L^{2}}=1}} \langle \Psi_{N}|H_{N}|\Psi_{N} \rangle
\end{equation}
and the corresponding ground states. Note that, for a fixed $N$, the potentials in \eqref{hamiltonian} are bounded from below, by the assumption \eqref{assumption:potential-2}. Therefore, the existence of a many-body ground state follows from the standard direct method in the calculus of variations. Furthermore, the energy per particle of the fully condensed trial function $u_{1}^{\otimes N_{1}}\otimes u_{2}^{\otimes N_{2}}$ is given by the $N$-dependent Hartree energy functional
\begin{align}
\mathcal{E}_{N}^{\rm H} (u_{1},u_{2}) & = c_{1}\int_{\mathbb R^{2}}\left[|\nabla u_{1}(x)|^{2}+V_{1}(x)|u_{1}(x)|^{2}-\frac{a_{1}}{2}|u_{1}(x)|^{2}(w_{N}^{(1)}\star|u_{1}|^{2})(x)\right]{\rm d}x \nonumber \\
& \quad  + c_{2}\int_{\mathbb R^{2}}\left[|\nabla u_{2}(x)|^{2}+V_{2}(x)|u_{2}(x)|^{2}-\frac{a_{2}}{2}|u_{2}(x)|^{2}(w_{N}^{(2)}\star|u_{2}|^{2})(x)\right]{\rm d}x \nonumber \\
& \quad  - c_{1}c_{2}a_{12}\int_{\mathbb R^{2}}|u_{1}(x)|^{2}(w_{N}^{(12)}\star|u_{2}|^{2})(x){\rm d}x, \label{functional:hartree-tow-component}
\end{align}
where $c_{1}$ and $c_{2}$ are the ratios defined in \eqref{number-particle}. It turns out that the leading order of the quantum energy is captured by the effective Hartree energy in the mean-field regime. In fact, the Hartree energy, which is obtained by taking the infimum of the Hartree energy functional in \eqref{functional:hartree-tow-component} under the constrain $(u_{1},u_{2})\in H^{1}(\mathbb R^{2}) \times H^{1}(\mathbb R^{2})$ and $\|u_{1}\|_{L^{2}} = 1 = \|u_{2}\|_{L^{2}}$, is an upper bound to the quantum energy
\begin{equation}\label{energy:hartree-two-component}
E_{N}^{\rm Q} \leq \inf_{\substack{u_{1},u_{2}\in H^{1}(\mathbb R^{2})\\ \|u_{1}\|_{L^{2}} = 1 = \|u_{2}\|_{L^{2}}}}\mathcal{E}_{N}^{\rm H}(u_{1},u_{2}) =: E_{N}^{\rm H}.
\end{equation}
The existence of an $N$-dependent Hartree ground state is again a consequence of assumption \eqref{assumption:potential-2}. When $N\to\infty$, since $w_{N}^{(\sigma)}\approx \delta_0$, for $\sigma\in\{1,2,12\}$, the Hartree energy functional formally boils down to the nonlinear Schr\"{o}dinger (NLS) energy functional
\begin{align}
\mathcal{E}^{\rm NLS}(u_{1},u_{2}) & = c_{1}\int_{\mathbb R^{2}}\left[|\nabla u_{1}(x)|^{2}+V_{1}(x)|u_{1}(x)|^{2}-\frac{a_{1}}{2}|u_{1}(x)|^{4}\right]{\rm d}x \nonumber \\
& \quad  + c_{2}\int_{\mathbb R^{2}}\left[|\nabla u_{2}(x)|^{2}+V_{2}(x)|u_{2}(x)|^{2}-\frac{a_{2}}{2}|u_{2}(x)|^{4}\right]{\rm d}x \nonumber \\
& \quad  - c_{1}c_{2}a_{12}\int_{\mathbb R^{2}}|u_{1}(x)|^{2}|u_{2}(x)|^{2}{\rm d}x.\label{functional:nls-two-component}
\end{align}
We can therefore expect that the NLS energy $E^{\rm NLS}$ and the quantum energy $E_{N}^{\rm Q}$ are close. Here, $E^{\rm NLS}$ is given by
\begin{equation}\label{energy:nls-two-component}
E^{\rm NLS} := \inf_{\substack{u_{1},u_{2}\in H^{1}(\mathbb R^{2})\\ \|u_{1}\|_{L^{2}} = 1 = \|u_{2}\|_{L^{2}}}}\mathcal{E}^{\rm NLS}(u_{1},u_{2}).
\end{equation}
Note that $\|\nabla u_{i}\|_{L^{2}} \geq \|\nabla |u_{i}|\|_{L^{2}}$, for any $u_{i} \in H^{1}(\mathbb R^{2})$ and for $i\in\{1,2\}$ (see \cite[Theorem 7.8]{LieLos-01}). Therefore, $\mathcal{E}^{\rm NLS}(u_{1},u_{2}) \geq \mathcal{E}^{\rm NLS}(|u_{1}|,|u_{2}|)$ and we can restrict the minimization problem \eqref{energy:nls-two-component} to non-negative functions. In particular, a ground state for $E^{\rm NLS}$ in \eqref{energy:nls-two-component}, when it exists, can be chosen to be non-negative. 

From now on, we always assume that $0<a_{1},a_{2}<a_{*}$ where $a_{*}>0$ is the optimal constant of the Gagliardo--Nirenberg inequality
\begin{equation}\label{ineq:GN-0} 
\left(\int_{\mathbb R^{2}}|\nabla u(x)|^{2}{\rm d}x\right)
\left(\int_{\mathbb R^{2}}|u(x)|^{2}{\rm d}x\right) \geq\frac{a_{*}}{2}\int_{\mathbb R^{2}}|u(x)|^{4}{\rm d}x,\quad  \forall u\in H^1(\mathbb R^{2}).
\end{equation}
Equivalently, $a_{*} = \|Q\|_{L^{2}}^{2}$ where $Q$ is the unique (up to translations) symmetric radial decreasing positive solution of the equation
\begin{equation}\label{eq:GN}
-\Delta Q+Q-Q^3 = 0 \quad \text{in } \mathbb{R}^{2}.
\end{equation}
It is well-known (see \cite{Weinstein-83,Zhang-00,Maeda-10,GuoSei-14}) that $Q$ is the unique (up to dilations and translations) optimizer for the
inequality \eqref{ineq:GN-0}. One can easily seen from \eqref{ineq:GN-0} and \eqref{eq:GN} that
$$
\|\nabla Q\|_{L^{2}}^{2} = \frac{1}{2}\|Q\|_{L^{4}}^{4} = \|Q\|_{L^{2}}^{2} = a_{*}.
$$
Actually, it was proved in \cite{GuoZenZho-17-dcds,GuoZenZho-18} that \eqref{energy:nls-two-component} admits a minimizer if $0<a_{1},a_{2}<a_{*}$ and either $0<a_{12}<\sqrt{c_{1}^{-1}c_{2}^{-1}(a_{*}-a_{1})(a_{*}-a_{2})}$ or $a_{12}<0$. Furthermore, $E^{\rm NLS} = -\infty$ if either $a_{1}>a_{*}$ or $a_{2}>a_{*}$ or $a_{12}>2^{-1}c_{1}^{-1}c_{2}^{-1}(a_{*}-c_{1}a_{1}-c_{2}a_{2})$. Therefore, $a_{*}$ is the critical interaction strength for the stability of the focusing two-component NLS functional \eqref{functional:nls-two-component}. The blow-up profile of the NLS energy \eqref{energy:nls-two-component} as well as its ground states was established by Guo, Zeng and Zhou in \cite{GuoZenZho-17-dcds,GuoZenZho-18} (see also Section \ref{sec:blow-up-gp} for a review). The purpose of the present paper is to investigate the blow-up behavior of the full many-body system \eqref{hamiltonian}, which is more difficult. We remark that there might be blow up if one of the intra-species interactions is repulsive or if both of them are and the inter-species interactions are attractive. However, the arguments in the present paper do not allow us to cover those cases.

Our work and method are inspired by Lewin, Nam and Rougerie \cite{LewNamRou-17-proc}. In the mentioned paper, the authors studied the collapse of the many-body system arising in a one-component BEC with an attractive interaction (see \cite{GuoSei-14} for the study in the one-body theory and see \cite{Nguyen-19b} for a related topic). In that one-component setting, we remark that $a_{*}$ is also the critical interaction strength for the existence of a ground state for the focusing one-component NLS functional. In addition, the convergence of the many-body ground states was proved for the \emph{single} one-particle reduced density matrices. The two-component BEC presents more complicated phenomena than a single-component BEC since there are inter-species interactions between two components. In our two-component case, the convergence of the many-body ground states will be formulated using the \emph{double} reduced density matrices. Depending on the inter-species interactions, we will discuss the blow-up behavior of the ground state energy \eqref{energy:quantum} and its ground states as well. The precise statements of our results are represented in the next section. The remainder of the paper is then devoted to their proofs.

\section{Main Results}

\subsection{The Case of Attractive Inter-Species Interactions} \label{subsec:blow-up-I}

In the first part of this paper, we consider the totally attractive system, i.e., $a_{12}>0$. In that case, the existence of a ground state for \eqref{energy:nls-two-component}, under the assumptions that $0<a_{1},a_{2}<a_{*}$ and $0<a_{12}<\sqrt{c_{1}^{-1}c_{2}^{-1}(a_{*}-a_{1})(a_{*}-a_{2})}$, follows the standard direct method in the calculus of variations. Note that the last condition on $a_{12}$ enters to control the inter-species interaction. Furthermore, there are no ground states when either $a_{1} \geq a_{*}$ or $a_{2} \geq a_{*}$ or $a_{12} \geq 2^{-1}c_{1}^{-1}c_{2}^{-1}(a_{*}-c_{1}a_{1}-c_{2}a_{2})$. The existence theory for NLS ground states is then left open in the case 
\begin{equation}\label{gap:existence-nls-ground-state}
0<a_{1},a_{2}<a_{*} \text{ and } \sqrt{c_{1}^{-1}c_{2}^{-1}(a_{*}-a_{1})(a_{*}-a_{2})} \leq a_{12} \leq 2^{-1}c_{1}^{-1}c_{2}^{-1}(a_{*}-c_{1}a_{1}-c_{2}a_{2}).
\end{equation}
Under some additional assumptions on $(a_{1},a_{2},a_{12})$, there may exist a ground state. On the other hand, we always have that
\begin{equation}\label{gap:existence-nls-ground-state-minima}
\inf_{x\in\mathbb R^{2}}V_{1}(x) + \inf_{x\in\mathbb R^{2}}V_{2}(x) \leq E^{\rm NLS} \leq \inf_{x\in\mathbb R^{2}}(V_{1}(x)+V_{2}(x)).
\end{equation}
In particular, let $c_{1}(a_{*}-a_{1}) = c_{2}(a_{*}-a_{2})$ and consider \eqref{energy:nls-two-component} at the threshold point $(a_{1},a_{2},a_{12}) = (a^{*}-c_{2}a_{12},a^{*}-c_{1}a_{12},a_{12})$. Then, there exists at least one ground state if $E^{\rm NLS} < \inf_{x\in\mathbb R^{2}}(V_{1}(x)+V_{2}(x))$. On the other hand, if equality holds in the chain of inequalities in \eqref{gap:existence-nls-ground-state-minima}, then there are no ground states (see \cite[Theorems 1.2 and 1.3]{GuoZenZho-17-dcds} for further discussions). Thus, it is reasonable to study the blow-up behavior of the NLS ground states when $V_{1}$ and $V_{2}$ have a common minimum point since this is precisely when equality in \eqref{gap:existence-nls-ground-state-minima} occurs. In that case, we will fix $0<a_{12}<a_{*}\min\{c_{1}^{-1},c_{2}^{-1}\}$ and we take $(a_{1},a_{2}):=(a_{1,N},a_{2,N})\nearrow (a_{*}-c_{2}a_{12},a_{*}-c_{1}a_{12})$ as $N\to\infty$. The two components of the NLS ground states must blow up at the center of the trap and with the same rate. For the detailed analysis, we refer the reader to the paper \cite{GuoZenZho-17-dcds} (see also Subsection \ref{subsec:proof-I} for a review).

In this paper, we study the collapse of the full many-body system \eqref{hamiltonian}. When the inter-species interactions are attractive, we study its ground states in the regime where the total interaction strength of intra-species and inter-species tends to the critical value $a_{*}$ sufficiently slowly. We prove that the many-body ground states are fully condensed on the (unique) Gagliardo--Nirenberg solution \eqref{eq:GN}. In our two-component setting, the convergence of ground states will be formulated using the \emph{double} $(k,\ell)$-particle reduced density matrices. It is defined, for any $\Psi_{N}\in\mathcal{H}_{N}=\mathcal{H}_{N_{1}}\otimes\mathcal{H}_{N_{2}}$, by a partial trace
$$
\gamma_{\Psi_{N}}^{(k,\ell)}:=\tr_{k+1\to N_{1}}\otimes\tr_{\ell+1\to N_{2}}|\Psi_{N}\rangle\langle\Psi_{N}|,\quad \forall k,\ell \in \mathbb{N}.
$$
Equivalently, $\gamma_{\Psi_{N}}^{(k,\ell)}$ is the trace-class operator on $\mathcal{H}_{k}\otimes\mathcal{H}_\ell$ with kernel
$$
\gamma_{\Psi_{N}}^{(k,\ell)}(X,Y;X',Y') = \int_{\mathbb{R}^{2(N_{1}-k)}}\int_{\mathbb{R}^{2(N_{2}-\ell)}} \Psi_{N}(X,Z;Y,T) \overline{\Psi_{N}(X',Z;Y',T)}{\rm d}T{\rm d}Z
$$
where $X,X'\in (\mathbb{R}^{2})^{k}$ and $Y,Y'\in (\mathbb{R}^{2})^{\ell}$. To make the analysis precise, let us assume that the external potentials $V_{1}$ and $V_{2}$ are of the typical forms
\begin{equation}\label{potentials:external}
V_{i}(x) = |x-z_{i}|^{p_{i}}, \quad i\in\{1,2\},
\end{equation}
where $z_{i} \in \mathbb R^{2}$ and $p_{i} > 0$. This is a generalization of the harmonic trapping potentials commonly used in laboratory experiments. Furthermore, let us introduce the following notation
\begin{equation}\label{GN:normalized}
Q_{0} = (a_{*})^{-\frac{1}{2}}Q
\end{equation}
where $Q$ is the (unique) Gagliardo--Nirenberg solution of \eqref{eq:GN}, normalized to have unit $L^{2}(\mathbb R^2)$-norm.

In the case $a_{12}>0$ and $z_{1} \equiv z_{2}$, our first main result is the following.

\begin{theorem}\label{thm:blow-up-bec-1-2-I} 
	Assume that $0<a_{12}<a_{*}\min\{c_{1}^{-1},c_{2}^{-1}\}$ is fixed and $V_{1}$, $V_{2}$ are defined as in \eqref{potentials:external} with $z_{1} = 0 = z_{2}$. Let $0<\beta<1/2$ and let $(a_{1},a_{2}) := (a_{1,N},a_{2,N})\nearrow (a_{*}-c_{2}a_{12},a_{*}-c_{1}a_{12})$ such that $a_{N} := c_{1}a_{1,N} + c_{2}a_{2,N} + 2c_{1}c_{2}a_{12} = a_{*}-N^{-\gamma}$ with
	$$
	0<\gamma<\min\left\{\frac{p_{0}+2}{p_{0}+3}\beta,\frac{p_{0}+2}{p_{0}}(1-2\beta)\right\},\quad p_{0}=\min\{p_{1},p_{2}\}.
	$$
	Let $\Psi_{N}$ be a ground state for $H_{N}$. Let $\Phi_{N} = \ell_{N}^{-N}\Psi_{N}(\ell_{N}^{-1}\cdot)$ where $\ell_{N}=\Lambda (a_{*}-a_{N})^{-\frac{1}{p_{0}+2}}$ with
	\begin{equation}\label{lambda:1-2-I}
	\Lambda = \left(\frac{p_{0}\nu}{2}\int_{\mathbb R^{2}}|x|^{p_{0}}|Q(x)|^{2}{\rm d}x\right)^{\frac{1}{p_{0}+2}} \quad \text{and} \quad \nu = \lim_{x\to 0} \frac{c_{1}V_{1}(x)+c_{2}V_{2}(x)}{|x|^{p_{0}}}.
	\end{equation}
	Then, up to extraction of a subsequence, we have
	\begin{equation}\label{conv:partial-trace-1-2-I}
	\lim_{N\to\infty} \tr\Big|\gamma_{\Phi_{N}}^{(k,\ell)} - \big|Q_{0}^{\otimes k}\otimes Q_{0}^{\otimes \ell}\big\rangle\big\langle Q_{0}^{\otimes k}\otimes Q_{0}^{\otimes \ell}\big|\Big| = 0, \quad \forall k,\ell\in \mathbb{N},
	\end{equation}	
	where $Q_{0}$ is given by \eqref{GN:normalized}. In addition, we have
	\begin{equation}\label{asymptotic:quantum-energy}
	E_{N}^{\rm Q} = E^{\rm NLS} + o(E^{\rm NLS})_{N\to\infty} = (a_{*}-a_{N})^{\frac{p_{0}}{p_{0}+2}} \left(\frac{p_{0}+2}{p_{0}} \cdot \frac{\Lambda^{2}}{a_{*}}+o(1)_{N\to\infty}\right).
	\end{equation}
\end{theorem}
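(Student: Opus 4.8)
The plan is to establish the sandwich $E_N^{\rm Q} = E^{\rm NLS}(1+o(1))$ as $N\to\infty$ and to read off both assertions from it. Granting this ratio, the second equality in \eqref{asymptotic:quantum-energy} is precisely the blow-up expansion of $E^{\rm NLS}$ at the threshold $(a_*-c_2a_{12},a_*-c_1a_{12})$ recalled from \cite{GuoZenZho-17-dcds,GuoZenZho-18}: testing $\mathcal{E}^{\rm NLS}$ on the pair $u_1=u_2=\ell\,Q_0(\ell\,\cdot)$ built from the common profile $Q_0$, and using $\|\nabla Q_0\|_{L^2}^2=1$, $\|Q_0\|_{L^4}^4=2/a_*$ together with $\int|u_1|^2|u_2|^2=\|u_1\|_{L^4}^4$, the kinetic-minus-interaction part collapses to $\ell^{2}(a_*-a_N)/a_*$ while the trap contributes $\nu\,\ell^{-p_0}a_*^{-1}\int_{\mathbb R^2}|x|^{p_0}|Q(x)|^2\,{\rm d}x$ to leading order; optimizing over the scale $\ell$ reproduces $\ell=\Lambda(a_*-a_N)^{-1/(p_0+2)}$ and the constant $\tfrac{p_0+2}{p_0}\tfrac{\Lambda^2}{a_*}$. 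Thus the crux is the comparison between the many-body and the NLS energies.

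For the upper bound I would insert the fully condensed trial state $u_1^{\otimes N_1}\otimes u_2^{\otimes N_2}$, with $(u_1,u_2)$ the nearly optimal pair above concentrated at scale $\ell_N^{-1}$, into the Hartree functional. By \eqref{energy:hartree-two-component} one has $E_N^{\rm Q}\le E_N^{\rm H}\le\mathcal{E}_N^{\rm H}(u_1,u_2)$, so it remains to compare $\mathcal{E}_N^{\rm H}$ with $\mathcal{E}^{\rm NLS}$. Since $w_N^{(\sigma)}$ is an approximate identity at scale $N^{-\beta}$ whereas the profiles vary at scale $\ell_N^{-1}$, the error made by replacing $w_N^{(\sigma)}\star|u|^2$ with $|u|^2$ is, after exploiting the evenness of $w^{(\sigma)}$ in \eqref{assumption:potential-2}, of second order in the ratio $N^{-\beta}\ell_N=\Lambda N^{-\beta+\gamma/(p_0+2)}$. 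Keeping this error $o(E^{\rm NLS})$ is one of the two roles played by the upper restrictions on $\gamma$ in terms of $\beta$ and $p_0$.

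The lower bound is the heart of the matter, and I would follow the strategy of \cite{LewNamRou-17-proc} adapted to two species. First, an a priori bound: from $E_N^{\rm Q}\le E_N^{\rm H}=O\big((a_*-a_N)^{p_0/(p_0+2)}\big)$ and the two-component Gagliardo--Nirenberg inequality one controls the per-particle kinetic energy, which is of the large order $\ell_N^2$, and this justifies passing to the rescaled state $\Phi_N=\ell_N^{-N}\Psi_N(\ell_N^{-1}\cdot)$ of unit kinetic scale. Next, I would feed the rescaled one- and two-body density matrices into a quantitative quantum de Finetti theorem, in the form used in \cite{LewNamRou-17-proc}; because the two species carry independent exchange symmetry, this yields a single Borel probability measure $\mu_N$ on pairs $(u_1,u_2)$ of unit-$L^2$ functions with $\gamma_{\Phi_N}^{(k,\ell)}\approx\int|u_1^{\otimes k}\otimes u_2^{\otimes\ell}\rangle\langle u_1^{\otimes k}\otimes u_2^{\otimes\ell}|\,{\rm d}\mu_N$, and the interaction terms become $\int\mathcal{E}^{\rm NLS}(u_1,u_2)\,{\rm d}\mu_N$ up to a de Finetti error whose control accounts for the remaining restriction on $\gamma$. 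The decisive algebraic input is that, after bounding the focusing terms by Gagliardo--Nirenberg, the kinetic-minus-interaction density is the quadratic form
\begin{equation*}
c_1\Big(1-\tfrac{a_1}{a_*}\Big)\|\nabla u_1\|_{L^2}^2+c_2\Big(1-\tfrac{a_2}{a_*}\Big)\|\nabla u_2\|_{L^2}^2-\tfrac{2c_1c_2a_{12}}{a_*}\|\nabla u_1\|_{L^2}\|\nabla u_2\|_{L^2}=\tfrac{c_1c_2a_{12}}{a_*}\big(\|\nabla u_1\|_{L^2}-\|\nabla u_2\|_{L^2}\big)^2
\end{equation*}
at the threshold, and which is positive definite just below it; the square forces the two components to share a common kinetic scale, and its diagonal value at $\|\nabla u_1\|_{L^2}=\|\nabla u_2\|_{L^2}=:\|\nabla u\|_{L^2}$ equals $\tfrac{a_*-a_N}{a_*}\|\nabla u\|_{L^2}^2$, i.e.\ the effective single-species form. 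Combined with the trap, which under $z_1=z_2=0$ penalizes mass away from the origin and forces concentration at scale $\ell_N^{-1}$, this reproduces the NLS blow-up lower bound fibrewise over $\mu_N$ and yields $E_N^{\rm Q}\ge E^{\rm NLS}(1-o(1))$.

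Finally, the convergence \eqref{conv:partial-trace-1-2-I} follows from saturation: the matching bounds force $\mu_N$ to concentrate, along a subsequence, on the minimizing set of the limiting effective problem, which by uniqueness of the Gagliardo--Nirenberg optimizer \cite{Weinstein-83,GuoSei-14} and of the two-component blow-up profile \cite{GuoZenZho-17-dcds} is the single point $(Q_0,Q_0)$, modulo an irrelevant phase; the de Finetti approximation of $\gamma_{\Phi_N}^{(k,\ell)}$ then collapses in trace norm to $|Q_0^{\otimes k}\otimes Q_0^{\otimes\ell}\rangle\langle Q_0^{\otimes k}\otimes Q_0^{\otimes\ell}|$. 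I expect the main obstacle to be the lower bound in the regime $0<\beta<1/2$: one must simultaneously zoom to the blow-up scale $\ell_N^{-1}$, keep the rescaled interaction mean-field-like enough for the de Finetti estimate to apply, and close both competing error budgets, which is exactly why both the $\tfrac{p_0+2}{p_0+3}\beta$ and the $\tfrac{p_0+2}{p_0}(1-2\beta)$ restrictions on $\gamma$ are imposed.
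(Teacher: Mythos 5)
Your overall skeleton (energy sandwich $E_N^{\rm Q}=E^{\rm NLS}(1+o(1))$, blow-up asymptotics of $E^{\rm NLS}$, then uniqueness of the Gagliardo--Nirenberg profile to identify the limit) matches the paper's in outline; the upper bound and the completion-of-the-square identity at the threshold are correct and appear, in equivalent form, in the paper's treatment of the (modified) Hartree energy (Lemma \ref{lem:modified-hartree-energy-1-2-I}). The genuine gap is the many-body lower bound, which is the heart of the theorem. You propose a quantitative quantum de Finetti argument ``in the form used in \cite{LewNamRou-17-proc}'', but that reference does not prove its energy bound by de Finetti, and neither does this paper: the entire point of Section \ref{sec:quantum-nls-energy} (Theorem \ref{thm:quantum-hartree-energy-general}) is to \emph{avoid} de Finetti, because quantitative de Finetti errors involve $\|w_N^{(\sigma)}\|_{L^\infty}\sim N^{2\beta}$ (or the dimension of a spectral truncation) and cannot be made $o\big((a_*-a_N)^{p_0/(p_0+2)}\big)$ throughout the regime $0<\beta<1/2$; moreover the two-component de Finetti theorem of \cite{MicNamOlg-19} is qualitative (the paper remarks explicitly that it comes ``without convergence rate''), so a quantitative mixture version would itself have to be proved. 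The paper instead adapts the L\'evy--Leblond doubling trick of \cite[Section 3]{Lewin-ICMP}: split $2N$ particles into two groups, treat one group as classical, and combine Onsager's lemma (Lemma \ref{lem:two-one}) with the Hoffmann-Ostenhof inequality to get the operator bound $E_N^{\rm Q}\geq E_N^{\rm H}-CN^{2\beta-1}$ of \eqref{lower:quantum-hartree-energy}. It is precisely this rate $N^{2\beta-1}$, weighed against $E^{\rm NLS}\sim N^{-\gamma p_0/(p_0+2)}$, that produces the condition $\gamma<\frac{p_0+2}{p_0}(1-2\beta)$ and lets $\beta$ reach $1/2$. You correctly flag this as ``the main obstacle'', but the proposal supplies no mechanism to overcome it; as written, the de Finetti route would confine you to a much smaller range of $\beta$ (compare the discussion of \cite{LewNamRou-16,NamRou-20} in the one-component case).

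A second, smaller gap concerns the convergence of density matrices. The paper does not obtain \eqref{conv:partial-trace-1-2-I} by concentration of a de Finetti measure but by a Feynman--Hellmann argument: perturbing the Hamiltonian by $\eta A$ in one species as in \eqref{hamiltonian:perturbed-1}, comparing $E_\eta^{\rm Q}$ with the perturbed modified Hartree energy via \eqref{lower:quantum-modified-hartree} and \eqref{upper:quantum-modified-hartree}, and sandwiching $\tr(A\gamma_{\Psi_N}^{(1,0)})$ between expectations in ground states of the perturbed problems, with $\eta=\eta_N$ chosen as in \eqref{approximate-ground-state-I}. This forces one to know the blow-up limit of \emph{approximate} (quasi-)minimizers of the modified Hartree functional, which is exactly Theorem \ref{thm:blowup-modified-hartree-approximate-1-2-I}, and not merely the uniqueness of exact optimizers from \cite{GuoZenZho-17-dcds,Weinstein-83} that you invoke. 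Note that your own concentration step has the same need: the pairs charged by $\mu_N$ are only approximate minimizers, with energy within $o(E^{\rm NLS})$ of the minimum, so uniqueness of exact minimizers alone does not force $\mu_N$ to collapse at $(Q_0,Q_0)$; one must prove compactness of approximate minimizing sequences at the collapsing threshold, where the inter-species term is used to force $|\tilde u_{1,N}|^2-|\tilde u_{2,N}|^2\to 0$ (this is estimate \eqref{approx-nls-ground-state-1-2-I-3} in the paper). That extension is a substantive part of the proof and is missing from the proposal.
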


\begin{remark}
	The condition $0<\gamma < \frac{p_{0}+2}{p_{0}}(1-2\beta)$ implies that we consider mean-field interactions. This corresponds to a high-density regime where the particles meet frequently but interact weakly since the typical interaction length is larger than the average distance  between the particles. On the other hand, the condition $\gamma < \frac{p_{0}+2}{p_{0}+3}\beta$ ensures that the Hartree and NLS energies are close in the limit $N \to\infty$.
\end{remark}

There exists another setting for which it is reasonable to study the blow-up behavior of ground states in the case of attractive inter-species interactions: Fix $0<a_{1},a_{2}<a_{*}$ and take $a_{12}:=\alpha_{N} \nearrow \alpha_{*}$ as $N\to\infty$, for some critical value $0 < \alpha_{*} < 2^{-1}c_{1}^{-1}c_{2}^{-1}(a_{*}-c_{1}a_{1}-c_{2}a_{2})$. Since there is a gap in the existence theory for NLS ground states, we will consider only the case $c_{1}(a_{*}-a_{1}) = c_{2}(a_{*}-a_{2})$. Hence, there will be no more discussion on \eqref{gap:existence-nls-ground-state}. We have the following.

\begin{theorem}\label{thm:blow-up-bec-12-I} 
	Assume that $0<a_{1},a_{2}<a_{*}$ are fixed such that $c_{1}(a_{*}-a_{1}) = c_{1}c_{2}\alpha_{*} = c_{2}(a_{*}-a_{2})$ and $V_{1}$, $V_{2}$ are defined as in \eqref{potentials:external} with $z_{1} = 0 = z_{2}$. Let $0<\beta<1/2$ and let $0 < a_{12} := \alpha_{N} = \alpha_{*}-N^{-\gamma}$ with
	$$
	0<\gamma<\min\left\{\frac{p_{0}+2}{p_{0}+3}\beta,\frac{p_{0}+2}{p_{0}}(1-2\beta)\right\}, \quad p_{0}=\min\{p_{1},p_{2}\}.
	$$
	Let $\Psi_{N}$ be a ground state for $H_{N}$. Let $\Phi_{N} = \ell_{N}^{-N}\Psi_{N}(\ell_{N}^{-1}\cdot)$ where $\ell_{N}=\Theta (\alpha_{*}-\alpha_{N})^{-\frac{1}{p_{0}+2}}$ with
	\begin{equation}\label{lambda:12-I}
	\Theta = \left(\frac{p_{0}\nu}{4c_{1}c_{2}}\int_{\mathbb R^{2}}|x|^{p_{0}}|Q(x)|^{2}{\rm d}x\right)^{\frac{1}{p_{0}+2}} \quad \text{and} \quad \nu = \lim_{x\to 0} \frac{c_{1}V_{1}(x)+c_{2}V_{2}(x)}{|x|^{p_{0}}}.
	\end{equation}
	Then, up to extraction of a subsequence, we have
	\begin{equation}\label{conv:partial-trace-12-I}
	\lim_{N\to\infty} \tr\Big|\gamma_{\Phi_{N}}^{(k,\ell)} - \big|Q_{0}^{\otimes k}\otimes Q_{0}^{\otimes \ell}\big\rangle\big\langle Q_{0}^{\otimes k}\otimes Q_{0}^{\otimes \ell}\big|\Big| = 0, \quad \forall k,\ell\in \mathbb{N},
	\end{equation}	
	where $Q_{0}$ is given by \eqref{GN:normalized}. In addition, we have
	$$
	E_{N}^{\rm Q} = E^{\rm NLS} + o(E^{\rm NLS})_{N\to\infty} = (\alpha_{*}-\alpha_{N})^{\frac{p_{0}}{p_{0}+2}} \left(2c_{1}c_{2}\frac{p_{0}+2}{p_{0}} \cdot \frac{\Theta^{2}}{a_{*}}+o(1)_{N\to\infty}\right).
	$$
\end{theorem}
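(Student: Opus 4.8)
The plan is to follow the scheme already used for Theorem~\ref{thm:blow-up-bec-1-2-I}, namely to sandwich $E_N^{\rm Q}$ between upper and lower bounds with a common leading order equal to the blow-up rate of $E^{\rm NLS}$, and then to convert this energy matching into strong condensation of the rescaled ground states. The computation that governs everything is a one-parameter NLS trial analysis: taking $u_1=u_2=\tau Q_0(\tau\,\cdot)$ with $Q_0$ as in \eqref{GN:normalized}, the hypothesis $c_1(a_*-a_1)=c_1c_2\alpha_*=c_2(a_*-a_2)$ makes the two quartic terms and the inter-species term of \eqref{functional:nls-two-component} collapse into $\tfrac{2c_1c_2}{a_*}(\alpha_*-\alpha_N)\tau^2$, while the trap \eqref{potentials:external} contributes $\nu\tau^{-p_0}\int_{\mathbb R^2}|y|^{p_0}|Q_0(y)|^2\,{\rm d}y$ to leading order. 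Minimising $A\tau^2+B\tau^{-p_0}$ gives the optimal scale $\tau_*=\Theta(\alpha_*-\alpha_N)^{-1/(p_0+2)}=\ell_N$ and exactly the asserted value $2c_1c_2\tfrac{p_0+2}{p_0}\tfrac{\Theta^2}{a_*}(\alpha_*-\alpha_N)^{p_0/(p_0+2)}$, with $\Theta,\nu$ as in \eqref{lambda:12-I}. Thus $\ell_N$ is precisely the NLS blow-up length, and the task reduces to showing that $E_N^{\rm Q}$ inherits this asymptotic and that $\Phi_N$ concentrates on the pair $(Q_0,Q_0)$.

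For the upper bound I would use the fully condensed trial state $u_1^{\otimes N_1}\otimes u_2^{\otimes N_2}$ with the scaled Gagliardo--Nirenberg profiles above (cut off so as to sit in $H^1$), which by \eqref{energy:hartree-two-component} yields $E_N^{\rm Q}\le E_N^{\rm H}\le\mathcal E_N^{\rm H}(u_1,u_2)$. The remaining point is to replace the smeared kernels $w_N^{(\sigma)}$ by $\delta_0$ in \eqref{functional:hartree-tow-component} at the blow-up scale: since $w^{(\sigma)}$ is concentrated at length $N^{-\beta}$ whereas $u_i$ varies at length $\ell_N^{-1}\sim N^{-\gamma/(p_0+2)}$, a Taylor expansion of $|u_i|^2$ against $w^{(\sigma)}$ produces an error governed by the ratio of these two scales, and the condition $\gamma<\tfrac{p_0+2}{p_0+3}\beta$ is exactly what forces $\mathcal E_N^{\rm H}(u_1,u_2)=\mathcal E^{\rm NLS}(u_1,u_2)\,(1+o(1))$. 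This recovers the NLS blow-up rate of the first paragraph as an upper bound for $E_N^{\rm Q}$.

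The core of the argument is the matching lower bound, which I would carry out after passing to $\Phi_N$ and the rescaled Hamiltonian. First one needs a priori control: the energy upper bound, the coercivity of $V_1,V_2$ and the mean-field condition $\gamma<\tfrac{p_0+2}{p_0}(1-2\beta)$ should give that the rescaled one-particle kinetic energies are $O(1)$ and that no mass escapes, so that the collapse indeed occurs at the common centre $z_1=z_2=0$ at the rate $\ell_N$ (here criticality of $E^{\rm NLS}$ together with the confinement rules out splitting or mass loss in the concentration-compactness analysis). Next I would invoke the quantum de~Finetti theorem in its two-component form, applied separately in the two symmetric factors of \eqref{space:hilbert}: along a subsequence $\gamma_{\Phi_N}^{(k,\ell)}\to\int|u_1^{\otimes k}\otimes u_2^{\otimes\ell}\rangle\langle u_1^{\otimes k}\otimes u_2^{\otimes\ell}|\,{\rm d}\mu(u_1,u_2)$ for a probability measure $\mu$ on the product of unit spheres. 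Using $\beta<1/2$ to convert the mean-field interactions into the local quartic and cross terms, together with weak lower semicontinuity of the gradient and trapping contributions, should yield $\liminf(\text{rescaled energy})\ge\int\mathcal F(u_1,u_2)\,{\rm d}\mu$ for the limiting critical functional $\mathcal F$.

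Finally, the completion of square underlying the NLS lower bound identifies the minimiser. Testing a fraction of each kinetic energy against \eqref{ineq:GN-0} and invoking the threshold relation turns the quartic part into $c_1c_2\big[\tfrac{\alpha_*}{2}(X+Y)-a_{12}\sqrt{XY}\big]\ge0$ in $X=\int|u_1|^4$, $Y=\int|u_2|^4$, with equality forcing $X=Y$ and each $u_i$ to optimise \eqref{ineq:GN-0}, i.e.\ $(u_1,u_2)=(Q_0,Q_0)$ after the normalisations of \eqref{eq:GN}--\eqref{GN:normalized} and the uniqueness of $Q$. Hence $\mu=\delta_{(Q_0,Q_0)}$, giving the weak convergence of the reduced densities; the upgrade to the trace-norm statement \eqref{conv:partial-trace-12-I} is then standard, since weak-$*$ convergence of density matrices to a rank-one projection together with convergence of traces implies trace-norm convergence, and the energy asymptotic follows from the two matching bounds. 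I expect the main obstacle to be the lower bound: controlling the new inter-species term $-c_1c_2a_{12}\int|u_1|^2|u_2|^2$ through the de~Finetti passage while simultaneously performing the blow-up rescaling, that is, showing that this coupling survives the mean-field limit with the correct sign and magnitude at criticality. This is the genuinely two-component difficulty absent from the single-component analysis of Lewin--Nam--Rougerie.
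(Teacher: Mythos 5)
Your first two paragraphs (the trial-function computation identifying $\ell_N$ and the constant $2c_1c_2\tfrac{p_0+2}{p_0}\tfrac{\Theta^2}{a_*}$, and the upper bound via a condensed trial state with the Hartree-to-NLS Taylor comparison under $\gamma<\tfrac{p_0+2}{p_0+3}\beta$) match the paper. The gap is in your lower bound and condensation step, where you invoke a two-component quantum de Finetti theorem. Two things break. First, precision: here $E^{\rm NLS}\sim(\alpha_*-\alpha_N)^{p_0/(p_0+2)}=N^{-\gamma p_0/(p_0+2)}\to 0$, so every error in passing from the many-body energy to the one-body functional must be $o(E^{\rm NLS})$ --- this is exactly what the hypotheses linking $\gamma$ to $\beta$ encode. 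The compactness (soft) de Finetti argument yields no rate; the paper explicitly notes that the de Finetti route of \cite{MicNamOlg-19} gives the Hartree limit \emph{without convergence rate}, and making a quantitative de Finetti work with $N$-dependent, delta-approaching interactions is essentially the harder $\beta\geq 1/2$ problem the paper avoids. Instead, the paper proves $E_N^{\rm Q}\geq E_N^{\rm H}-CN^{2\beta-1}$ by Onsager's lemma and a L\'evy--Leblond doubling argument (Theorem \ref{thm:quantum-hartree-energy-general}), then interpolates through the modified Hartree functional \eqref{functional:modified-hartree} with explicit errors of order $N^{-\beta}\ell_N^3=N^{\frac{3\gamma}{p_0+2}-\beta}$. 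Second, identification of the limit: after rescaling by $\ell_N$, the kinetic and quartic terms are individually of order $\ell_N^{2}$ and cancel to leading order, while the trap only enters at the relative order $\ell_N^{-p_0-2}$. Hence the limiting functional your measure $\mu$ would see is the translation-invariant \emph{critical} functional, whose zero set is the non-compact family of all dilations and translations of $Q$ (and on which mass can also leak to infinity, the trap being asymptotically negligible at leading order). So $\mu$ cannot be pinned to $\delta_{(Q_0,Q_0)}$, and in particular $b=1$, $x_0=0$ cannot be recovered, without a second-order analysis at the trap scale --- precisely where a soft compactness argument has nothing to say.

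The paper resolves both difficulties differently: it proves blow-up for \emph{approximate} (quasi-)ground states of the modified Hartree problem (Theorem \ref{thm:blowup-modified-hartree-approximate-1-2-I}, with the analogue \eqref{behavior:approximate-modified-hartree-ground state-12-I} for the present theorem), where the dilation and center are fixed by the strict minimization $\inf_{\lambda>0}\big(\lambda^2+\nu\lambda^{-p_0}\int|x|^{p_0}Q^2\big)$, and then transfers condensation to the many-body ground state by a Feynman--Hellmann perturbation argument: perturb $H_N$ by $\eta A$ separately in each component (\eqref{hamiltonian:perturbed-1}, \eqref{hamiltonian:perturbed-2}), bound the perturbed quantum energy below by the perturbed modified Hartree energy (this uses the spectral decomposition of density matrices and the Cauchy--Schwarz bound \eqref{ineq:interaction}, not de Finetti), squeeze $\tr\big(A\gamma_{\Psi_N}^{(1,0)}\big)$ between expectations in quasi-ground states, and choose $\eta=\eta_N$ with $\eta^{-1}\big(N^{\frac{3\gamma}{p_0+2}-\beta}+N^{2\beta-1}\big)\to 0$. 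This perturbation step, which your proposal omits entirely, is what converts the quantitative energy estimates into the trace-norm convergence \eqref{conv:partial-trace-12-I}; without it (or a genuinely quantitative two-component de Finetti theorem, which does not currently exist in this regime), your scheme does not close.
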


\begin{remark}
	\begin{itemize}
		\item The case $c_{1}(a_{*}-a_{1}) = c_{2}(a_{*}-a_{2})$ is also included in Theorem \ref{thm:blow-up-bec-1-2-I}. Note that, with this assumption, Theorem 1.1 in \cite{GuoZenZho-17-dcds} gave a complete classification of the existence and non-existence of ground states for \eqref{energy:nls-two-component}.
		\item To avoid the assumption $c_{1}(a_{*}-a_{1}) = c_{2}(a_{*}-a_{2})$, one might consider a more evolved NLS model, where the constraint condition in \eqref{energy:nls-two-component} is replaced by $\|u_{1}\|_{L^{2}}^{2} + \|u_{2}\|_{L^{2}}^{2} = 1$ (see, e.g., \cite{BaoCai-11,GuoLiWeiZen-19a,GuoLiWeiZen-19b}). However, the many-body theory behind this is still an open problem. We hope to come back to this issue in the future.
	\end{itemize}
\end{remark}

\subsection{The Case of Repulsive Inter-Species Interactions} \label{subsec:blow-up-II}
In the second part, we consider the system with attractive intra-species interactions and repulsive inter-species interactions, i.e., $a_{12}<0$. In that case, the existence of ground states for \eqref{energy:nls-two-component}, under the assumptions that $0<a_{1},a_{2}<a_{*}$ and $a_{12}<0$ is fixed, follows the standard direct method in the calculus of variations. Furthermore, when either $a_{1} \geq a_{*}$ or $a_{2} \geq a_{*}$ or $a_{1} = a_{*} = a_{2}$, there are no ground states for \eqref{energy:nls-two-component}. The limit behavior of the NLS energy as well as its ground states, when $a_{12}<0$ is fixed and $(a_{1},a_{2}):=(a_{1,N},a_{2,N}) \nearrow (a_{*},a_{*})$ as $N\to\infty$, has been analyzed in \cite{GuoZenZho-18} (see Subsection \ref{subsec:proof-II} for a review). This is somehow similar to the one-component setting where the minimization problems read
\begin{equation}\label{energy:nls-one-component}
E_{i}^{\rm NLS} = \inf_{\substack{u_{i}\in H^{1}(\mathbb R^2)\\ \|u_{i}\|_{L^{2}}=1}}\mathcal{E}_{i}^{\rm NLS}(u_{i}),\quad i\in\{1,2\}.
\end{equation}
Here, the NLS energy functionals are given by
\begin{equation}\label{functional:nls-one-component}
\mathcal{E}_{i}^{\rm NLS}(u_{i}) = \int_{\mathbb R^{2}}\left[|\nabla u_{i}(x)|^{2}+V_{i}(x)|u_{i}(x)|^{2}-\frac{a_{i}}{2}|u_{i}(x)|^{4}\right]{\rm d}x,\quad i\in\{1,2\}.
\end{equation}
For the reader's convenience, let us briefly recall the known results concerning the blow-up behavior of $E_{i}^{\rm NLS}$ in \eqref{energy:nls-one-component} as well as its ground states. From \cite{GuoSei-14} we have, for $i\in\{1,2\}$,
\begin{equation}\label{lambda:II}
\lim_{a_{i}\nearrow a_{*}}\frac{E_{i}^{\rm NLS}}{(a_{*}-a_{i})^{\frac{p_{i}}{p_{i}+2}}} = \frac{p_{i}+2}{p_{i}} \cdot \frac{\Lambda_{i}^{2}}{a_{*}} \quad \text{where} \quad \Lambda_{i} = \left(\frac{p_{i}}{2}\int_{\mathbb R^{2}}|x|^{p_{i}}|Q(x)|^{2}{\rm d}x\right)^{\frac{1}{p_{i}+2}}.
\end{equation}
Here, we made use of potentials $V_{i}$, for $i\in\{1,2\}$, which are given by \eqref{potentials:external}.  In addition, assume that $u_{i}$ is a positive ground state for $E_{i}^{\rm NLS}$ in \eqref{energy:nls-one-component} for each $0<a_{i}<a_{*}$. Then, up to extraction of a subsequence, we have
$$
\lim_{a_{i} \nearrow a_{*}} \Lambda_{i}^{-1}(a_{*}-a_{i})^{\frac{1}{p_{i}+2}} u_{i}(\Lambda_{i}^{-1}(a_{*}-a_{i})^{\frac{1}{p_{i}+2}}x+z_{i}) = Q_{0}(x).
$$
strongly in $H^{1}(\mathbb R^2)$, where $z_{i}$ is the minima of $V_{i}$ and $Q_{0}$ is given by \eqref{GN:normalized}.

In this paper, we study the collapse of the full many-body system \eqref{hamiltonian}. When the inter-species interactions are repulsive, we study its ground states in the regime where the interaction strength of intra-species among particles in each component tends to the critical value $a_{*}$ sufficiently slowly. We prove that the many-body ground states are fully condensed on the (unique) Gagliardo--Nirenberg solution \eqref{eq:GN}. Our last main result is the following.

\begin{theorem}\label{thm:blow-up-bec-II} 
	Assume that $a_{12}<0$ is fixed and $V_{1}$, $V_{2}$ are defined as in \eqref{potentials:external} with $z_{1}\ne z_{2}$. Let $0<\beta<1/2$ and let $a_{i} := a_{i,N}=a_{*}-N^{-\gamma_{i}}$, for $i\in\{1,2\}$, with
	$$
	\frac{\gamma_{1}}{\gamma_{2}} = \frac{p_{1}+2}{p_{2}+2} \quad \text{and} \quad 0<\gamma_{i}<\min\left\{\frac{p_{i}+2}{p_{i}+3}\beta,\frac{p_{i}+2}{p_{i}}(1-2\beta)\right\}.
	$$
	Let $\Psi_{N}$ be a ground state for $H_{N}$. Let
	$$
	\Phi_{N}(x_{1},\ldots,x_{N_{1}};y_{1},\ldots,y_{N_{2}}) = \frac{\Psi_{N}\left(\cfrac{x_{1}}{\ell_{1,N}}+z_{1},\ldots,\cfrac{x_{N_{1}}}{\ell_{1,N}}+z_{1};\cfrac{y_{1}}{\ell_{2,N}}+z_{2},\ldots,\cfrac{y_{N_{2}}}{\ell_{2,N}}+z_{2}\right)}{\ell_{1,N}^{N_{1}}\ell_{2,N}^{N_{2}}}
	$$
	where $\ell_{i,N}=\Lambda_{i} (a_{*}-a_{i,N})^{-\frac{1}{p_{i}+2}}$, for $i\in\{1,2\}$, with $\Lambda_{i}$ are given by \eqref{lambda:II}.
	Then, up to extraction of a subsequence, we have
	\begin{equation}\label{conv:partial-trace-II}
	\lim_{N\to\infty} \tr\Big|\gamma_{\Phi_{N}}^{(k,\ell)} - \big|Q_{0}^{\otimes k}\otimes Q_{0}^{\otimes \ell}\big\rangle\big\langle Q_{0}^{\otimes k}\otimes Q_{0}^{\otimes \ell}\big|\Big| = 0, \quad \forall k,\ell\in \mathbb{N},
	\end{equation}	
	where $Q_{0}$ is given by \eqref{GN:normalized}. In addition, with $E_{1}^{\rm NLS}$ and $E_{2}^{\rm NLS}$ defined in \eqref{energy:nls-one-component}, we have
	$$
		E_{N}^{\rm Q} = \sum_{i=1}^{2}c_{i}E_{i}^{\rm NLS} + o(E_{i}^{\rm NLS}) = \sum_{i=1}^{2}c_{i}(a_{*}-a_{i,N})^{\frac{p_{i}}{p_{i}+2}}\left(\frac{p_{i}+2}{p_{i}} \cdot \frac{\Lambda_{i}^{2}}{a_{*}}+o(1)_{N\to\infty}\right).
		$$
\end{theorem}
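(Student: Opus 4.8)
The plan is to show that, for repulsive inter-species coupling $a_{12}<0$ and distinct trap centers $z_1\neq z_2$, the coupled system decouples in the limit $N\to\infty$ into two independent one-component blow-up problems of the type analyzed in \cite{LewNamRou-17-proc,GuoSei-14}. The balance condition $\gamma_1/\gamma_2=(p_1+2)/(p_2+2)$ makes the two blow-up rates commensurate, $\ell_{i,N}=\Lambda_i(a_*-a_{i,N})^{-1/(p_i+2)}\sim\Lambda_i N^{\gamma_i/(p_i+2)}$ with $\gamma_1/(p_1+2)=\gamma_2/(p_2+2)$, so that both clouds reach criticality at the same power of $N$; the two-sided bounds on each $\gamma_i$ place us simultaneously in the mean-field regime and in the regime where the $N$-dependent Hartree energy and its NLS limit coincide to leading order.

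For the upper bound I would use the variational inequality \eqref{energy:hartree-two-component}, testing the Hartree functional \eqref{functional:hartree-tow-component} against the factorized state $v_1^{\otimes N_1}\otimes v_2^{\otimes N_2}$, where $v_i(x)=\ell_{i,N}Q_0\big(\ell_{i,N}(x-z_i)\big)$ is the rescaled Gagliardo--Nirenberg profile centered at the $i$-th minimum of $V_i$. The two intra-species integrals reproduce $c_iE_i^{\rm NLS}(1+o(1))$ through the one-component blow-up asymptotics \eqref{lambda:II}, while the inter-species integral $-c_1c_2a_{12}\int_{\mathbb R^2}|v_1|^2\,(w_N^{(12)}\star|v_2|^2)$ is super-polynomially small: the two profiles concentrate on scales $\ell_{i,N}^{-1}\to0$ around the fixed, distinct points $z_1,z_2$, the interaction range $N^{-\beta}\to0$, and $Q$ decays exponentially, so the overlap vanishes faster than any power of $a_*-a_{i,N}$. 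This yields $E_N^{\rm Q}\le c_1E_1^{\rm NLS}+c_2E_2^{\rm NLS}+o(E_i^{\rm NLS})$.

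The matching lower bound is where the two-component structure genuinely helps. Since $a_{12}<0$ the inter-species interaction in \eqref{hamiltonian} is repulsive, hence a nonnegative operator, so that $H_N\ge H_N^{(1)}+H_N^{(2)}$, where $H_N^{(i)}$ is the one-component Hamiltonian acting on the $i$-th factor of $\mathcal H_N$. Taking the expectation in a joint ground state $\Psi_N$ and minimizing each factor separately gives $N E_N^{\rm Q}\ge\inf\sigma(H_N^{(1)})+\inf\sigma(H_N^{(2)})$, and the one-component many-body blow-up analysis, applied in each factor under the stated constraints on $\gamma_i$, yields $\inf\sigma(H_N^{(i)})=N_iE_i^{\rm NLS}(1+o(1))$ with $E_i^{\rm NLS}$ as in \eqref{energy:nls-one-component}. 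Comparison with the upper bound then forces the energy identity of the theorem and, in addition, forces both intra-species energy deficits and the inter-species expectation $\langle\Psi_N|(H_N-H_N^{(1)}-H_N^{(2)})|\Psi_N\rangle$ to be $o(E_i^{\rm NLS})$; in particular each reduced state $\tr_{\mathcal H_{N_{3-i}}}|\Psi_N\rangle\langle\Psi_N|$ is an approximate ground state of $H_N^{(i)}$ with a controlled deficit.

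Condensation then follows by rescaling to $\Phi_N$ as in the statement, which turns the vanishing energy into an $O(1)$ variational problem. From the energy bounds I would extract a priori $H^1$-type bounds on the reduced density matrices $\gamma_{\Phi_N}^{(k,\ell)}$, apply the quantitative quantum de Finetti theorem in its form for the double reduced densities to obtain, along a subsequence, a limiting Borel measure on pairs of one-body profiles, and use the decoupled lower bound to show that this measure is supported on the unique minimizer $(Q_0,Q_0)$ --- the inter-species term having dropped out because $x_i-y_r\to z_1-z_2\neq0$ after rescaling. Since the limit is a product Dirac mass, the double reduced densities converge to $|Q_0^{\otimes k}\otimes Q_0^{\otimes\ell}\rangle\langle Q_0^{\otimes k}\otimes Q_0^{\otimes\ell}|$ in trace norm, which is \eqref{conv:partial-trace-II}. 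The main obstacle is precisely this lower bound in the doubly singular (mean-field plus critical) regime: the de Finetti error together with the Hartree-to-NLS discrepancy must be kept below the vanishing energy scale $(a_*-a_{i,N})^{p_i/(p_i+2)}$, and it is exactly this requirement that the bounds $\gamma_i<\frac{p_i+2}{p_i}(1-2\beta)$ and $\gamma_i<\frac{p_i+2}{p_i+3}\beta$ are calibrated to meet.
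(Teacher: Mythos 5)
Your treatment of the energy asymptotics is correct and, in fact, a clean variant of what the paper does. The paper establishes \eqref{asymptotic:quantum-energy-II} by first passing from $E_N^{\rm Q}$ to the Hartree energy via the Onsager/L\'evy--Leblond argument of Theorem \ref{thm:quantum-hartree-energy-general} (error $CN^{2\beta-1}$), and only then decoupling the two species at the level of the Hartree functional, using \eqref{ineq:interaction} and the sign $a_{12}<0$ to drop the cross term, together with the Guo--Zeng--Zhou estimate \eqref{behavior:nls-energy-II} (exponential decay, $z_1\neq z_2$) for the upper bound. You instead decouple at the operator level, $H_N\geq H_N^{(1)}+H_N^{(2)}$, and quote the one-component collapse analysis of \cite{LewNamRou-17-proc} in each factor; since that one-component analysis rests on the same L\'evy--Leblond bound, the two routes are equivalent in substance, and yours has the advantage of never needing the two-component version of Theorem \ref{thm:quantum-hartree-energy-general}. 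Your observation that the energy matching forces each reduced state $\tr_{\mathcal H_{N_{3-i}}}|\Psi_N\rangle\langle\Psi_N|$ to be an approximate ground state of $H_N^{(i)}$ is also correct and useful.

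The gap is in the condensation step. You invoke ``the quantitative quantum de Finetti theorem'' as an off-the-shelf tool and assert that its error, together with the Hartree-to-NLS discrepancy, stays below the vanishing scale $(a_*-a_{i,N})^{p_i/(p_i+2)}$ under the stated hypotheses on $\gamma_i$. No such ready-made statement exists in this doubly singular regime: for interactions scaled with $N^{\beta}$, $\beta>0$, a de Finetti-based lower bound requires dimensional localization and moment estimates (this is the heavy machinery of \cite{LewNamRou-16,LewNamRou-17}, developed for one component and \emph{fixed} subcritical coupling; the mixture version \cite{MicNamOlg-19} is qualitative, without rate), and whether the resulting errors can be pushed below a scale that itself tends to zero is precisely what would have to be proved. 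Moreover, your claim that the thresholds $\gamma_i<\frac{p_i+2}{p_i}(1-2\beta)$ and $\gamma_i<\frac{p_i+2}{p_i+3}\beta$ are ``calibrated'' for the de Finetti error is a misattribution: the first is calibrated to the $N^{2\beta-1}$ Onsager-type error and the second to the $N^{-\beta}\ell_{i,N}^{3}$ Hartree-to-NLS error; a de Finetti argument would produce error terms of a different nature and would in general demand different (likely more restrictive) conditions. There is also an unaddressed compactness issue: after rescaling, the leading-order limit functional is minimized on the whole non-compact Gagliardo--Nirenberg family $b\,Q_0(b(\cdot-x_0))$, and preventing the limiting measure from leaking along dilations and translations is exactly the content of the approximate-ground-state blow-up results. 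The paper avoids all of this with a Feynman--Hellmann argument: perturb $H_N$ by $\eta A$ in one species, sandwich $\tr(A\gamma_{\Psi_N}^{(1,0)})$ between expectations in ground states of the perturbed one-component NLS problems using the already-proved energy bounds, choose $\eta$ between the error scale and the energy scale (possible precisely by \eqref{approximate-ground-state-II}), and conclude by the stability of the one-component blow-up profile for \emph{approximate} ground states, Theorem \ref{lem:blowup-nls-approximate-II}. Your proof could be completed along the same lines --- your approximate-ground-state observation plugs directly into such a perturbation argument, which works for mixed states as well --- but as written, the de Finetti step is not justified.
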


\begin{remark}
	\begin{itemize}
		\item The condition $\frac{\gamma_{1}}{\gamma_{2}} = \frac{p_{1}+2}{p_{2}+2}$ is a technical assumption which yields that $\ell_{1,N}$ and $\ell_{2,N}$ have the same asymptotic behavior when $N\to\infty$. This will be used only to prove the convergence of ground states in \eqref{conv:partial-trace-II}, but not the asymptotic behavior of the quantum energy.
		\item The convergence of density matrices follows from that of the quantum energy. In the case $z_{1} = z_{2}$, however, we are not able to give an asymptotic formula for the quantum energy.
	\end{itemize}
\end{remark}

Note that, for $i\in\{1,2\}$, the positive ground states $u_{a_{i}}$ of \eqref{energy:nls-one-component} decay exponentially (see \cite[Proposition A]{GuoZenZho-18}). More precisely, for any $R > 0$, there exists $C(R) > 0$ such that
\begin{equation}\label{decay:exponential}
u_{a_{i}}(x) \leq C(R)e^{-\mu|x-z_{i}|(a_{*}-a_{i})^{-\frac{1}{p_{i}+2}}} \quad \text{in } \mathbb R^{2} \setminus B_{R}(x_{i})
\end{equation}
where $\mu > 0$ is independent of $R$, $z_{i}$ and $a_{i}$. The decay property \eqref{decay:exponential} is used to study the blow-up profile of ground states for \eqref{energy:nls-two-component}. In fact, as pointed out in \cite{GuoZenZho-18}, we are not able to give the optimal energy estimate for the NLS energy because of the presence of the cross-term $-a_{12}\int_{\mathbb R^{2}}|u_{1}(x)|^{2}|u_{2}(x)|^{2}{\rm d}x$ in \eqref{functional:nls-two-component}. In the energy estimate, if $z_{1}\ne z_{2}$, then the cross-term can be made arbitrary small in the limit regime $(a_{1},a_{2}) \nearrow (a_{*},a_{*})$, by \eqref{decay:exponential}. Thus, we can give the refined estimate for the limit behavior of ground states. However, such a refined calculation for the NLS energy is not known when $z_{1} \equiv z_{2}$. In that case, it was proved that the NLS ground states blow up, but one cannot determine the accurate blow-up rate (see \cite[Theorem 1.3]{GuoZenZho-18}).

\subsection{Methodology of the Proofs} 
Similarly to what was done in \cite{LewNamRou-17-proc} for the one-component setting, our proofs of BEC in this paper are based on a Feynman--Hellmann-type argument. It relies strongly on the uniqueness of the limiting profile for NLS ground states, i.e., the unique positive solution of \eqref{eq:GN}. To obtain the convergences of ground states in Theorems \ref{thm:blow-up-bec-1-2-I}, \ref{thm:blow-up-bec-12-I} and \ref{thm:blow-up-bec-II}, it is enough to prove \eqref{conv:partial-trace-1-2-I},  \eqref{conv:partial-trace-12-I}  and \eqref{conv:partial-trace-II} for the double indices $(1,0)$ and $(0,1)$. We refer the reader to \cite[Section 3]{MicOlg-17} for the general discussion. The main difficulty in this paper, as well as in \cite{LewNamRou-17-proc}, is the energy estimate between the quantum and the NLS energies via the Hartree energy. In the next section, we will show that, under the intra-species interactions and inter-species interaction given by \eqref{assumption:potential-1}, we have
\begin{equation}\label{conv:quantum-hartree-energy}
E_{N}^{\rm H} \geq E_{N}^{\rm Q} \geq E_{N}^{\rm H} - CN^{2\beta-1}.
\end{equation}
While the upper bound is trivial, by the variational principle, it is more complicated to obtain the lower bound in \eqref{conv:quantum-hartree-energy}. Our strategy is to adapt the arguments in \cite[Section 3]{Lewin-ICMP} to our two-component setting. Next, we compare the Hartree and NLS energies and show that the error term is $N^{-\beta}$, thanks to assumption \eqref{assumption:potential-2}. We arrive at the final estimate
\begin{equation}\label{conv:quantum-nls-energy}
E^{\rm NLS} + CN^{-\beta} \geq E_{N}^{\rm Q} \geq E^{\rm NLS} - CN^{-\beta} - CN^{2\beta-1}.
\end{equation}
The NLS  energy functional is stable, i.e., $E^{\rm NLS} \geq 0$, under the assumptions that $0<a_{1},a_{2}<a_{*}$ and either $0<a_{12}<\sqrt{c_{1}^{-1}c_{2}^{-1}(a_{*}-a_{1})(a_{*}-a_{2})}$ or $a_{12}<0$ (see \cite{GuoZenZho-17-dcds,GuoZenZho-18}). In addition,  if $0<\beta<1/2$, then \eqref{conv:quantum-nls-energy} implies the convergence of the quantum energy to the NLS  energy. The novelty of our work in the present paper is to prove that the above condition between $a_{1}$, $a_{2}$ and $a_{12}$, which yields the stability of the NLS  energy functional \eqref{functional:nls-two-component}, is sufficient for the stability of the many-body system \eqref{hamiltonian} as well.

It it expected that the convergence of the quantum energy to the NLS  energy holds true for any $0<\beta<1$. However, a proof for $1/2\leq \beta <1$ is much more involved (see \cite{LewNamRou-14,LewNamRou-16,LewNamRou-17,NamRou-20} for discussions in the one-component case). We hope that our study in this paper can serve as a first step for understanding the 2D focusing mixture Bose gases.

\medskip

\noindent\textbf{Organization of the paper.} We now describe the structure of this paper. In Section \ref{sec:quantum-nls-energy}, we prove the convergence of the quantum energy to the Hartree energy for a more general system. In Section \ref{sec:blow-up-gp}, we give proofs of Theorems \ref{thm:blow-up-bec-1-2-I}, \ref{thm:blow-up-bec-12-I} and \ref{thm:blow-up-bec-II} after revisiting the blow-up phenomenon in the NLS theory and establishing energy estimates for the quantum energy.

\section{From the Quantum Energy to the NLS Energy}\label{sec:quantum-nls-energy}

\subsection{Convergence of the Quantum Energy to the Hartree Energy}\label{subsec:quantum-hartree-energy}

In this subsection, we prove the convergence of the quantum energy to the Hartree energy under some assumptions on the kinetic and the potentials energies. We consider the general Hamiltonian
\begin{align*}
H_{N} = &  \sum_{i=1}^{N_{1}}h^{(1)}_{x_{i}} + \frac{1}{N_{1}-1}\sum_{1 \leq i < j \leq N_{1}}W^{(1)}(x_{i}-x_{j})  \\
& + \sum_{r=1}^{N_{2}}h^{(2)}_{y_{r}} + \frac{1}{N_{2}-1}\sum_{1 \leq r < s \leq N_{2}}W^{(2)}(y_{r}-y_{s}) \\
& + \frac{1}{N}\sum_{i=1}^{N_{1}}\sum_{r=1}^{N_{2}}W^{(12)}(x_{i}-y_{r})
\end{align*}
which is the many-body system for $N_{1}$ and $N_{2}$ identical bosons of different types in $\mathbb R^2$, acting on the Hilbert space $\mathcal{H}_{N}$ given by \eqref{space:hilbert}. Here, we denote by $N = N_{1} + N_{2}$ the total number of particles. The kinetic energies $h^{(1)}$ and $h^{(2)}$, which are the non-interacting one-body Hamiltonian, are assumed to be real and positive preserving, i.e.,  $\langle u,h^{(i)}u\rangle \geq \langle |u|,h^{(i)}|u|\rangle$ for $i\in\{1,2\}$. The two-body interactions among particles of the same species $W^{(1)}$, $W^{(2)}$ and of different species $W^{(12)}$ satisfy $\widehat{W^{(\sigma)}}\in L^1(\mathbb R^{2})$, for $\sigma\in\{1,2,12\}$. For the sake of simplicity, we assume that the ratios $c_{1} = N_{1}/N$ and $c_{2} = N_{1}/N$ are fixed. The ground state energy per particle and the Hartree energy are given by
$$
E_{N}^{\rm Q} = N^{-1}\inf\sigma_{\mathcal{H}_{N}}H_{N}  \quad \text{and} \quad  E^{\rm H} = \inf_{\substack{u_{1},u_{2}\in H^{1}(\mathbb R^{2})\\ \|u_{1}\|_{L^{2}} = 1 = \|u_{2}\|_{L^{2}}}}\mathcal{E}^{\rm H}(u_{1},u_{2}).
$$
Here, the Hartree energy functional, which is obtained by considering the ansatz $u_{1}^{\otimes N_{1}}\otimes u_{2}^{\otimes N_{2}}$ as a trial wave function for the many-body system, is given by
\begin{align*}
\mathcal{E}^{\rm H} (u_{1},u_{2}) & = c_{1}\left[\langle u_{1},h^{(1)}u_{1} \rangle + \frac{1}{2}\int_{\mathbb R^{2}}|u_{1}(x)|^{2}(W^{(1)}\star|u_{1}|^{2})(x){\rm d}x\right] \\
& \quad  + c_{2}\left[\langle u_{2},h^{(2)}u_{2} \rangle + \frac{1}{2}\int_{\mathbb R^{2}}|u_{2}(x)|^{2}(W^{(2)}\star|u_{2}|^{2})(x){\rm d}x\right]\\
& \quad  + c_{1}c_{2}\int_{\mathbb R^{2}}|u_{1}(x)|^{2}(W^{(12)}\star |u_{2}|^{2})(x){\rm d}x.
\end{align*}

\begin{theorem}\label{thm:quantum-hartree-energy-general}
	Under previous assumptions, we have
	\begin{equation}\label{conv:quantum-hartree-energy-general}
	E^{\rm H} \geq E_{N}^{\rm Q} \geq E^{\rm H} - \frac{C}{N}\left( \|\widehat{W^{(1)}}\|_{L^1} + \|\widehat{W^{(2)}}\|_{L^1} + \|\widehat{W^{(12)}}\|_{L^1}\right).
	\end{equation}
\end{theorem}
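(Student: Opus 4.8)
The upper bound $E^{\rm H}\ge E_N^{\rm Q}$ in \eqref{conv:quantum-hartree-energy-general} is immediate from the variational principle: testing $H_N$ against the product state $u_1^{\otimes N_1}\otimes u_2^{\otimes N_2}$ produces exactly $N\,\mathcal E^{\rm H}(u_1,u_2)$, and optimising over $(u_1,u_2)$ gives the claim. The content is therefore the lower bound, which I would obtain by adapting the Fourier (c-number substitution) argument of \cite[Section 3]{Lewin-ICMP} to the bipartite setting.

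\emph{Fourier set-up and completion of the square.} Introduce the density-fluctuation operators $\widehat\varrho^{(1)}_k=\sum_{i=1}^{N_1}e^{ik\cdot x_i}$ and $\widehat\varrho^{(2)}_k=\sum_{r=1}^{N_2}e^{ik\cdot y_r}$. Writing $W^{(\sigma)}(x)=(2\pi)^{-2}\int\widehat{W^{(\sigma)}}(k)e^{ik\cdot x}\,dk$, the two intra-species sums become $\tfrac{1}{2(2\pi)^2}\int\widehat{W^{(i)}}(k)\big(|\widehat\varrho^{(i)}_k|^2-N_i\big)\,dk$ and the inter-species sum becomes $\tfrac{1}{(2\pi)^2}\int\widehat{W^{(12)}}(k)\,\widehat\varrho^{(1)}_k\,\overline{\widehat\varrho^{(2)}_k}\,dk$, the prefactors $1/(N_i-1)$ and $1/N$ being carried along. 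Fix a trial pair $(u_1,u_2)$ with unit norms and densities $\rho_i=|u_i|^2$; the positivity-preserving hypothesis on $h^{(i)}$ lets me take $u_i\ge0$. Completing the square of the joint quadratic form in $(\widehat\varrho^{(1)}_k,\widehat\varrho^{(2)}_k)$ about the c-numbers $(N_1\widehat{\rho_1}(k),N_2\widehat{\rho_2}(k))$ rewrites $H_N$ as
$$H_N=\sum_{i=1}^{N_1}h^{(1),\mathrm{MF}}_{x_i}+\sum_{r=1}^{N_2}h^{(2),\mathrm{MF}}_{y_r}-\mathcal D+\mathcal S+\mathcal R,$$
where $h^{(i),\mathrm{MF}}$ is the one-body mean-field operator of $\mathcal E^{\rm H}$ (carrying the intra potential $W^{(i)}\star\rho_i$ and the cross potential $W^{(12)}\star\rho_{3-i}$), $\mathcal D$ collects the constant direct-energy terms that correct the double counting, $\mathcal S$ is the self-energy part controlled by $\tfrac{C}{N}\big(\|\widehat{W^{(1)}}\|_{L^1}+\|\widehat{W^{(2)}}\|_{L^1}\big)$ via $\|W^{(\sigma)}\|_{L^\infty}\le(2\pi)^{-2}\|\widehat{W^{(\sigma)}}\|_{L^1}$, and $\mathcal R$ is the fluctuation remainder, a quadratic form in the centred operators $\mathcal F^{(i)}_k:=\widehat\varrho^{(i)}_k-N_i\widehat{\rho_i}(k)$. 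Taking $(u_1,u_2)$ to (nearly) minimise $\mathcal E^{\rm H}$, the operator inequalities $\sum_i h^{(1),\mathrm{MF}}_{x_i}\ge N_1\mu_1$ and $\sum_r h^{(2),\mathrm{MF}}_{y_r}\ge N_2\mu_2$ (with $\mu_i$ the bottom of $\sigma(h^{(i),\mathrm{MF}})$) together with $-\mathcal D$ reconstitute $N E^{\rm H}$ up to the controlled error $\mathcal S$.

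\emph{The main obstacle: the fluctuation remainder.} Everything reduces to showing $\langle\Psi_N,\mathcal R\,\Psi_N\rangle\ge -\tfrac{C}{N}(\cdots)$ for every normalised $\Psi_N$. This is the genuinely hard point, because $\widehat{W^{(\sigma)}}$ is not sign-definite, so $\mathcal R$ cannot be discarded as a positive operator, and the crude estimate $\|\mathcal F^{(i)}_k\Psi_N\|\le 2N_i$ loses a factor of $N$ and would only yield an $O(1)$-per-particle error. The per-species forms $\int\widehat{W^{(i)}}(k)\,|\mathcal F^{(i)}_k|^2\,dk$ are handled exactly as in the one-component estimate of \cite[Section 3]{Lewin-ICMP}, where the bosonic symmetry is used. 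The new ingredient is the inter-species cross fluctuation $\tfrac{1}{N}\int\widehat{W^{(12)}}(k)\,\mathcal F^{(1)}_k\,\overline{\mathcal F^{(2)}_k}\,dk$: I would dominate it by Cauchy--Schwarz, $|\mathcal F^{(1)}_k\,\overline{\mathcal F^{(2)}_k}|\le\tfrac12\big(|\mathcal F^{(1)}_k|^2+|\mathcal F^{(2)}_k|^2\big)$, so that the cross term is absorbed into the two intra-species forms at the cost of an extra weight $|\widehat{W^{(12)}}|/N$, contributing precisely the missing $\tfrac{C}{N}\|\widehat{W^{(12)}}\|_{L^1}$ to the error. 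Assembling $\mathcal S$, this controlled cross contribution, and the per-species fluctuation bounds, and recalling $c_i=N_i/N$, then gives \eqref{conv:quantum-hartree-energy-general}. I expect the control of $\mathcal R$ — specifically, verifying that the Cauchy--Schwarz splitting of the indefinite cross term leaves the per-species fluctuation estimates of \cite{Lewin-ICMP} intact — to be the principal technical difficulty.
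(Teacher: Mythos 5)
Your upper bound and the Fourier-space completion of the square are fine as far as they go --- the latter is just Onsager's lemma, which appears in the paper as Lemma \ref{lem:two-one} in position-space form. The genuine gap is exactly at the step you defer: the claim that the indefinite per-species fluctuation forms $\int\widehat{W^{(i)}}(k)\,|\mathcal F^{(i)}_k|^2\,{\rm d}k$ can be ``handled exactly as in the one-component estimate of \cite[Section 3]{Lewin-ICMP}, where the bosonic symmetry is used''. No such estimate exists in that reference, and the statement you reduce the theorem to --- $\langle\Psi_N,\mathcal R\,\Psi_N\rangle\ge -C$ uniformly over normalised bosonic $\Psi_N$ --- is false. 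Take $\Psi_N=\tfrac{1}{\sqrt2}\big(u^{\otimes N_1}+v^{\otimes N_1}\big)\otimes w^{\otimes N_2}$ with $u\perp v$ and $|u|^2\ne|v|^2$. For each fixed $k$ and \emph{any} c-number $c_k$ one has
$\big\langle\Psi_N,|\widehat\varrho^{(1)}_k-c_k|^2\Psi_N\big\rangle
\geq \big\langle\Psi_N,|\widehat\varrho^{(1)}_k|^2\Psi_N\big\rangle-\big|\big\langle\Psi_N,\widehat\varrho^{(1)}_k\Psi_N\big\rangle\big|^2
\approx \tfrac{N_1^2}{4}\big|\widehat{|u|^2}(k)-\widehat{|v|^2}(k)\big|^2$,
so centring about a reference density cannot make the fluctuations small. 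Choosing $u,v$ so that $\widehat{|u|^2}-\widehat{|v|^2}$ is concentrated where $\widehat{W^{(1)}}<0$, the negative-Fourier part of your remainder, which carries only the coupling $1/(N_1-1)$, is of order $-N$ on $\Psi_N$: an $O(1)$ error per particle instead of the required $O(1/N)$, and bosonic symmetry cannot repair this since the example is bosonic. The same objection defeats your Cauchy--Schwarz treatment of the cross term: absorbing $|\widehat{W^{(12)}}|/N$ into the intra-species forms leaves you again with quadratic fluctuation forms, of potential size $O(N)$, not the claimed $\tfrac CN\|\widehat{W^{(12)}}\|_{L^1}$, which would only account for diagonal self-energy contributions.

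The mechanism by which \cite[Section 3]{Lewin-ICMP} --- and the paper, in Subsection \ref{subsec:quantum-hartree-energy} --- circumvents the indefinite sign is the L\'evy--Leblond doubling trick \cite{LevLeb-69}, which your single-group set-up cannot import. One works with $2N_1+2N_2$ particles and uses the bosonic symmetry of $\Psi_{2N}$ to rewrite $\langle\Psi_{2N},H_{2N}\Psi_{2N}\rangle/(2N)$ as the expectation of an auxiliary Hamiltonian $\tilde H_N$ in which only $N_1+N_2$ variables are quantum, the remaining ones entering as fixed classical positions. After the Fourier-side splitting $W^{(\sigma)}=W_+^{(\sigma)}-W_-^{(\sigma)}$ with $\widehat{W_\pm^{(\sigma)}}\ge0$, the parts with positive Fourier transform act among the quantum variables, where the completed square has a definite sign and can be discarded (this is where Onsager's lemma with $\chi=N_1\rho^{(1,0)}_{\Phi_N}$, $\zeta=N_2\rho^{(0,1)}_{\Phi_N}$ enters); the parts with negative Fourier transform are converted into \emph{repulsive} classical--classical interactions, handled by a second application of Lemma \ref{lem:two-one} with $\chi=(N_1-1)\rho^{(1,0)}_{\Phi_N}$, $\zeta=(N_2-1)\rho^{(0,1)}_{\Phi_N}$, plus quantum--classical attractions that are \emph{linear} in the quantum density and therefore carry no sign problem. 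Each application costs only self-energy terms $W^{(\sigma)}_\pm(0)$, controlled by $\|\widehat{W^{(\sigma)}}\|_{L^1}$, which is precisely the origin of the $C/N$ error in \eqref{conv:quantum-hartree-energy-general}; the Hoffmann-Ostenhof inequality \cite{Hof-77} then produces the Hartree kinetic term from $\sqrt{\rho_{\Phi_N}}$, replacing your appeal to the bottom of the spectrum of a mean-field operator (only positivity-preservation of $h^{(i)}$ is assumed, so that step would itself need a Perron--Frobenius-type justification). Without the doubling, the remainder $\mathcal R$ in your decomposition is genuinely of order $N$ on some states, and the argument does not close.
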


We remark that, based on quantum de Finetti theorem, the convergence \eqref{conv:quantum-hartree-energy-general} has been proven in \cite[Theorem 4.1]{MicNamOlg-19} for confined systems without convergence rate. Note that, in Theorem \ref{thm:quantum-hartree-energy-general}, we made only the assumption on the positivity preserving of the kinetic energy. Furthermore, we did not make any assumption on the sign of $W^{(\sigma)}$, for $\sigma\in\{1,2,12\}$, nor on its Fourier transform $\widehat{W^{(\sigma)}}$. The intra-species and inter-species interactions can be either attractive or repulsive. If $\widehat{W^{(\sigma)}}$ are not integrable (e.g., for Coulomb potentials), the proof can be done by an approximation argument. To prove Theorem \ref{thm:quantum-hartree-energy-general}, we will need the following variant of Onsager's lemma \cite{Onsager-39} which relies on the estimate of the two-body interaction by a one-body term.

\begin{lemma}\label{lem:two-one}
	Given the functions $V^{(1)}$, $V^{(2)}$ and $V^{(12)}$ in $\mathbb R^2$. Assume that the Fourier transforms $\widehat{V^{(1)}}$, $\widehat{V^{(2)}}$, $\widehat{V^{(12)}}$ are positives and belong to $L^1(\mathbb R^{2})$. Then, for any integrable functions $\chi$ and $\zeta$, we have
	\begin{align}
	\sum_{1 \leq i < j \leq N_{1}}V^{(1)}(x_{i}-x_{j}) & \geq - \frac{1}{2}\iint_{\mathbb R^{2} \times \mathbb R^{2}}\chi(x)\chi(y)V^{(1)}(x-y){\rm d}x{\rm d}y \nonumber \\
	& \quad + \sum_{i=1}^{N_{1}}\big(\chi\star V^{(1)}\big)(x_{i}) - \frac{N_{1}}{2}V^{(1)}(0), \label{two-one-1}\\
	\sum_{1 \leq r < s \leq N_{2}}V^{(2)}(y_{r}-y_{s}) & \geq - \frac{1}{2}\iint_{\mathbb R^{2} \times \mathbb R^{2}}\zeta(x)\zeta(y)V^{(2)}(x-y){\rm d}x{\rm d}y \nonumber \\
	& \quad + \sum_{r=1}^{N_{2}}\big(\zeta\star V^{(2)}\big)(y_{r}) - \frac{N_{2}}{2}V^{(2)}(0), \label{two-one-2} \\
	\sum_{i=1}^{N_{1}}\sum_{r=1}^{N_{2}}V^{(12)}(x_{i}-y_{r}) & \geq -\sum_{1 \leq i < j \leq N_{1}}V^{(12)}(x_{i}-x_{j}) + \sum_{i=1}^{N_{1}}\big[(\chi+\zeta)\star V^{(12)}\big](x_{i}) \nonumber \\
	& \quad  - \sum_{1 \leq r < s \leq N_{2}}V^{(12)}(y_{r}-y_{s}) + \sum_{r=1}^{N_{2}}\big[(\chi+\zeta)\star V^{(12)}\big](y_{r}) \nonumber \\
	& \quad  - \frac{1}{2}\iint_{\mathbb R^{2} \times \mathbb R^{2}}\left(\chi(x)\chi(y)+\zeta(x)\zeta(y)\right)V^{(12)}(x-y){\rm d}x{\rm d}y \nonumber \\
	& \quad  - \iint_{\mathbb R^{2} \times \mathbb R^{2}}\chi(x)\zeta(y)V^{(12)}(x-y){\rm d}x{\rm d}y -\frac{N}{2}V^{(12)}(0). \label{two-one-3}
	\end{align}
\end{lemma}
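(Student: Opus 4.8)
The plan is to derive all three inequalities from a single elementary fact: a function $V$ whose Fourier transform satisfies $\widehat V\geq 0$ is of \emph{positive type}, so that for any real density $g$ one has $\iint_{\mathbb R^{2}\times\mathbb R^{2}} g(x)g(y)V(x-y)\,{\rm d}x\,{\rm d}y\geq 0$. First I would make this rigorous for the densities we actually need. Since $\widehat{V^{(\sigma)}}\in L^{1}(\mathbb R^{2})$ for $\sigma\in\{1,2,12\}$, Fourier inversion shows that each $V^{(\sigma)}$ is bounded and continuous; in particular the pointwise value $V^{(\sigma)}(0)=\int_{\mathbb R^{2}}\widehat{V^{(\sigma)}}(\xi)\,{\rm d}\xi$ is well defined, which is what makes the diagonal terms in the statement meaningful. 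For $g$ a finite sum of point masses minus an $L^{1}$ function, I would then record
\begin{equation*}
\iint_{\mathbb R^{2}\times\mathbb R^{2}} g(x)g(y)V^{(\sigma)}(x-y)\,{\rm d}x\,{\rm d}y=\int_{\mathbb R^{2}}\widehat{V^{(\sigma)}}(\xi)\,\bigl|\widehat g(\xi)\bigr|^{2}\,{\rm d}\xi\geq 0,
\end{equation*}
where $\widehat g$ is a bounded continuous function (a finite sum of exponentials minus the bounded $\widehat\chi$, $\widehat\zeta$) and the integral converges because $\widehat{V^{(\sigma)}}\in L^{1}$.

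For \eqref{two-one-1} I would apply this with $V^{(\sigma)}=V^{(1)}$ and $g=\sum_{i=1}^{N_{1}}\delta_{x_{i}}-\chi$. Expanding the double integral yields three groups: the particle--particle sum $\sum_{i,j}V^{(1)}(x_{i}-x_{j})$, the two identical cross terms $-2\sum_{i}(\chi\star V^{(1)})(x_{i})$, and the smearing term $\iint\chi(x)\chi(y)V^{(1)}(x-y)\,{\rm d}x\,{\rm d}y$. Using that $V^{(1)}$ is even, the particle--particle sum splits as $2\sum_{i<j}V^{(1)}(x_{i}-x_{j})+N_{1}V^{(1)}(0)$, the diagonal producing exactly the $N_{1}V^{(1)}(0)/2$ contribution after dividing by two. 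Rearranging the resulting nonnegativity gives \eqref{two-one-1}, and \eqref{two-one-2} follows verbatim with $g=\sum_{r=1}^{N_{2}}\delta_{y_{r}}-\zeta$ and $V^{(2)}$.

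The inter-species bound \eqref{two-one-3} is obtained the same way from the \emph{total} density $g=\sum_{i=1}^{N_{1}}\delta_{x_{i}}+\sum_{r=1}^{N_{2}}\delta_{y_{r}}-(\chi+\zeta)$ with $V^{(\sigma)}=V^{(12)}$. Now the square of the particle part generates the two intra-species sums over the $x$'s and over the $y$'s together with twice the desired $\sum_{i,r}V^{(12)}(x_{i}-y_{r})$, while the diagonal yields $(N_{1}+N_{2})V^{(12)}(0)=NV^{(12)}(0)$; the cross terms with $\chi+\zeta$ give $\sum_{i}[(\chi+\zeta)\star V^{(12)}](x_{i})+\sum_{r}[(\chi+\zeta)\star V^{(12)}](y_{r})$, and the smearing term expands as $\tfrac12\iint(\chi(x)\chi(y)+\zeta(x)\zeta(y))V^{(12)}+\iint\chi(x)\zeta(y)V^{(12)}$. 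Isolating the inter-species sum and transferring the two intra-species sums to the right-hand side produces \eqref{two-one-3}. I do not expect a genuine obstacle here: the whole content sits in the positive-type inequality, and the only points requiring care are the rigorous meaning of the Fourier identity for densities containing point masses and the bookkeeping of the diagonal $V^{(\sigma)}(0)$ terms and of the combinatorial factors coming from the evenness of the potentials.
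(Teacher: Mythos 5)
Your proof is correct and is essentially identical to the paper's own argument: the paper likewise expands the positive-type quadratic form $\iint f^{(\sigma)}(x)f^{(\sigma)}(y)V^{(\sigma)}(x-y)\,{\rm d}x\,{\rm d}y = 2\pi\int \widehat{V^{(\sigma)}}(k)|\widehat{f^{(\sigma)}}(k)|^{2}\,{\rm d}k \geq 0$ with exactly your choices $f^{(1)}=\sum_{i}\delta_{x_{i}}-\chi$, $f^{(2)}=\sum_{r}\delta_{y_{r}}-\zeta$ and $f^{(12)}=f^{(1)}+f^{(2)}$. Your added remarks on the continuity of $V^{(\sigma)}$ (hence the meaning of $V^{(\sigma)}(0)$) and on evenness are details the paper leaves implicit, but the route is the same.
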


\begin{proof} 
	\eqref{two-one-1}, \eqref{two-one-2} and \eqref{two-one-3} are obtained by expanding
	$$
	\iint_{\mathbb R^{2} \times \mathbb R^{2}} f^{(\sigma)}(x)f^{(\sigma)}(y)V^{(\sigma)}(x-y){\rm d}x{\rm d}y = 2\pi\int_{\mathbb R^{2}}\widehat{V^{(\sigma)}}(k)|\widehat{f^{(\sigma)}}(k)|^{2}{\rm d}k \geq 0,
	$$
	for $\sigma\in\{1,2,12\}$, where
	$$
	f^{(1)} = \sum_{i=1}^{N_{1}}\delta_{x_{i}} - \chi, \quad  f^{(2)} = \sum_{r=1}^{N_{2}}\delta_{y_{r}} - \zeta \quad \text{and} \quad f^{(12)} = f^{(1)} + f^{(2)}.
	$$
\end{proof}

Now, we follow the method described in \cite[Section 3]{Lewin-ICMP} to prove Theorem \ref{thm:quantum-hartree-energy-general} for arbitrary potential $W^{(\sigma)}$ satisfying $\widehat{W^{(\sigma)}}\in L^1(\mathbb R^{2})$, for $\sigma\in\{1,2,12\}$. The idea was in turn inspired by arguments of L\'evy--Leblond \cite{LevLeb-69}. See \cite{LieThi-84,LieYau-87} for related arguments.

\begin{proof}[Proof of Theorem \ref{thm:quantum-hartree-energy-general}]
	We first consider the case with an even number $2N_{1}$ and $2N_{2}$ of particles of different types. We denote by $2N = 2N_{1} + 2N_{2}$ the total number of particles that we split in two groups of $N=N_{1}+N_{2}$. The position of the $N$ first will be denoted by $x_1,\ldots,x_{N_{1}}$ and $y_1,\ldots,y_{N_{2}}$, whereas those of the others will be denoted by $p_1=x_{N_{1}+1},\ldots,p_{N_{1}}=x_{2N_{1}}$ and $q_1=y_{N_{2}+1},\ldots,q_{N_{2}}=y_{2N_{2}}$. Next, we pick a $2N$-particles state $\Psi_{2N}$ and use its bosonic symmetry in two groups of $2N_{1}$ and $2N_{2}$ variables to write
	$$
	\frac{1}{2N}\Big\langle\Psi_{2N}\Big|\sum_{i=1}^{2N_{1}}h^{(1)}_{x_{i}} + \sum_{r=1}^{2N_{2}}h^{(2)}_{y_{r}}\Big|\Psi_{2N}\Big\rangle = \frac{1}{N}\Big\langle\Psi_{2N}\Big|\sum_{i=1}^{N_{1}}h^{(1)}_{x_{i}} + \sum_{r=1}^{N_{2}}h^{(2)}_{y_{r}}\Big|\Psi_{2N}\Big\rangle.
	$$
	Now, we define 
	$$
	W_{N}^{(12)}=\frac{1}{N}W^{(12)},\quad W_{N}^{(1)} = \frac{1}{N_{1}-1}W^{(1)}-W_{N}^{(12)} \quad \text{and} \quad W_{N}^{(2)} = \frac{1}{N_{2}-1}W^{(2)}-W_{N}^{(12)}.
	$$
	For $\sigma\in\{1,2,12\}$, we decompose 
	$$
	W_{N}^{(\sigma)}=W_{N,+}^{(\sigma)}-W_{N,-}^{(\sigma)},
	$$
	where 
	$$
	\widehat{W_{N,+}^{(\sigma)}}=\big(\widehat{W_{N}^{(\sigma)}}\big)_+\geq 0 \quad \text{and} \quad \widehat{W_{N,-}^{(\sigma)}}=\big(\widehat{W_{N}^{(\sigma)}}\big)_-\geq 0.$$
	We write the repulsive part using only the $x_{i}$'s and $y_{r}$'s as follows
	\begin{align*}
	& \begin{multlined}[t] \frac{1}{2N}\Big\langle\Psi_{2N}\Big|\frac{1}{2N_{1}-1}\sum_{1 \leq i < j \leq 2N_{1}}W_{+}^{(1)}(x_{i}-x_{j}) + \frac{1}{2N_{2}-1}\sum_{1 \leq r < s \leq 2N_{2}}W_{+}^{(2)}(y_{r}-y_{s}) \\
	+ \frac{1}{2N}\sum_{i=1}^{2N_{1}}\sum_{r=1}^{2N_{2}}W_{+}^{(12)}(x_{i}-y_{r})\Big|\Psi_{2N}\Big\rangle 
	\end{multlined} \\
	& \quad  =
	\begin{multlined}[t]
	\frac{1}{N}\Big\langle\Psi_{2N}\Big|\frac{1}{N_{1}-1}\sum_{1 \leq i < j \leq N_{1}}W_{+}^{(1)}(x_{i}-x_{j}) + \frac{1}{N_{2}-1}\sum_{1 \leq r < s \leq N_{2}}W_{+}^{(2)}(y_{r}-y_{s}) \\
	+ \frac{1}{N}\sum_{i=1}^{N_{1}}\sum_{r=1}^{N_{2}}W_{+}^{(12)}(x_{i}-y_{r})\Big|\Psi_{2N}\Big\rangle 
	\end{multlined} \\
	& \quad  =
	\begin{multlined}[t]
	\frac{1}{N}\Big\langle\Psi_{2N}\Big|\sum_{1 \leq i < j \leq N_{1}}W_{N,+}^{(1)}(x_{i}-x_{j}) + \sum_{1 \leq r < s \leq N_{2}}W_{N,+}^{(2)}(y_{r}-y_{s}) \\
	+ \sum_{1 \leq i < j \leq N_{1}}W_{N,+}^{(12)}(x_{i}-x_{j}) + \sum_{1 \leq r < s \leq N_{2}}W_{N,+}^{(12)}(y_{r}-y_{s}) \\
	+ \sum_{i=1}^{N_{1}}\sum_{r=1}^{N_{2}}W_{N,+}^{(12)}(x_{i}-y_{r}) \Big|\Psi_{2N} \Big\rangle.
	\end{multlined}
	\end{align*}
	On the other hand, we express the attractive part as the difference of two terms, involving respectively only the $p_{k}$'s and $q_{m}$'s and both groups	
	\begin{align*}	
	& \begin{multlined}[t]
	-\frac{1}{2N}\Big\langle\Psi_{2N}\Big|\frac{1}{2N_{1}-1}\sum_{1 \leq i < j \leq 2N_{1}}W_{-}^{(1)}(x_{i}-x_{j}) + \frac{1}{2N_{2}-1}\sum_{1 \leq r < s \leq 2N_{2}}W_{-}^{(2)}(y_{r}-y_{s}) \\
	+ \frac{1}{2N}\sum_{i=1}^{2N_{1}}\sum_{r=1}^{2N_{2}}W_{-}^{(12)}(x_{i}-y_{r})\Big|\Psi_{2N}\Big\rangle 
	\end{multlined} \\
	& \quad  = \begin{multlined}[t]
	\frac{1}{N}\Big\langle\Psi_{2N}\Big|\sum_{1 \leq k < \ell \leq N_{1}}W_{N,-}^{(1)}(p_{k}-p_\ell) + \sum_{1 \leq m < n \leq N_{2}}W_{N,-}^{(2)}(q_{m}-q_n) \\
	+ \sum_{1 \leq k < \ell \leq N_{1}}W_{N,-}^{(12)}(p_{k}-p_\ell) + \sum_{1 \leq m < n \leq N_{2}}W_{N,-}^{(12)}(q_{m}-q_n) \\
	+ \sum_{k=1}^{N_{1}}\sum_{m=1}^{N_{2}}W_{N,-}^{(12)}(p_{k}-q_{m}) \Big|\Psi_{2N} \Big\rangle 
	\end{multlined} \\
	& \quad  \quad - \begin{multlined}[t] \frac{1}{N}\Big\langle\Psi_{2N}\Big|\frac{1}{N_{1}}\sum_{i=1}^{N_{1}}\sum_{k=1}^{N_{1}}W_{-}^{(1)}(x_{i}-p_{k}) + \frac{1}{N_{2}}\sum_{r=1}^{N_{2}}\sum_{m=1}^{N_{2}}W_{-}^{(2)}(y_{r}-q_{m}) \\
	+ \sum_{i=1}^{N_{1}}\sum_{m=1}^{N_{2}}W_{N,-}^{(12)}(x_{i}-q_{m}) + \sum_{k=1}^{N_{1}}\sum_{r=1}^{N_{2}}W_{N,-}^{(12)}(p_{k}-y_{r})\Big|\Psi_{2N}\Big\rangle.
	\end{multlined}
	\end{align*}
	This means that $\Big\langle \Psi_{2N}\Big|\dfrac{H_{2N}}{2N}\Big|\Psi_{2N}\Big\rangle = \Big\langle \Psi_{2N}\Big|\dfrac{\tilde{H}_{N}}{N}\Big|\Psi_{2N}\Big\rangle$ where 
	\begin{align*}	
	\tilde{H}_{N} & = \sum_{i=1}^{N_{1}}h^{(1)}_{x_{i}} + \sum_{r=1}^{N_{2}}h^{(2)}_{y_{r}} \\
	& \quad  + \sum_{1 \leq i < j \leq N_{1}}W_{N,+}^{(1)}(x_{i}-x_{j}) + \sum_{1 \leq k < \ell \leq N_{1}}W_{N,-}^{(1)}(p_{k}-p_\ell)  \\
	& \quad  + \sum_{1 \leq r < s \leq N_{2}}W_{N,+}^{(2)}(y_{r}-y_{s}) + \sum_{1 \leq m < n \leq N_{2}}W_{N,-}^{(2)}(q_{m}-q_n)  \\
	& \quad  - \frac{1}{N_{1}}\sum_{i=1}^{N_{1}}\sum_{k=1}^{N_{1}}W_{-}^{(1)}(x_{i}-p_{k}) - \frac{1}{N_{2}}\sum_{r=1}^{N_{2}}\sum_{m=1}^{N_{2}}W_{-}^{(2)}(y_{r}-q_{m}) \\
	& \quad   + \sum_{1 \leq i < j \leq N_{1}}W_{N,+}^{(12)}(x_{i}-x_{j}) + \sum_{1 \leq r < s \leq N_{2}}W_{N,+}^{(12)}(y_{r}-y_{s}) \\
	& \quad + \sum_{i=1}^{N_{1}}\sum_{r=1}^{N_{2}}W_{N,+}^{(12)}(x_{i}-y_{r}) + \sum_{k=1}^{N_{1}}\sum_{m=1}^{N_{2}}W_{N,-}^{(12)}(p_{k}-q_{m}) \\
	& \quad  + \sum_{1 \leq k < \ell \leq N_{1}}W_{N,-}^{(12)}(p_{k}-p_\ell) + \sum_{1 \leq m < n \leq N_{2}}W_{N,-}^{(12)}(q_{m}-q_n) \\
	& \quad  - \sum_{i=1}^{N_{1}}\sum_{m=1}^{N_{2}}W_{N,-}^{(12)}(x_{i}-q_{m}) - \sum_{k=1}^{N_{1}}\sum_{r=1}^{N_{2}}W_{N,-}^{(12)}(p_{k}-y_{r}).
	\end{align*}
	The Hamiltonian $\tilde{H}_{N}$ describes a system of $N = N_{1} + N_{2}$ quantum particles that repel through potentials $W_{N,+}^{(\sigma)}$, for $\sigma\in\{1,2,12\}$, and $N = N_{1} + N_{2}$ classical particles that repel through potentials $W_{N,-}^{(\sigma)}$, with repulsion potentials $W_{N,+}^{(12)}$ and attraction potentials $\frac{1}{N_{1}}W_{-}^{(1)}$, $\frac{1}{N_{2}}W_{-}^{(2)}$, $W_{N,-}^{(12)}$ between two groups. In order to bound $\tilde{H}_{N}$ from below, we first fix the positions $p_1,\ldots,p_{N_{1}}$; $q_1,\ldots,q_{N_{2}}$ of the particles in the second group and consider $\tilde{H}_{N}$ as an operator acting only over the $x_{i}$'s and $y_{r}$'s. Let $\Phi_{N}$ be any bosonic $N$-particles state in the $N=N_{1}+N_{2}$ first variables. Let $\gamma_{\Phi_{N}}^{(1,0)}$ and $\gamma_{\Phi_{N}}^{(0,1)}$ be the $(1,0)$ and $(0,1)$-particle reduced density matrices associated with $\Phi_{N}$. Denote by $\rho_{\Phi_{N}}^{(1,0)}(x) = \gamma_{\Phi_{N}}^{(1,0)}(x,x)$ and $\rho_{\Phi_{N}}^{(0,1)}(x) = \gamma_{\Phi_{N}}^{(0,1)}(x,x)$ the density functions associated with $\gamma_{\Phi_{N}}^{(1,0)}$ and $\gamma_{\Phi_{N}}^{(0,1)}$. Applying Lemma \ref{lem:two-one} for the repulsive potential 
		$$
		V^{(1)} = W_{N,+}^{(1)},\quad V^{(2)} = W_{N,+}^{(2)} \quad \text{and} \quad V^{(12)} = W_{N,+}^{(12)}
		$$
		with
		$$
		\chi = N_{1}\rho_{\Phi_{N}}^{(1,0)} \quad \text{and} \quad \zeta = N_{2}\rho_{\Phi_{N}}^{(0,1)}
		$$
		and using the Hoffmann-Ostenhof inequality \cite{Hof-77} (see also \cite[Lemma 3.2]{Lewin-ICMP}), we obtain
	\begin{align}
	\langle \Phi_{N}|\tilde{H}_{N}|\Phi_{N}\rangle &\geq N_{1}\Big\langle \sqrt{\rho_{\Phi_{N}}^{(1,0)}},h^{(1)}\sqrt{\rho_{\Phi_{N}}^{(1,0)}}\Big\rangle + N_{2}\Big\langle \sqrt{\rho_{\Phi_{N}}^{(0,1)}},h^{(2)}\sqrt{\rho_{\Phi_{N}}^{(0,1)}}\Big\rangle \nonumber\\
	& \quad + \frac{N_{1}^{2}}{2}\iint_{\mathbb R^{2} \times \mathbb R^{2}}\rho_{\Phi_{N}}^{(1,0)}(x)\rho_{\Phi_{N}}^{(1,0)}(y)\big(W_{N,+}^{(1)}+W_{N,+}^{(12)}\big)(x-y){\rm d}x{\rm d}y \label{ineq:positive-1}\\
	& \quad  + \frac{N_{2}^{2}}{2}\iint_{\mathbb R^{2} \times \mathbb R^{2}}\rho_{\Phi_{N}}^{(0,1)}(x)\rho_{\Phi_{N}}^{(0,1)}(y)\big(W_{N,+}^{(2)}+W_{N,+}^{(12)}\big)(x-y){\rm d}x{\rm d}y \label{ineq:positive-2}\\
	& \quad + \sum_{1 \leq k < \ell \leq N_{1}}W_{N,-}^{(1)}(p_{k}-p_\ell) + \sum_{1 \leq m < n \leq N_{2}}W_{N,-}^{(2)}(q_{m}-q_n) \label{ineq:negative-1}\\
	& \quad  - \sum_{k=1}^{N_{1}}(\rho_{\Phi_{N}}^{(1,0)}\star W_{-}^{(1)})(p_{k}) - \sum_{m=1}^{N_{2}}(\rho_{\Phi_{N}}^{(0,1)}\star W_{-}^{(2)})(q_{m})  \label{ineq:negative-2}\\
	& \quad + \sum_{1 \leq k < \ell \leq N_{1}}W_{N,-}^{(12)}(p_{k}-p_\ell) + \sum_{1 \leq m < n \leq N_{2}}W_{N,-}^{(12)}(q_{m}-q_n)  \label{ineq:negative-3}\\
	& \quad + \sum_{k=1}^{N_{1}}\sum_{m=1}^{N_{2}}W_{N,-}^{(12)}(p_{k}-q_{m}) \label{ineq:negative-4}\\
	& \quad - N_{1}\sum_{m=1}^{N_{2}}(\rho_{\Phi_{N}}^{(1,0)}\star W_{N,-}^{(12)})(q_{m}) - N_{2}\sum_{k=1}^{N_{1}}(\rho_{\Phi_{N}}^{(0,1)}\star W_{N,-}^{(12)})(p_{k}) \label{ineq:negative-5}\\
	& \quad + N_{1}N_{2}\iint_{\mathbb R^{2} \times \mathbb R^{2}}\rho_{\Phi_{N}}^{(1,0)}(x)\rho_{\Phi_{N}}^{(0,1)}(y)W_{N,+}^{(12)}(x-y){\rm d}x{\rm d}y \nonumber\\
	& \quad  - \frac{N_{1}}{2}W_{N,+}^{(1)}(0) - \frac{N_{2}}{2}W_{N,+}^{(2)}(0) - \frac{N}{2}W_{N,+}^{(12)}(0). \nonumber
	\end{align}
	By using the positivity of $\widehat{W_{N,+}^{(1)}}$, $\widehat{W_{N,+}^{(2)}}$ and $\widehat{W_{N,+}^{(12)}}$, we have
	\begin{align*}
	\eqref{ineq:positive-1}
	& \geq \frac{N_{1}}{2}\iint_{\mathbb R^{2} \times \mathbb R^{2}}\rho_{\Phi_{N}}^{(1,0)}(x)\rho_{\Phi_{N}}^{(1,0)}(y)W_{+}^{(1)}(x-y){\rm d}x{\rm d}y,\\
	\eqref{ineq:positive-2} & \geq \frac{N_{2}}{2}\iint_{\mathbb R^{2} \times \mathbb R^{2}}\rho_{\Phi_{N}}^{(0,1)}(x)\rho_{\Phi_{N}}^{(0,1)}(y)W_{+}^{(2)}(x-y){\rm d}x{\rm d}y.
	\end{align*}
	Next, we apply again Lemma \ref{lem:two-one} for 
		$$
		V^{(1)} = W_{N,-}^{(1)},\quad V^{(2)} = W_{N,-}^{(2)} \quad \text{and} \quad V^{(12)} = W_{N,-}^{(12)}
		$$
		with
		$$
		\chi = (N_{1}-1)\rho_{\Phi_{N}}^{(1,0)} \quad \text{and} \quad \zeta = (N_{2}-1)\rho_{\Phi_{N}}^{(0,1)},
		$$
		we obtain
	\begin{align}
	& \eqref{ineq:negative-1} + \eqref{ineq:negative-2} + \eqref{ineq:negative-3} + \eqref{ineq:negative-4} + \eqref{ineq:negative-5} \nonumber\\
	& \quad  \geq - \frac{(N_{1}-1)^{2}}{2}\iint_{\mathbb R^{2} \times \mathbb R^{2}}\rho_{\Phi_{N}}^{(1,0)}(x)\rho_{\Phi_{N}}^{(1,0)}(y)\big(W_{N,-}^{(1)}+W_{N,-}^{(12)}\big)(x-y){\rm d}x{\rm d}y \label{ineq:negative-6} \\
	& \quad  \quad - \frac{(N_{2}-1)^{2}}{2}\iint_{\mathbb R^{2} \times \mathbb R^{2}}\rho_{\Phi_{N}}^{(0,1)}(x)\rho_{\Phi_{N}}^{(0,1)}(y)\big(W_{N,-}^{(2)}+W_{N,-}^{(12)}\big)(x-y){\rm d}x{\rm d}y \label{ineq:negative-7}\\
	& \quad  \quad - (N_{1}-1)(N_{2}-1)\iint_{\mathbb R^{2} \times \mathbb R^{2}}\rho_{\Phi_{N}}^{(1,0)}(x)\rho_{\Phi_{N}}^{(0,1)}(y)W_{N,-}^{(12)}(x-y){\rm d}x{\rm d}y \nonumber \\
	& \quad  \quad - \sum_{k=1}^{N_{1}}(\rho_{\Phi_{N}}^{(0,1)}\star W_{N,-}^{(12)})(p_{k}) - \sum_{m=1}^{N_{2}}(\rho_{\Phi_{N}}^{(1,0)}\star W_{N,-}^{(12)})(q_{m}) \nonumber\\
	& \quad  \quad - \frac{N_{1}}{2}W_{N,-}^{(1)}(0) - \frac{N_{2}}{2}W_{N,-}^{(2)}(0) - \frac{N}{2}W_{N,-}^{(12)}(0).\nonumber
	\end{align}
	Again, by using the positivity of $\widehat{W_{N,-}^{(1)}}$, $\widehat{W_{N,-}^{(2)}}$ and $\widehat{W_{N,-}^{(12)}}$, we have
	\begin{align*}
	\eqref{ineq:negative-6} & \geq - \frac{N_{1}(N_{1}-1)}{2}\iint_{\mathbb R^{2} \times \mathbb R^{2}}\rho_{\Phi_{N}}^{(1,0)}(x)\rho_{\Phi_{N}}^{(1,0)}(y)\big(W_{N,-}^{(1)}+W_{N,-}^{(12)}\big)(x-y){\rm d}x{\rm d}y \\
	&  = - \frac{N_{1}}{2}\iint_{\mathbb R^{2} \times \mathbb R^{2}}\rho_{\Phi_{N}}^{(1,0)}(x)\rho_{\Phi_{N}}^{(1,0)}(y)W_{-}^{(1)}(x-y){\rm d}x{\rm d}y, \\
	\eqref{ineq:negative-7} & \geq - \frac{N_{2}(N_{2}-1)}{2}\iint_{\mathbb R^{2} \times \mathbb R^{2}}\rho_{\Phi_{N}}^{(0,1)}(x)\rho_{\Phi_{N}}^{(0,1)}(y)\big(W_{N,-}^{(2)}+W_{N,-}^{(12)}\big)(x-y){\rm d}x{\rm d}y \\
	& = - \frac{N_{2}}{2}\iint_{\mathbb R^{2} \times \mathbb R^{2}}\rho_{\Phi_{N}}^{(0,1)}(x)\rho_{\Phi_{N}}^{(0,1)}(y)W_{-}^{(2)}(x-y){\rm d}x{\rm d}y.
	\end{align*}
	Putting all of the above together and using $W_{+}^{(\sigma)}-W_{-}^{(\sigma)}=W^{(\sigma)}$, for $\sigma\in\{1,2,12\}$, we arrive at
	\begin{align*}
	\langle \Phi_{N}|\tilde{H}_{N}|\Phi_{N}\rangle & \geq N_{1}\Big\langle\sqrt{\rho_{\Phi_{N}}^{(1,0)}},h^{(1)}\sqrt{\rho_{\Phi_{N}}^{(1,0)}}\Big\rangle + N_{1}\iint_{\mathbb R^{2} \times \mathbb R^{2}}\rho_{\Phi_{N}}^{(1,0)}(x)\rho_{\Phi_{N}}^{(1,0)}(y)W^{(1)}(x-y){\rm d}x{\rm d}y \\
	& \quad  + N_{2}\Big\langle\sqrt{\rho_{\Phi_{N}}^{(0,1)}},h^{(2)}\sqrt{\rho_{\Phi_{N}}^{(0,1)}}\Big\rangle + N_{2}\iint_{\mathbb R^{2} \times \mathbb R^{2}}\rho_{\Phi_{N}}^{(0,1)}(x)\rho_{\Phi_{N}}^{(0,1)}(y)W^{(2)}(x-y){\rm d}x{\rm d}y \\
	& \quad  + \frac{N_{1}N_{2}}{N}\iint_{\mathbb R^{2} \times \mathbb R^{2}}\rho_{\Phi_{N}}^{(1,0)}(x)\rho_{\Phi_{N}}^{(0,1)}(y)W^{(12)}(x-y){\rm d}x{\rm d}y + \mathcal{R}_{1} \\
	& = N\mathcal{E}^{\rm H}\Big(\sqrt{\rho_{\Phi_{N}}^{(1,0)}},\sqrt{\rho_{\Phi_{N}}^{(0,1)}}\Big) + \mathcal{R}_{1}.
	\end{align*}
	Here, we abbreviated by $\mathcal{R}_{1}$ the error terms
	\begin{align}
	\mathcal{R}_{1} & = - \frac{N_{1}}{2(N_{1}-1)}\big(W_{+}^{(1)}(0)+W_{-}^{(1)}(0)\big) - \frac{N_{2}}{2(N_{2}-1)}\big(W_{+}^{(2)}(0)+W_{-}^{(2)}(0)\big) \nonumber\\
	& \quad - \frac{1}{N}\sum_{k=1}^{N_{1}}(\rho_{\Phi_{N}}^{(0,1)}\star W_{-}^{(12)})(p_{k}) - \frac{1}{N}\sum_{m=1}^{N_{2}}(\rho_{\Phi_{N}}^{(1,0)}\star W_{-}^{(12)})(q_{m}) \label{err:energy-1}\\
	& \quad  + \frac{N-1}{N}\iint_{\mathbb R^{2} \times \mathbb R^{2}}\rho_{\Phi_{N}}^{(1,0)}(x)\rho_{\Phi_{N}}^{(0,1)}(y)W_{-}^{(12)}(x-y){\rm d}x{\rm d}y.\label{err:energy-2}
	\end{align}
	By Young's inequality, we have
	\begin{align*}
	\eqref{err:energy-1}
	& \geq -\frac{N_{1}}{N}\|\rho_{\Phi_{N}}^{(0,1)}\|_{L^1}\|W_{-}^{(12)}\|_{L^\infty} - \frac{N_{2}}{N}\|\rho_{\Phi_{N}}^{(1,0)}\|_{L^1}\|W_{-}^{(12)}\|_{L^\infty} \\
	& = - \|W_{-}^{(12)}\|_{L^\infty} \geq -(2\pi)^{-1}\|\widehat{W_{-}^{(12)}}\|_{L^1} \geq -(2\pi)^{-1}\|\widehat{W^{(12)}}\|_{L^1},\\
	\eqref{err:energy-2} & \geq -\|W_{-}^{(12)}\|_{L^\infty}\|\rho_{\Phi_{N}}^{(1,0)}\|_{L^1}\|\rho_{\Phi_{N}}^{(0,1)}\|_{L^1}  \geq -(2\pi)^{-1}\|\widehat{W^{(12)}}\|_{L^1}.
	\end{align*}
	We note that 
	$$
	W_{+}^{(i)}(0)+W_{-}^{(i)}(0)=(2\pi)^{-1}\|\widehat{W^{(i)}}\|_{L^{1}},
	$$
	for $i\in\{1,2\}$. Hence, we conclude
	$$
	\frac{\langle \Phi_{N}|\tilde{H}_{N}|\Phi_{N}\rangle}{N} \geq E^{\rm H} - \frac{(2\pi)^{-1}}{N}\left( \frac{N_{1}\|\widehat{W^{(1)}}\|_{L^1}}{2(N_{1}-1)} + \frac{N_{2}\|\widehat{W^{(2)}}\|_{L^1}}{2(N_{2}-1)} + 2\|\widehat{W^{(12)}}\|_{L^1}\right).
	$$
	Since the right hand side is independent of the $p_{k}$'s and $q_{m}$'s, the bound
	$$
	\frac{\tilde{H}_{N}}{N} \geq E^{\rm H} - \frac{C}{N}\left( \|\widehat{W^{(1)}}\|_{L^1} + \|\widehat{W^{(2)}}\|_{L^1} + \|\widehat{W^{(12)}}\|_{L^1}\right)
	$$
	holds in the sense of operators in the $2N$-particles space, where $2N = 2N_{1} + 2N_{2}$ with $N_{1}\geq 2$ and $N_{2}\geq 2$. Minimizing over $\Psi_{2N}$ and recalling the upper bound $E_{2N}^{\rm Q} \leq E^{\rm H}$ give the final estimate
	$$
	E^{\rm H} \geq E_{2N}^{\rm Q} \geq E^{\rm H} - \frac{C}{N}\left( \|\widehat{W^{(1)}}\|_{L^1} + \|\widehat{W^{(2)}}\|_{L^1} + \|\widehat{W^{(12)}}\|_{L^1}\right).
	$$
	
	We have considered the case where we have an even number of each type of particles for simplicity. But the proof works the same when one or both of those numbers are odd if we split the system into two groups of $N_{1}+N_{2}$ and $N_{1}+N_{2}+1$, or of $(N_{1}+1)+N_{2}$ and $N_{1}+(N_{2}+1)$. Another possibility is to use the fact that $N \mapsto E_{N}^{\rm Q}/N$ is non-decreasing. We arrive at the final estimate
	$$
	E^{\rm H} \geq E_{N}^{\rm Q} \geq E^{\rm H} - \frac{C}{N}\left( \|\widehat{W^{(1)}}\|_{L^1} + \|\widehat{W^{(2)}}\|_{L^1} + \|\widehat{W^{(12)}}\|_{L^1}\right)
	$$
	for $N = N_{1} + N_{2}$ and $N_{1}\geq 4$ and $N_{2}\geq 4$.
\end{proof}

\subsection{Convergence of the Hartree Energy to the NLS Energy}\label{sec:hartree-nls-energy}

In this subsection, we compare the Hartree and NLS energies. We first note that if $W^{(1)}$, $W^{(2)}$ and $W^{(12)}$ in \eqref{conv:quantum-hartree-energy-general} are replaced by $w_{N}^{(1)}$, $w_{N}^{(2)}$ and $w_{N}^{(12)}$ in \eqref{assumption:potential-1}, then we obtain
\begin{equation}\label{lower:quantum-hartree-energy}
E_{N}^{\rm H} \geq E_{N}^{\rm Q} \geq E_{N}^{\rm H} - CN^{2\beta-1}.
\end{equation}
Here, $E_{N}^{\rm Q}$ and $E_{N}^{\rm H}$ are defined as in \eqref{energy:quantum} and \eqref{energy:hartree-two-component}, respectively. This follows from the fact that the Fourier transform of $w_{N}^{(\sigma)}$ satisfies $\|\widehat{w_{N}^{(\sigma)}}\|_{L^1}\leq CN^{2\beta}$, for $\sigma\in\{1,2,12\}$. Our next step is to estimate the Hartree and NLS energies given by \eqref{energy:hartree-two-component} and \eqref{energy:nls-two-component}. We have the following.

\begin{lemma}\label{lem:energy-hartree-nls}
	Let $0<a_{1},a_{2}<a_{*}$ and either $0<a_{12}<\sqrt{c_{1}^{-1}c_{2}^{-1}(a_{*}-a_{1})(a_{*}-a_{2})}$ or $a_{12}<0$. Then, under assumptions \eqref{assumption:potential-1} and \eqref{assumption:potential-2}, we have
	$$
	\lim_{N\to\infty}E_{N}^{\rm H} = E^{\rm NLS}.
	$$
\end{lemma}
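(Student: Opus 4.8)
The plan is to prove the two-sided estimate $|E_N^{\rm H}-E^{\rm NLS}|\le CN^{-\beta}$, from which the claim follows at once since $\beta>0$. Everything rests on a single quantitative comparison between the smeared interactions of $\mathcal E_N^{\rm H}$ and the local quartic terms of $\mathcal E^{\rm NLS}$. Writing the interaction kernels as $N^{2\beta}w^{(\sigma)}(N^\beta\,\cdot)$, which integrate to one and concentrate at the origin, I would first establish, for every $u\in H^1(\mathbb R^2)$,
$$\left|\int_{\mathbb R^2}|u|^2\big(N^{2\beta}w^{(\sigma)}(N^\beta\,\cdot)\star|u|^2\big) - \int_{\mathbb R^2}|u|^4\right| \le CN^{-\beta}\|u\|_{H^1}^4,$$
together with the analogous bound for the cross term built from $|u_1|^2,|u_2|^2$. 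To see this I would rescale the convolution variable and write the difference as $\iint w^{(\sigma)}(z)\big[|u(x-N^{-\beta}z)|^2-|u(x)|^2\big]|u(x)|^2\,dz\,dx$, then control the inner $x$-integral by Cauchy--Schwarz as $\||u(\cdot-h)|-|u|\|_{L^2}\,\big\|(|u(\cdot-h)|+|u|)|u|^2\big\|_{L^2}$ with $h=N^{-\beta}z$. The first factor is $\le|h|\,\|\nabla u\|_{L^2}$ (the $L^2$ translation estimate combined with $\||u|-|v|\|_{L^2}\le\|u-v\|_{L^2}$), while the second is $\le 2\|u\|_{L^6}^3\le C\|u\|_{H^1}^3$ by H\"older and the Sobolev embedding $H^1(\mathbb R^2)\hookrightarrow L^6$. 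Integrating against $w^{(\sigma)}(z)\,dz$ and invoking the first-moment bound $\int_{\mathbb R^2}(1+|z|)w^{(\sigma)}(z)\,dz<\infty$ from \eqref{assumption:potential-2} produces the gain $N^{-\beta}$.

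The upper bound is then immediate: I would insert an NLS minimizer (or near-minimizer) $(u_1,u_2)$, which exists in the present stable regime, as a trial pair and use the comparison to get $E_N^{\rm H}\le\mathcal E_N^{\rm H}(u_1,u_2)\le\mathcal E^{\rm NLS}(u_1,u_2)+CN^{-\beta}=E^{\rm NLS}+CN^{-\beta}$.

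For the lower bound the key preliminary step, and the main obstacle, is a uniform-in-$N$ $H^1$ bound on the Hartree minimizers $(u_{1,N},u_{2,N})$ (which exist by \eqref{assumption:potential-2}). Here I would use that, by Young's inequality and the normalization $\int_{\mathbb R^2}N^{2\beta}w^{(\sigma)}(N^\beta x)\,dx=1$, every smeared self-interaction is dominated by its local counterpart, $\int|u|^2\big(N^{2\beta}w^{(\sigma)}(N^\beta\,\cdot)\star|u|^2\big)\le\|u\|_{L^4}^4$, and the cross term by $\|u_1\|_{L^4}^2\|u_2\|_{L^4}^2$. Feeding these into $\mathcal E_N^{\rm H}$ and applying the Gagliardo--Nirenberg inequality \eqref{ineq:GN-0} in the form $\|u_i\|_{L^4}^4\le\tfrac{2}{a_*}\|\nabla u_i\|_{L^2}^2$ (recall $\|u_i\|_{L^2}=1$) reduces matters to the quadratic form $\tfrac{c_1(a_*-a_1)}{a_*}t_1^2+\tfrac{c_2(a_*-a_2)}{a_*}t_2^2-\tfrac{2c_1c_2a_{12}}{a_*}t_1t_2$ in $t_i=\|\nabla u_i\|_{L^2}$. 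The stability hypothesis $a_{12}<\sqrt{c_1^{-1}c_2^{-1}(a_*-a_1)(a_*-a_2)}$ (respectively $a_{12}<0$, where the cross term is nonnegative and may be discarded) is precisely what makes this form positive definite, so $\mathcal E_N^{\rm H}(u_1,u_2)\ge\delta\big(\|\nabla u_1\|_{L^2}^2+\|\nabla u_2\|_{L^2}^2\big)$ for some $\delta>0$ independent of $N$. Together with the upper bound $E_N^{\rm H}\le C$ this yields $\sup_N\|u_{i,N}\|_{H^1}<\infty$.

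With this bound the lower bound closes directly: applying the quantitative comparison to $(u_{1,N},u_{2,N})$ gives $E_N^{\rm H}=\mathcal E_N^{\rm H}(u_{1,N},u_{2,N})\ge\mathcal E^{\rm NLS}(u_{1,N},u_{2,N})-CN^{-\beta}\ge E^{\rm NLS}-CN^{-\beta}$, and combining the two bounds proves $|E_N^{\rm H}-E^{\rm NLS}|\le CN^{-\beta}\to0$. I expect the only genuinely delicate point to be the uniform coercivity, since it is there that the sharp stability threshold on $(a_1,a_2,a_{12})$ enters and must be shown to survive the smearing uniformly in $N$; the translation/first-moment estimate and the Sobolev bookkeeping are routine.
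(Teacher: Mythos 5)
Your proposal is correct and follows essentially the same route as the paper's proof: a quantitative $O(N^{-\beta})$ comparison between the smeared and local interaction terms resting on the first-moment condition $(1+|x|)w^{(\sigma)}\in L^1$, an upper bound obtained by inserting an NLS minimizer as a trial pair, and a lower bound resting on a uniform-in-$N$ $H^1$ bound for the Hartree minimizers derived from the Gagliardo--Nirenberg inequality together with the stability condition on $(a_1,a_2,a_{12})$. The only differences are cosmetic: the paper proves the comparison estimate via the fundamental theorem of calculus and the bound $|\nabla|u|^2|\le 2|\nabla u|\,|u|$ instead of your translation estimate, and it encodes the positive definiteness of your quadratic form by choosing a splitting parameter $\delta\in\left(\tfrac{c_1a_{12}}{a_*-a_2},\tfrac{a_*-a_1}{c_2a_{12}}\right)$ in the Cauchy--Schwarz step, an interval which is nonempty under exactly your discriminant condition.
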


\begin{proof}
	We start with the upper bound. Recall that $\int_{\mathbb R ^{2}}w^{(\sigma)}(x){\rm d}x = 1$, for $\sigma\in\{1,2,12\}$. By introducing the variable $z=N^\beta(x-y)$, we write
	\begin{align}
	& \iint_{\mathbb R^{2}\times \mathbb R^{2}}|u_i(x)|^{2}N^{2\beta}w^{(\sigma)}(N^\beta(x-y))|u_j(y)|^{2}{\rm d}x{\rm d}y - \int_{\mathbb R^{2}}|u_i(x)|^{2}|u_j(x)|^{2}{\rm d}x \nonumber\\
	& \quad  = \iint_{\mathbb R^{2}\times \mathbb R^{2}}|u_i(x)|^{2}w^{(\sigma)}(z)\left(|u_j(x-N^{-\beta}z)|^{2}-|u_j(x)|^{2}\right){\rm d}x{\rm d}z \nonumber\\
	& \quad  = \iint_{\mathbb R^{2}\times \mathbb R^{2}}|u_i(x)|^{2}w^{(\sigma)}(z)\left(\int_0^1 (\nabla|u_j|^{2})(x-tN^{-\beta}z)\cdot(N^{-\beta}z){\rm d}t\right){\rm d}x{\rm d}z, \label{upper-bound:hartree-nls-2}
	\end{align}
	where we made use of the notation
	\begin{equation}\label{notation:sigma}
	\sigma=\begin{cases}1 & \text{if } i=1=j,\\ 2 & \text{if } i=2=j,\\ 12 & \text{if } i=1,j=2 \text{ or } i=2,j=1.\end{cases}
	\end{equation}
	By the diamagnetic inequality $|\nabla |u_j|^{2}|\leq 2|\nabla u_j|\cdot |u_j|$ and H\"{o}lder's inequality, we have
	\begin{equation}\label{upper-bound:hartree-nls-3}
	\int_{\mathbb R^{2}}|u_i(x)|^{2} (\nabla|u_j|^{2})(x-tN^{-\beta}z){\rm d}x \leq 2\|u_i\|_{L^{6}}^{2}\|u_j\|_{L^{6}}\|\nabla u_j\|_{L^{2}}.
	\end{equation}
	From \eqref{upper-bound:hartree-nls-2}, \eqref{upper-bound:hartree-nls-3} and noting that $(1+|z|)w^{(\sigma)}(z)\in L^1(\mathbb R^{2})$, for $\sigma\in\{1,2,12\}$, we obtain
	\begin{equation}\label{upper-bound:hartree-nls-4}
	|\mathcal{E}_{N}^{\rm H}(u_{1},u_{2})-\mathcal{E}^{\rm NLS}(u_{1},u_{2})| \leq N^{-\beta}\mathcal{R}_{2}
	\end{equation}
	where we abbreviated by $\mathcal{R}_{2}$ the error terms
	$$
	\mathcal{R}_{2} = c_{1}a_{1}\|u_{1}\|_{L^{6}}^3\|\nabla u_{1}\|_{L^{2}} + c_{2}a_{2} \|u_{2}\|_{L^{6}}^3\|\nabla u_{2}\|_{L^{2}} + 2c_{1}c_{2}a_{12} \|u_{1}\|_{L^{6}}^{2}\|u_{2}\|_{L^{6}}\|\nabla u_{2}\|_{L^{2}}. 
	$$
	Let $(u_{1},u_{2})$ be a ground state for $E^{\rm NLS}$. It follows from \eqref{upper-bound:hartree-nls-4} that
	\begin{equation}\label{upper:hartree-nls-energy}
	E_{N}^{\rm H} \leq E^{\rm NLS} + N^{-\beta}\mathcal{R}_{2}.
	\end{equation}
	
	Now, we come to the lower bound. Let $(u_{1,N},u_{2,N})$ be a ground state for $E_{N}^{\rm H}$. We first note that, for any $\kappa>0$ and for $\sigma$ given by \eqref{notation:sigma}, we have 
	\begin{equation}\label{ineq:interaction}
	\iint_{\mathbb R^{2} \times \mathbb R^{2}}|u_{i}(x)|^{2}w_{N}^{(\sigma)}(x-y)|u_{j}(y)|^{2}{\rm d}x \leq \frac{\kappa}{2}\int_{\mathbb R^{2}}|u_i(x)|^{4}{\rm d}x + \frac{1}{2\kappa}\int_{\mathbb R^{2}}|u_j(x)|^{4}{\rm d}x.
	\end{equation}
	This follows from the Cauchy--Schwarz inequality and the fact that $\int_{\mathbb R^2}w_{N}^{(\sigma)}(x){\rm d}x = 1$. Applying \eqref{ineq:interaction} several times to $\kappa=1$ with $i=1=j$ and $i=2=j$, we can estimate the intra-species interactions in \eqref{functional:hartree-tow-component} by those in \eqref{functional:nls-two-component} from below. To control the inter-species interactions, we proceed as follows. If $0<a_{1}<a_{*}$, $0<a_{2}<a_{*}$ and $0<a_{12}<\sqrt{c_{1}^{-1}c_{2}^{-1}(a_{*}-a_{1})(a_{*}-a_{2})}$, we can take $\delta \in  \left(\dfrac{c_{1}a_{12}}{a_{*}-a_{2}};\dfrac{a_{*}-a_{1}}{c_{2}a_{12}}\right)$. Then, we use \eqref{ineq:GN-0} and apply \eqref{ineq:interaction} to $\kappa=\delta$ with $i=1$ and $j=2$, we  obtain
	\begin{equation}\label{hartree-nls:inter-specie}
	E_{N}^{\rm H} \geq \frac{c_{1}}{a_{*}}(a_{*}-a_{1}-c_{2}a_{12}\delta) \int_{\mathbb R^2} |\nabla u_{1,N}|^{2}{\rm  d}x + \frac{c_{2}}{a_{*}}(a_{*}-a_{2}-c_{1}a_{12}\delta^{-1}) \int_{\mathbb R^2} |\nabla u_{2,N}|^{2}{\rm  d}x.
	\end{equation}
	In the case of repulsive inter-species interactions, i.e., $a_{12}<0$, we can neglect it and also obtain \eqref{hartree-nls:inter-specie} without the $a_{12}$-term. In any case, \eqref{upper:hartree-nls-energy} and \eqref{hartree-nls:inter-specie} imply that $(u_{1,N},u_{2,N})$ is bounded in $H^{1}(\mathbb R^2) \times H^{1}(\mathbb R^2)$ uniformly in $N$. By Sobolev's inequality, those two  components of the Hartree ground state are bounded in $L^{p}(\mathbb R^2)$ uniformly in $N$, for any $2\leq p <\infty$. Now, one might use \eqref{upper-bound:hartree-nls-2} and \eqref{upper-bound:hartree-nls-3} to compare the inter-species interactions between \eqref{functional:hartree-tow-component} and \eqref{functional:nls-two-component}. We obtain the final estimate
	\begin{equation}\label{lower:hartree-nls-energy}
	E_{N}^{\rm H} \geq E^{\rm NLS} - N^{-\beta}\mathcal{R}_{3}
	\end{equation}
	where we abbreviated by $\mathcal{R}_{3}$ the error terms
	$$
	\mathcal{R}_{3} = 2c_{1}c_{2}a_{12} \|u_{1,N}\|_{L^{6}}^{2}\|u_{2,N}\|_{L^{6}}\|\nabla u_{2,N}\|_{L^{2}}. 
	$$
	
	The convergence of the Hartree energy to the NLS energy follows from \eqref{upper:hartree-nls-energy} and \eqref{lower:hartree-nls-energy}. This  completes the proof of Lemma \ref{lem:energy-hartree-nls}.
\end{proof}

\section{Collapse of the Many-Body System}\label{sec:blow-up-gp}

Through this section, we assume that the potentials $V_{1}$ and $V_{2}$ are of the forms \eqref{potentials:external}, i.e.,
$$
V_{i}(x) = |x-z_{i}|^{p_{i}}, \quad i\in\{1,2\},
$$
where $z_{1},z_{2} \in \mathbb R^{2}$ and $p_{1},p_{2} > 0$. 

\subsection{Proofs of Theorems \ref{thm:blow-up-bec-1-2-I} and \ref{thm:blow-up-bec-12-I}} \label{subsec:proof-I}

The purpose of this subsection is to prove Theorems \ref{thm:blow-up-bec-1-2-I} and \ref{thm:blow-up-bec-12-I} which give the blow-up profile for the many-body system \eqref{hamiltonian} when the total interaction strength of intra-species and inter-species tends to a critical number. We first revisit the blow-up phenomenon for the NLS minimization problem \eqref{energy:nls-two-component}. In the case $a_{12}>0$, the existence of the NLS ground states follows the standard direct method in the calculus of variations. The following is taken from \cite[Theorem 1.1]{GuoZenZho-17-dcds}, but the statement is adapted to our model \eqref{energy:nls-two-component}.

\begin{theorem}\label{thm:existence-nls-ground-states-I} 
	We have the followings
	\begin{itemize}
		\item[(i)] If $0<a_{1},a_{2}<a_{*}$ and $0<a_{12}<\sqrt{c_{1}^{-1}c_{2}^{-1}(a_{*}-a_{1})(a_{*}-a_{2})}$, then $E^{\rm NLS}\geq 0$ and it has at least one ground state.
		\item[(ii)] If either $a_{1}>a_{*}$ or $a_{2}>a_{*}$ or $a_{12}>2^{-1}c_{1}^{-1}c_{2}^{-1}(a_{*}-c_{1}a_{1}-c_{2}a_{2})$, then $E^{\rm NLS}=-\infty$.
	\end{itemize}
\end{theorem}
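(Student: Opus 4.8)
The plan is to treat the two assertions separately: the lower bound and existence in (i) via the direct method anchored on the Gagliardo--Nirenberg inequality \eqref{ineq:GN-0}, and the instability in (ii) via explicit concentrating trial states built from the optimizer $Q_0$.

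For (i), I would first establish the quantitative lower bound $\mathcal{E}^{\rm NLS}(u_1,u_2)\geq 0$. Writing $t_i=\|\nabla u_i\|_{L^2}^2$ and using $\|u_i\|_{L^2}=1$, the inequality \eqref{ineq:GN-0} gives $\tfrac{a_i}{2}\int|u_i|^4\leq\tfrac{a_i}{a_*}t_i$, while Cauchy--Schwarz together with \eqref{ineq:GN-0} yields $\int|u_1|^2|u_2|^2\leq\tfrac{2}{a_*}\sqrt{t_1t_2}$. Discarding the nonnegative potential terms, this bounds $\mathcal{E}^{\rm NLS}$ below by $a_*^{-1}$ times the quadratic form $Q(s_1,s_2)=c_1(a_*-a_1)s_1^2-2c_1c_2a_{12}s_1s_2+c_2(a_*-a_2)s_2^2$ in the variables $s_i=\sqrt{t_i}$. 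Its discriminant is nonpositive precisely when $a_{12}^2\leq c_1^{-1}c_2^{-1}(a_*-a_1)(a_*-a_2)$, so the hypothesis makes $Q$ strictly positive definite and hence $E^{\rm NLS}\geq 0$.

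The strict positive definiteness is also the engine for existence. After fixing a finite-energy trial pair (e.g. Gaussians) to see that $E^{\rm NLS}<\infty$, I would run the direct method. Along a minimizing sequence $(u_1^{(n)},u_2^{(n)})$ the refined bound $\mathcal{E}^{\rm NLS}\geq a_*^{-1}Q(s_1,s_2)+c_1\int V_1|u_1|^2+c_2\int V_2|u_2|^2$ with $Q\geq\lambda(t_1+t_2)$, $\lambda>0$, controls simultaneously the kinetic energies and the potential weights $\int V_i|u_i^{(n)}|^2$; combined with $\|u_i^{(n)}\|_{L^2}=1$ this yields an $H^1$-bound. The confining condition $V_i\to+\infty$ then upgrades weak $H^1$-convergence to strong $L^2$-convergence (no mass escapes to infinity), which through the 2D interpolation $\|v\|_{L^4}^2\leq C\|v\|_{L^2}\|\nabla v\|_{L^2}$ becomes strong $L^4$-convergence. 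Weak lower semicontinuity of the kinetic and potential parts, continuity of the quartic and cross terms, and preservation of the $L^2$-constraint then show that the weak limit is a minimizer.

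For (ii), I would exhibit trial states driving $\mathcal{E}^{\rm NLS}$ to $-\infty$, using the mass-preserving scaling $Q_0^\tau(x)=\tau Q_0(\tau x)$, for which $\|\nabla Q_0^\tau\|_{L^2}^2=\tau^2\|\nabla Q_0\|^2$ and $\int|Q_0^\tau|^4=\tau^2\int Q_0^4$, while the potential contributions stay $O(1)$ as $\tau\to\infty$. The identity $\|\nabla Q_0\|^2=\tfrac{a_*}{2}\int Q_0^4$ is what pins down the sign of the leading $\tau^2$ coefficient. If $a_1>a_*$ (resp. $a_2>a_*$), let only the first (resp. second) component concentrate and fix the other; the leading term is $c_1\tfrac{a_*-a_1}{2}\tau^2\int Q_0^4\to-\infty$. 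If instead $a_{12}>2^{-1}c_1^{-1}c_2^{-1}(a_*-c_1a_1-c_2a_2)$, take both components equal to $Q_0^\tau$, so the leading coefficient becomes $\tfrac12\bigl(a_*-c_1a_1-c_2a_2-2c_1c_2a_{12}\bigr)\tau^2\int Q_0^4$, which is negative exactly under this hypothesis. The main obstacle is concentrated in (i): verifying that no mass leaks to infinity along the minimizing sequence and that the quartic and cross nonlinearities pass to the limit—both of which hinge on the strict inequality for $a_{12}$ providing genuine $H^1$-coercivity rather than mere nonnegativity.
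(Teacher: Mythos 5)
Your proposal is correct, but there is nothing in the paper to compare it against: the paper does not prove this theorem, it imports it verbatim from \cite[Theorem 1.1]{GuoZenZho-17-dcds}, remarking only that existence ``follows the standard direct method in the calculus of variations.'' Your argument is exactly the standard proof behind that citation. For (i), the reduction via \eqref{ineq:GN-0} and Cauchy--Schwarz to the quadratic form $Q(s_1,s_2)=c_1(a_*-a_1)s_1^2-2c_1c_2a_{12}s_1s_2+c_2(a_*-a_2)s_2^2$ is the right mechanism: its discriminant condition reproduces precisely the threshold $a_{12}<\sqrt{c_1^{-1}c_2^{-1}(a_*-a_1)(a_*-a_2)}$, strict inequality gives $Q\geq\lambda(t_1+t_2)$ with $\lambda>0$, and this coercivity together with the confining potentials $V_i\to+\infty$ (compact embedding into $L^2$, hence no loss of mass, then strong $L^4$ convergence by 2D interpolation) closes the direct method. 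For (ii), the mass-preserving concentration $\tau Q_0(\tau\cdot)$ with the identity $\|\nabla Q_0\|_{L^2}^2=\tfrac{a_*}{2}\int Q_0^4$ yields exactly the claimed leading coefficients, and $\tfrac12(a_*-c_1a_1-c_2a_2-2c_1c_2a_{12})<0$ is equivalent to the stated hypothesis on $a_{12}$.

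One small point you should make explicit in (ii), case $a_1>a_*$ (or $a_2>a_*$): the hypothesis there places no sign restriction on $a_{12}$, so when only one component concentrates you must check that the cross term $-c_1c_2a_{12}\int|u_1|^2|u_2|^2$ stays $O(1)$ rather than merely noting the trapping terms are $O(1)$. This is immediate if the fixed component is chosen bounded, since then $\int|Q_0^\tau|^2|u_2|^2\leq\|u_2\|_{L^\infty}^2$ uniformly in $\tau$, but as written your argument only addresses the potential contributions.
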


As pointed out in \cite[Theorem 1.2 and Theorem 1.3]{GuoZenZho-17-dcds}, when $0<a_{1},a_{2}<a_{*}$ and 
$$
\sqrt{c_{1}^{-1}c_{2}^{-1}(a_{*}-a_{1})(a_{*}-a_{2})} \leq a_{12} \leq 2^{-1}c_{1}^{-1}c_{2}^{-1}(a_{*}-c_{1}a_{1}-c_{2}a_{2}),
$$
then there may exist ground states for \eqref{energy:nls-two-component}, under additional assumptions on $(a_{1},a_{2},a_{12})$, especially when $z_{1} \ne z_{2}$. Therefore, it is reasonable to consider the case $z_{1} \equiv z_{2}$ in order to study the limit behavior of ground states when they do not exist at the threshold. The following is taken from \cite[Theorem 1.5]{GuoZenZho-17-dcds}.

\begin{theorem} \label{thm:blow-up-nls-ground-states-I} 
	Assume that $0<a_{12}<a_{*}\min\{c_{1}^{-1},c_{2}^{-1}\}$ is fixed and $V_{1}$, $V_{2}$ are defined as in \eqref{potentials:external} with $z_{1} = 0 = z_{2}$. Then, for every sequence $(a_{1,N},a_{2,N})\nearrow (a_{*}-c_{2}a_{12},a_{*}-c_{1}a_{12})$ as $N\to\infty$, we have
	\begin{equation}\label{behavior:nls-energy-1-2-I}
	E^{\rm NLS} = (a_{*}-a_{N})^{\frac{p_{0}}{p_{0}+2}} \left(\frac{p_{0}+2}{p_{0}} \cdot \frac{\Lambda^{2}}{a_{*}}+o(1)_{N\to\infty}\right),
	\end{equation}
	where $a_{N} = c_{1}a_{1,N} + c_{2}a_{2,N} + 2c_{1}c_{2}a_{12}$, $p_{0} = \min\{p_{1},p_{2}\}$ and $\Lambda$ is given by \eqref{lambda:1-2-I}.
	
	In addition, assume that $(u_{1,N},u_{2,N})$ is a positive ground state for $E^{\rm NLS}$ in \eqref{energy:nls-two-component} for each $0<a_{1,N}<a_{*}-c_{2}a_{12}$ and $0<a_{2,N}<a_{*}-c_{1}a_{12}$. Then, up to extraction of a subsequence, we have
	$$
	\lim_{N\to\infty} \ell_{N}^{-1} u_{1,N}(\ell_{N}^{-1}\cdot) = Q_{0} = \lim_{N\to\infty} \ell_{N}^{-1}u_{2,N}(\ell_{N}^{-1}\cdot)
	$$
	strongly in $H^{1}(\mathbb R^2)$, where $\ell_{N} = \Lambda(a_{*}-a_{N})^{-\frac{1}{p_{0}+2}}$ and $Q_{0}$ is given by \eqref{GN:normalized}.
	
\end{theorem}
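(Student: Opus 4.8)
The plan is to establish the energy asymptotics \eqref{behavior:nls-energy-1-2-I} by a matching pair of upper and lower bounds, and then to upgrade this energy information into the strong convergence of the rescaled ground states by an equality-case analysis of the inequalities used. The single-component blow-up asymptotics \eqref{lambda:II} serve as the main external input: the essential point is that, at the threshold $(a_{1,N},a_{2,N})\nearrow(a_*-c_2a_{12},a_*-c_1a_{12})$, the coupled system degenerates into two critical single-component problems that must, moreover, collapse on a \emph{common} scale.

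For the upper bound I would test \eqref{energy:nls-two-component} with the equal-profile, equal-scale trial state $u_1=u_2=u_\tau$, where $u_\tau(x)=\tau a_*^{-1/2}Q(\tau x)$ is $L^2$-normalized and $\tau>0$ is a free dilation parameter. Using $\|\nabla Q\|_{L^2}^2=\|Q\|_{L^2}^2=\tfrac12\|Q\|_{L^4}^4=a_*$ one finds $\|\nabla u_\tau\|_{L^2}^2=\tau^2$, $\|u_\tau\|_{L^4}^4=2\tau^2/a_*$, and $\int V_i|u_\tau|^2\,{\rm d}x=a_*^{-1}\tau^{-p_i}\int|x|^{p_i}|Q|^2\,{\rm d}x$, while the cross integral $\int|u_\tau|^2|u_\tau|^2\,{\rm d}x=\|u_\tau\|_{L^4}^4=2\tau^2/a_*$ precisely because the two scales coincide. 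Collecting the $\tau^2$-terms yields the factor $\tau^2a_*^{-1}(a_*-a_N)$ with $a_N=c_1a_1+c_2a_2+2c_1c_2a_{12}$, and the trapping contribution reduces at leading order to $\nu a_*^{-1}\tau^{-p_0}\int|x|^{p_0}|Q|^2\,{\rm d}x$. Optimizing a function of the form $A\tau^2+B\tau^{-p_0}$ over $\tau$ (the optimizer being $\tau=\ell_N$) reproduces exactly the right-hand side of \eqref{behavior:nls-energy-1-2-I}. I would stress that choosing equal scales is what makes the attractive cross term contribute at full strength, so that the bound is sharp.

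For the lower bound I would decouple the two components by Cauchy--Schwarz, $\int|u_1|^2|u_2|^2\le\frac{\delta}{2}\|u_1\|_{L^4}^4+\frac{1}{2\delta}\|u_2\|_{L^4}^4$, exactly as in \eqref{hartree-nls:inter-specie}. This bounds $\mathcal E^{\rm NLS}(u_1,u_2)$ from below by $c_1\mathcal F_{1,N}(u_1)+c_2\mathcal F_{2,N}(u_2)$, where $\mathcal F_{i,N}$ is the single-component functional \eqref{functional:nls-one-component} with effective couplings $\tilde a_{1,N}=a_{1,N}+c_2a_{12}\delta$ and $\tilde a_{2,N}=a_{2,N}+c_1a_{12}\delta^{-1}$. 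The admissibility window for $\delta$ that keeps both effective couplings below $a_*$ shrinks to the single point $\delta=1$ at the threshold, so at $\delta=1$ both effective couplings tend to $a_*$ and \eqref{lambda:II} applies to each piece; summing and using the identity $a_*-a_N=c_1(a_*-\tilde a_{1,N})+c_2(a_*-\tilde a_{2,N})$ matches the upper bound. I expect this to be the main obstacle: one must track the joint rate at which the two effective couplings approach $a_*$ and show that no splitting beats the balanced one, which is what forces both components onto the common blow-up scale $\ell_N=\Lambda(a_*-a_N)^{-1/(p_0+2)}$ rather than onto their individual single-component scales.

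Finally, to obtain the profile convergence I would run the standard equality-case argument. Since $E^{\rm NLS}\to 0$ while no minimizer exists at the threshold, the kinetic energies must diverge; rescaling $w_{i,N}(x)=\ell_N^{-1}u_{i,N}(\ell_N^{-1}x)$ (which preserves the $L^2$-norm and keeps $\|\nabla w_{i,N}\|_{L^2}$ bounded) and passing to a subsequence, the matching of the two energy bounds forces saturation of both the Gagliardo--Nirenberg inequality \eqref{ineq:GN-0} and the Cauchy--Schwarz bound, so each $w_{i,N}$ is an asymptotic optimizer of \eqref{ineq:GN-0}. By the uniqueness of the optimizer $Q$ up to dilation and translation, together with the trap pinning the centers at $z_1=z_2=0$ and the dilation at $\ell_N$, one concludes that $w_{i,N}\to Q_0$ strongly in $H^1(\mathbb R^2)$ for $i\in\{1,2\}$, with strong convergence coming from the convergence of the $H^1$ norms along the extracted subsequence. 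This is precisely the stated convergence of the rescaled ground states.
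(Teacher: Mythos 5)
Your upper bound is correct and sharp: the equal-profile, equal-scale trial state does reproduce the right-hand side of \eqref{behavior:nls-energy-1-2-I}, for exactly the reason you give. The genuine gap is in your lower bound, and it sits precisely where you anticipated the main obstacle. After decoupling at $\delta=1$ you get $E^{\rm NLS}\ge c_1E_1(\tilde a_{1,N})+c_2E_2(\tilde a_{2,N})$, where $E_i$ denotes the single-component energy \eqref{energy:nls-one-component} with coupling $\tilde a_{1,N}=a_{1,N}+c_2a_{12}$, $\tilde a_{2,N}=a_{2,N}+c_1a_{12}$, and by \eqref{lambda:II} this sum behaves like
\[
c_1\,\frac{p_1+2}{p_1}\frac{\Lambda_1^2}{a_*}\,s^{\frac{p_1}{p_1+2}}+c_2\,\frac{p_2+2}{p_2}\frac{\Lambda_2^2}{a_*}\,t^{\frac{p_2}{p_2+2}},\qquad s=a_*-\tilde a_{1,N},\quad t=a_*-\tilde a_{2,N}.
\]
This does \emph{not} match the upper bound $\frac{p_0+2}{p_0}\frac{\Lambda^2}{a_*}\,(c_1s+c_2t)^{\frac{p_0}{p_0+2}}$. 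First, $x\mapsto x^{\theta}$ with $\theta=\frac{p_0}{p_0+2}\in(0,1)$ is strictly concave, so $c_1s^{\theta}+c_2t^{\theta}\le(c_1s+c_2t)^{\theta}$ with asymptotic equality only when $s/t\to1$; since the theorem allows \emph{arbitrary} sequences $(a_{1,N},a_{2,N})$, unbalanced regimes occur, and for $p_1=p_2$, $s\ll t$ your lower bound falls short of the upper bound by the fixed factor $c_2^{2/(p_0+2)}<1$. Second, even for balanced sequences $s=t$, if $p_1<p_2$ then $\nu=c_1$ and $\Lambda^2=c_1^{2/(p_1+2)}\Lambda_1^2$, whereas your bound produces the strictly smaller constant $c_1\Lambda_1^2$. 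The failure is structural, not a matter of bookkeeping: the Cauchy--Schwarz step discards exactly the square $\frac{c_1c_2a_{12}}{2}\||u_1|^2-|u_2|^2\|_{L^2}^2$ that forces the two components onto a common blow-up scale, after which each decoupled problem happily optimizes at its \emph{own} scale $(a_*-\tilde a_{i,N})^{-1/(p_i+2)}$ against its \emph{own} trap $V_i$; the sum of those minima is genuinely below the true energy, and no choice of $\delta$ (which near threshold is forced to $1$ anyway) repairs this. A matching lower bound must use the positivity of the discarded square quantitatively — a scale mismatch makes it of larger order than the energy itself — and this is why the proof cited by the paper (Guo--Zeng--Zhou) runs through a detailed blow-up analysis of the Euler--Lagrange system rather than a two-sided variational sandwich.

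Two further remarks. The paper itself does not prove Theorem \ref{thm:blow-up-nls-ground-states-I}: it is quoted from \cite[Theorem 1.5]{GuoZenZho-17-dcds}, and the saturation-type arguments the paper does carry out (Lemma \ref{lem:modified-hartree-energy-1-2-I} and Theorem \ref{thm:blowup-modified-hartree-approximate-1-2-I}) take \eqref{behavior:nls-energy-1-2-I} as an \emph{input}, so they cannot be invoked to close your lower bound. Your final equality-case step is the right mechanism for the profile convergence — it parallels the paper's Theorem \ref{thm:blowup-modified-hartree-approximate-1-2-I}, including the identification of the dilation via $\inf_{\lambda>0}\big(\lambda^2+\nu\lambda^{-p_0}\int_{\mathbb R^2}|x|^{p_0}|Q|^2\,{\rm d}x\big)$ and of the translation via rearrangement — but it presupposes the matched two-sided energy expansion and therefore inherits the gap; moreover, saturation arguments of this kind deliver $L^2$ convergence of the rescaled profiles, and the asserted strong $H^1$ convergence needs an additional step (convergence of each energy term separately, hence of the kinetic parts), which for exact ground states is exactly what the Euler--Lagrange analysis of \cite{GuoZenZho-17-dcds} provides.
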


Theorem \ref{thm:blow-up-nls-ground-states-I}  gives the blow-up profile for the NLS \emph{exact} ground states when they exist. The proof in \cite{GuoZenZho-17-dcds} is based on a detailed analysis of the Euler--Lagrange equation associated with the ground states. In order to establish the blow-up behavior of the many-body ground states via the Feynman--Hellmann-type argument, we need to extend that blow-up result to the NLS \emph{approximate} ground states. This is done in the one-component setting \cite{LewNamRou-17-proc}. However, the situation is more complicated in the two-component case, due to the presence of the (attractive) inter-species interactions. In fact, we will need to consider the blow-up behavior of the following (modified) Hartree variational problem
\begin{equation}\label{energy:modified-hartree-two-component}
E_{N}^{\rm mH} := \inf_{\substack{u_{1},u_{2}\in H^{1}(\mathbb R^{2})\\ \|u_{1}\|_{L^{2}} = 1 = \|u_{2}\|_{L^{2}}}}\mathcal{E}_{N}^{\rm mH}(u_{1},u_{2}),
\end{equation}
where the (modified) Hartree functional is given by
\begin{align}
\mathcal{E}_{N}^{\rm mH} (u_{1},u_{2}) & = c_{1}\int_{\mathbb R^{2}}\left[|\nabla u_{1}(x)|^{2}+V_{1}(x)|u_{1}(x)|^{2}-\frac{a_{1}}{2}|u_{1}(x)|^{4}\right]{\rm d}x \nonumber \\
& \quad  + c_{2}\int_{\mathbb R^{2}}\left[|\nabla u_{2}(x)|^{2}+V_{2}(x)|u_{2}(x)|^{2}-\frac{a_{2}}{2}|u_{2}(x)|^{4}\right]{\rm d}x \nonumber \\
& \quad  - c_{1}c_{2}a_{12}\int_{\mathbb R^{2}}|u_{1}(x)|^{2}(w_{N}^{(12)}\star|u_{2}|^{2})(x){\rm d}x. \label{functional:modified-hartree}
\end{align}
One can see that \eqref{functional:modified-hartree} interpolates between Hartree \eqref{functional:hartree-tow-component} and NLS \eqref{functional:nls-two-component}. The asymptotic formula for the (modified) Hartree energy will be given in the following.

\begin{lemma}\label{lem:modified-hartree-energy-1-2-I}
	Assume that $0<a_{12}<\min\{c_{1}^{-1},c_{2}^{-1}\}$ is fixed and $V_{1}$, $V_{2}$ are defined as in \eqref{potentials:external} with $z_{1} = 0 = z_{2}$. Let $(a_{1},a_{2}) := (a_{1,N},a_{2,N})\nearrow (a_{*}-c_{2}a_{12},a_{*}-c_{1}a_{12})$ such that $a_{N} := c_{1}a_{1,N} + c_{2}a_{2,N} + 2c_{1}c_{2}a_{12} = a_{*}-N^{-\gamma}$ with
	$$
	0<\gamma < \frac{p_{0}+2}{p_{0}+3}\beta ,\quad p_{0}=\min\{p_{1},p_{2}\}.
	$$
	Then, we have, with $\Lambda$ given by \eqref{lambda:1-2-I},
	\begin{equation}\label{asymptotic:modifiled-hartree-energy-1-2-I}
	E_{N}^{\rm mH} = E^{\rm NLS} + o(E^{\rm NLS})_{N\to\infty} = (a_{*}-a_{N})^{\frac{p_{0}}{p_{0}+2}} \left(\frac{p_{0}+2}{p_{0}} \cdot \frac{\Lambda^{2}}{a_{*}}+o(1)_{N\to\infty}\right).
	\end{equation}
\end{lemma}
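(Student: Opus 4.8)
The second identity in \eqref{asymptotic:modifiled-hartree-energy-1-2-I} is exactly the NLS asymptotics \eqref{behavior:nls-energy-1-2-I} of Theorem \ref{thm:blow-up-nls-ground-states-I}, so the whole content of the lemma is the first identity $E_N^{\rm mH}=E^{\rm NLS}+o(E^{\rm NLS})$. Since $\mathcal{E}_N^{\rm mH}$ differs from $\mathcal{E}^{\rm NLS}$ only through the inter-species term, the plan is to sandwich $E_N^{\rm mH}$ between $E^{\rm NLS}$ and $E^{\rm NLS}$ up to an error that, once measured at the blow-up length $\ell_N=\Lambda(a_*-a_N)^{-\frac{1}{p_0+2}}$, is $o(E^{\rm NLS})$. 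I will use repeatedly that $E^{\rm NLS}\sim\ell_N^{-p_0}$ and that a function concentrating at scale $\ell_N^{-1}$, i.e. of the form $\ell_N Q_0(\ell_N\,\cdot)$, has $\|\nabla\,\cdot\,\|_{L^2}^2\sim\ell_N^2$ and $\|\cdot\|_{L^6}^6\sim\ell_N^4$.

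For the upper bound I would take the explicit trial pair $u_1=u_2=\phi_\tau$ with $\phi_\tau(x)=\tau Q_0(\tau x)$ and $\|\phi_\tau\|_{L^2}=1$, where $Q_0$ is given by \eqref{GN:normalized}. A direct computation gives $\mathcal{E}^{\rm NLS}(\phi_\tau,\phi_\tau)=\tau^2\frac{a_*-a_N}{a_*}+c_1\mu_1\tau^{-p_1}+c_2\mu_2\tau^{-p_2}$ with $\mu_i=\int_{\mathbb R^2}|y|^{p_i}|Q_0(y)|^2\,{\rm d}y$, whose minimization over $\tau$ reproduces the right-hand side of \eqref{behavior:nls-energy-1-2-I} at the optimal $\tau\sim\ell_N$. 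Replacing the delta inter-species term by the $w_N^{(12)}$-convolution costs, by \eqref{upper-bound:hartree-nls-2}--\eqref{upper-bound:hartree-nls-3}, at most $CN^{-\beta}\|\phi_\tau\|_{L^6}^3\|\nabla\phi_\tau\|_{L^2}\sim N^{-\beta}\tau^3\sim N^{-\beta}\ell_N^3$. Since $\ell_N^3\sim N^{3\gamma/(p_0+2)}$ and $E^{\rm NLS}\sim N^{-\gamma p_0/(p_0+2)}$, this error is $o(E^{\rm NLS})$ precisely when $\frac{3\gamma}{p_0+2}-\beta<-\frac{\gamma p_0}{p_0+2}$, i.e. $\gamma<\frac{p_0+2}{p_0+3}\beta$, which is the standing hypothesis. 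Hence $E_N^{\rm mH}\le E^{\rm NLS}+o(E^{\rm NLS})$.

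For the lower bound, I first note that \eqref{ineq:interaction} with $\kappa=1$ bounds the inter-species convolution by $\tfrac12(\|u_1\|_{L^4}^4+\|u_2\|_{L^4}^4)$, so $\mathcal{E}_N^{\rm mH}$ dominates a decoupled sum of two one-component NLS functionals with effective couplings $\hat a_1=a_1+c_2a_{12}$ and $\hat a_2=a_2+c_1a_{12}$, both strictly below $a_*$ for each fixed $N$. This guarantees coercivity, hence existence of an mH ground state $(u_{1,N},u_{2,N})$, and via Gagliardo--Nirenberg it also controls the kinetic energies a priori. I would then compare directly: evaluating at the minimizer and using \eqref{upper-bound:hartree-nls-2}--\eqref{upper-bound:hartree-nls-3} once more,
\[
E_N^{\rm mH}=\mathcal E_N^{\rm mH}(u_{1,N},u_{2,N})\ge\mathcal E^{\rm NLS}(u_{1,N},u_{2,N})-CN^{-\beta}\|u_{1,N}\|_{L^6}^2\|u_{2,N}\|_{L^6}\|\nabla u_{2,N}\|_{L^2}\ge E^{\rm NLS}-CN^{-\beta}R_N,
\]
with $R_N=\|u_{1,N}\|_{L^6}^2\|u_{2,N}\|_{L^6}\|\nabla u_{2,N}\|_{L^2}$, and everything reduces to showing $R_N\lesssim\ell_N^3$.

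The crux, and the step I expect to be the main obstacle, is exactly this a priori bound $R_N\lesssim\ell_N^3$, equivalently $\|\nabla u_{i,N}\|_{L^2}^2\lesssim\ell_N^2$ and $\|u_{i,N}\|_{L^6}^6\lesssim\ell_N^4$ for \emph{both} $i$. The decoupled coercivity bound only yields $c_i(a_*-a_i-c_{3-i}a_{12})\|\nabla u_{i,N}\|_{L^2}^2\lesssim E_N^{\rm mH}$, which degenerates whenever one effective coupling saturates faster than the combined $a_N$; correspondingly, the Cauchy--Schwarz splitting of the cross term is strictly lossy unless the two components coincide, so the attractive coupling must be retained rather than discarded. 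I would instead rescale, setting $v_{i,N}(x)=\ell_N^{-1}u_{i,N}(\ell_N^{-1}x)$, and show that $(v_{1,N},v_{2,N})$ stays bounded in $H^1(\mathbb R^2)\times H^1(\mathbb R^2)$ and in fact converges to $(Q_0,Q_0)$, by running the blow-up analysis of \cite{GuoZenZho-17-dcds} for the rescaled functional $\mathcal E_N^{\rm mH}$. In these variables the external trap of a component with $p_i>p_0$ scales out, so that component is confined only through the attractive inter-species interaction with the strongly concentrated $p_0$-component; controlling this genuinely two-component confinement, together with the fact that the rescaled interaction range $N^{-\beta}\ell_N\to0$ turns $w_N^{(12)}$ into a Dirac delta at the blow-up scale, is the technical heart. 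Granting $R_N\lesssim\ell_N^3$, the error $CN^{-\beta}R_N\lesssim N^{-\beta}\ell_N^3=o(E^{\rm NLS})$ under $\gamma<\frac{p_0+2}{p_0+3}\beta$, and combined with the upper bound this gives $E_N^{\rm mH}=E^{\rm NLS}+o(E^{\rm NLS})$.
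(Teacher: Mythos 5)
Your upper bound is correct and is essentially the paper's: a trial pair concentrating at scale $\ell_N$ (the paper plugs the NLS minimizer into $\mathcal{E}_N^{\rm mH}$, you take the explicit pair $u_1=u_2=\phi_\tau$ with $\phi_\tau=\tau Q_0(\tau\,\cdot)$; both work), combined with the comparison \eqref{upper-bound:hartree-nls-2}--\eqref{upper-bound:hartree-nls-3}, gives $E_N^{\rm mH}\le E^{\rm NLS}+o(E^{\rm NLS})+CN^{-\beta}\ell_N^3$, and $N^{-\beta}\ell_N^3=o(E^{\rm NLS})$ exactly under $\gamma<\frac{p_0+2}{p_0+3}\beta$. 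Your reduction of the lower bound to the a priori estimate $R_N\lesssim\ell_N^3$, i.e.\ to $H^1$-boundedness of \emph{both} rescaled components $\tilde u_{i,N}=\ell_N^{-1}u_{i,N}(\ell_N^{-1}\cdot)$, is also exactly the paper's reduction, and your diagnosis is accurate: the symmetric $\kappa=1$ Cauchy--Schwarz decoupling only controls the component whose effective coupling $\tilde a_{i,N}=a_{i,N}+c_{3-i}a_{12}$ stays below the average $a_N$.

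But that is where your argument stops, and the stopping point is the only nontrivial step of the lower bound, so this is a genuine gap. Writing "run the blow-up analysis of \cite{GuoZenZho-17-dcds} for the rescaled functional" and then "granting $R_N\lesssim\ell_N^3$" is not a proof, and the outsourcing does not work as stated: the analysis of \cite{GuoZenZho-17-dcds} concerns exact minimizers of the local NLS functional and runs through their Euler--Lagrange equation, whereas you have a minimizer of the nonlocal $\mathcal{E}_N^{\rm mH}$; moreover any compactness or convergence statement for your rescaled pair presupposes the very uniform $H^1$ bound you are trying to prove, so the plan is circular (and overkill: convergence to $(Q_0,Q_0)$ is the content of Theorem \ref{thm:blowup-modified-hartree-approximate-1-2-I}, an output of the lemma, not an input). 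The paper closes the gap with an elementary bootstrap that your remark about "retaining the attractive coupling" gestures at but does not execute. Assume without loss of generality $\tilde a_{1,N}\le\tilde a_{2,N}$, so $\tilde a_{1,N}\le a_N$; dropping the nonnegative remainder and the traps in the decomposition \eqref{approx-nls-ground-state-1-2-I-1} and using \eqref{ineq:GN-0} gives $c_1\frac{a_*-a_N}{a_*}\|\nabla u_{1,N}\|_{L^2}^2\le E_N^{\rm mH}$, hence $\|\nabla\tilde u_{1,N}\|_{L^2}\lesssim1$. Next --- this is the missing idea --- since all four decoupled terms in \eqref{approx-nls-ground-state-1-2-I-1} are nonnegative, one has $E_N^{\rm mH}\ge\frac{c_1c_2a_{12}}{2}\int\big(|u_{1,N}|^4+|u_{2,N}|^4-2|u_{1,N}|^2(w_N^{(12)}\star|u_{2,N}|^2)\big)$; applying \eqref{ineq:interaction} \emph{asymmetrically} with $\kappa=2$ yields \eqref{bounded-1-2-I-3}, i.e.\ $\frac12\|\tilde u_{2,N}\|_{L^4}^4\le\|\tilde u_{1,N}\|_{L^4}^4+2c_1^{-1}c_2^{-1}a_{12}^{-1}\ell_N^{-2}E_N^{\rm mH}\lesssim1$, where the sign $a_{12}>0$ is essential (the cross term carries a positive prefactor, so the upper bound on $E_N^{\rm mH}$ limits how negative the bracket can be). Finally, \eqref{bounded-1-2-I-2} converts the $L^4$ bound into $\|\nabla\tilde u_{2,N}\|_{L^2}^2\le c_2^{-1}\ell_N^{-2}E_N^{\rm mH}+\frac{\tilde a_{2,N}}{2}\|\tilde u_{2,N}\|_{L^4}^4\lesssim1$. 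With both components bounded at scale $\ell_N$, your concluding display is valid and the lemma follows; without this bootstrap (or a substitute for it), your lower bound is unproven.
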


\begin{proof}
	We start with the upper bound. By the arguments in the proof of Lemma \ref{lem:energy-hartree-nls} and the asymptotic behavior of the NLS ground states (see, e.g., \cite[Proposition 3]{GuoZenZho-17-dcds}), we have
	\begin{equation}\label{upper-bound-hartree-1-2-I}
	E_{N}^{\rm mH} \leq  E^{\rm NLS} + CN^{-\beta}\ell_{N}^3  = (a_{*}-a_{N})^{\frac{p_{0}}{p_{0}+2}} \left[\frac{p_{0}+2}{p_{0}} \cdot \frac{\Lambda^{2}}{a_{*}} + CN^{-\beta}(a_{*}-a_{N})^{-\frac{p_{0}+3}{p_{0}+2}}\right].
	\end{equation}
	The error term $N^{-\beta}(a_{*}-a_{N})^{-\frac{p_{0}+3}{p_{0}+2}}$ is of order $1$ when $a_{*}-a_{N}=N^{-\gamma}$ with $0<\gamma<\frac{p_{0}+2}{p_{0}+3}\beta$.
	
	Now, we turn to the lower bound. Let $(u_{1,N},u_{2,N})\in H^1(\mathbb R^{2})\times H^1(\mathbb R^{2})$ be a ground state for $E_{N}^{\rm mH}$ in \eqref{functional:hartree-tow-component}, which exists under the conditions $0<a_{1}<a_{*}$ and $0<a_{2}<a_{*}$. Let us rewrite $\mathcal{E}_{N}^{\rm mH}$ as follows
	\begin{align}
	\mathcal{E}_{N}^{\rm mH} (u_{1,N},u_{2,N}) 	& = c_{1}\int_{\mathbb R^{2}}\left[|\nabla u_{1,N}(x)|^{2} - \frac{\tilde{a}_{1,N}}{2}|u_{1,N}(x)|^{4}\right]{\rm d}x + c_{1}\int_{\mathbb R^{2}}|x|^{p_{1}}|u_{1,N}(x)|^{2}{\rm d}x \nonumber\\
	& \quad  + c_{2}\int_{\mathbb R^{2}}\left[|\nabla u_{2,N}(x)|^{2} - \frac{\tilde{a}_{2,N}}{2}|u_{2,N}(x)|^{4}\right]{\rm d}x + c_{2}\int_{\mathbb R^{2}}|x|^{p_{2}}|u_{2,N}(x)|^{2}{\rm d}x \nonumber\\
	& \quad  + \frac{c_{1}c_{2}a_{12}}{2}\int_{\mathbb R^{2}}\left[|u_{1,N}(x)|^{4} + |u_{2,N}(x)|^{4} - 2|u_{1}(x)|^{2}(w_{N}^{(12)}\star|u_{2}|^{2})(x)\right]{\rm d}x \label{approx-nls-ground-state-1-2-I-1}
	\end{align}
	where $\tilde{a}_{1,N}=a_{1,N}+c_{2}a_{12}$ and $\tilde{a}_{2,N}=a_{2,N}+c_{1}a_{12}$. Applying \eqref{ineq:interaction} to $\kappa=1$, $i=1$, $j=2$, we obtain
	\begin{equation}\label{lower-bound-1-2-I-3}
	E_{N}^{\rm mH} \geq c_{1}\int_{\mathbb R^{2}}\left(|\nabla u_{1,N}(x)|^{2} - \frac{\tilde{a}_{1,N}}{2}|u_{1,N}(x)|^{4}\right){\rm d}x + c_{2}\int_{\mathbb R^{2}}\left(|\nabla u_{2,N}(x)|^{2} - \frac{\tilde{a}_{2,N}}{2}|u_{2,N}(x)|^{4}\right){\rm d}x.
	\end{equation}
	For each $N$, we may assume without loss of generality that $\tilde{a}_{1,N}\leq \tilde{a}_{2,N}$. By \eqref{lower-bound-1-2-I-3} and \eqref{ineq:GN-0}, we have
	\begin{equation}\label{lower-bound-1-2-I-4}
	E_{N}^{\rm mH} \geq c_{1}\frac{a_{*}-\tilde{a}_{1,N}}{a_{*}}\int_{\mathbb R^{2}}|\nabla u_{1,N}(x)|^{2} {\rm d}x \geq c_{1}\frac{a_{*}-a_{N}}{a_{*}}\int_{\mathbb R^{2}}|\nabla u_{1,N}(x)|^{2} {\rm d}x,
	\end{equation}
	where $a_{N}=c_{1}\tilde{a}_{1,N}+c_{2}\tilde{a}_{2,N}$. We denote $\tilde{u}_{1,N}=\ell_{N}^{-1}u_{1,N}(\ell_{N}^{-1}\cdot)$ and $\tilde{u}_{2,N}=\ell_{N}^{-1}u_{2,N}(\ell_{N}^{-1}\cdot)$ where $\ell_{N}$ is defined as in Theorem \ref{thm:blow-up-nls-ground-states-I}. Then, it follows from \eqref{lower-bound-1-2-I-4} and the upper bound of $E_{N}^{\rm mH}$ in \eqref{upper-bound-hartree-1-2-I} that $\{\tilde{u}_{1,N}\}$ is bounded in $H^1(\mathbb{R}^{2})$. Next, we prove that $\{\tilde{u}_{2,N}\}$ is also bounded in $H^1(\mathbb{R}^{2})$. We notice that, by again \eqref{lower-bound-1-2-I-3} and \eqref{ineq:GN-0}, we have
	\begin{equation}\label{bounded-1-2-I-2}
	\int_{\mathbb R^{2}}|\nabla \tilde{u}_{2,N}(x)|^{2} {\rm d}x - \frac{\tilde{a}_{2,N}}{2}\int_{\mathbb{R}^{2}}|\tilde{u}_{2,N}(x)|^{4}{\rm d}x \leq c_{2}^{-1}\ell_{N}^{-2}E_{N}^{\rm mH}.
	\end{equation}
	On the other hand, by applying \eqref{ineq:interaction} to $\kappa=2$, $i=1$, $j=2$ and using \eqref{ineq:GN-0} we obtain
	$$
	E_{N}^{\rm mH} \geq \frac{c_{1}c_{2}a_{12}}{2}\left(-\int_{\mathbb R^{2}}|u_{1,N}(x)|^{4}{\rm d}x + \frac{1}{2}\int_{\mathbb R^{2}}|u_{2,N}(x)|^{4}{\rm d}x\right)
	$$
	which implies that
	\begin{equation}\label{bounded-1-2-I-3}
	-\int_{\mathbb R^{2}}|\tilde{u}_{1,N}(x)|^{4}{\rm d}x + \frac{1}{2}\int_{\mathbb R^{2}}|\tilde{u}_{2,N}(x)|^{4}{\rm d}x \leq 2c_{1}^{-1}c_{2}^{-1}a_{12}^{-1}\ell_{N}^{-2}E_{N}^{\rm mH}.
	\end{equation}
	Hence, we deduce from \eqref{bounded-1-2-I-2}, \eqref{bounded-1-2-I-3} and the upper bound of $E_{N}^{\rm mH}$ in \eqref{upper-bound-hartree-1-2-I} that $\{\tilde{u}_{2,N}\}$ is bounded in $H^1(\mathbb{R}^{2})$. Since this holds for each $N$, we conclude that the boundedness of $\{\tilde{u}_{1,N}\}$ and $\{\tilde{u}_{2,N}\}$ holds for the whole sequence. We may apply \eqref{upper-bound:hartree-nls-2} and \eqref{upper-bound:hartree-nls-3} with $i=1$, $j=2$ to obtain
	$$
	E_{N}^{\rm mH} \geq E^{\rm NLS} - CN^{-\beta}\ell_{N}^3  = (a_{*}-a_{N})^{\frac{p_{0}}{p_{0}+2}} \left[\frac{p_{0}+2}{p_{0}} \cdot \frac{\Lambda^{2}}{a_{*}} + CN^{-\beta}(a_{*}-a_{N})^{-\frac{p_{0}+3}{p_{0}+2}}\right].
	$$
	The error term in the above matches the one in \eqref{upper-bound-hartree-1-2-I}, up to a constant. This concludes the proof of Lemma \ref{lem:modified-hartree-energy-1-2-I}.
\end{proof}

\begin{remark}\label{rem:asymptotic-quantum-energy}
	\begin{itemize}
		\item By the arguments in the proof of Lemmas \ref{lem:energy-hartree-nls} and \ref{lem:modified-hartree-energy-1-2-I}, we obtain the same asymptotic formula as in \eqref{asymptotic:modifiled-hartree-energy-1-2-I} for the Hartree energy \eqref{energy:hartree-two-component}.
		\item It follows from \eqref{lower:quantum-hartree-energy} and the arguments in the proof of Lemma \ref{lem:modified-hartree-energy-1-2-I} that
		\begin{equation}\label{upper:quantum-modified-hartree}
		E_{N}^{\rm mH} + CN^{-\beta}\ell_{N}^{3} \geq E_{N}^{\rm H} \geq E_{N}^{\rm Q} \geq E_{N}^{\rm H} - CN^{2\beta-1} \geq E_{N}^{\rm mH} - CN^{2\beta-1}.
		\end{equation}
		Then, the asymptotic formula of the (modified) Hartree energy implies that of the quantum energy, which gives \eqref{asymptotic:quantum-energy} in Theorem \ref{thm:blow-up-bec-1-2-I}. Since $CN^{2\beta-1}$ is the error in the above energy estimate, the following conditions are taken into account
		$$
		0<\beta<1/2 \quad \text{and} \quad 0<\gamma < \frac{p_{0}+2}{p_{0}}(1-2\beta) \quad \text{with} \quad p_{0}=\min\{p_{1},p_{2}\}.
		$$
	\end{itemize}
\end{remark}

Having the blow-up behavior of the (modified) Hartree energy, we are now able to study that of its (approximate) ground states. We have the following.

\begin{theorem}\label{thm:blowup-modified-hartree-approximate-1-2-I}
	Assume that $0<a_{12}<a_{*}\min\{c_{1}^{-1},c_{2}^{-1}\}$ is fixed and $V_{1}$, $V_{2}$ are defined as in \eqref{potentials:external} with $z_{1} = 0 = z_{2}$. Let $(a_{1,N},a_{2,N})\nearrow (a_{*}-c_{2}a_{12},a_{*}-c_{1}a_{12})$ as $N\to\infty$ such that $a_{N} := c_{1}a_{1,N} + c_{2}a_{2,N} + 2c_{1}c_{2}a_{12} = a_{*}-N^{-\gamma}$ with
	$$
	0<\gamma < \frac{p_{0}+2}{p_{0}+3}\beta ,\quad p_{0}=\min\{p_{1},p_{2}\}.
	$$
	Let $(u_{1,N},u_{2,N}) \in H^{1}(\mathbb{R}^{2})\times H^{1}(\mathbb{R}^{2})$ be a sequence of a coupled positive functions such that $\|u_{1,N}\|_{L^{2}}=1=\|u_{2,N}\|_{L^{2}}$ and
	\begin{equation}\label{behavior:approximate-modified-hartree-energy-1-2-I}
	\mathcal{E}_{N}^{\rm mH} (u_{1,N},u_{2,N}) = E_{N}^{\rm mH} + o(E_{N}^{\rm mH})_{N\to\infty} = (a_{*}-a_{N})^{\frac{p_{0}}{p_{0}+2}} \left(\frac{p_{0}+2}{p_{0}} \cdot \frac{\Lambda^{2}}{a_{*}}+o(1)_{N\to\infty}\right),
	\end{equation}
	where $\Lambda$ is given by \eqref{lambda:1-2-I}. Then, we have
	\begin{equation}\label{behavior:approximate-modified-hartree-ground-state-1-2-I}
	\lim_{N\to\infty} \ell_{N}^{-1} u_{1,N}(\ell_{N}^{-1}\cdot) = Q_{0} = \lim_{N\to\infty} \ell_{N}^{-1}u_{2,N}(\ell_{N}^{-1}\cdot)
	\end{equation}
	strongly in $L^{2}(\mathbb R^2)$, where $\ell_{N}$ is defined as in Theorem \ref{thm:blow-up-nls-ground-states-I} and $Q_{0}$ is given by \eqref{GN:normalized}. 
\end{theorem}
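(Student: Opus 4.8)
The plan is to transport the concentration--compactness analysis of the \emph{exact} NLS ground states in Theorem~\ref{thm:blow-up-nls-ground-states-I} to the level of \emph{approximate} minimizers, using only the near-optimality of the energy. Set $\tilde u_{i,N}=\ell_N^{-1}u_{i,N}(\ell_N^{-1}\cdot)$, so that $\|\tilde u_{i,N}\|_{L^2}=1$; rewriting \eqref{approx-nls-ground-state-1-2-I-1} in these variables via $\int|\nabla u_{i,N}|^2=\ell_N^2\int|\nabla\tilde u_{i,N}|^2$, $\int|u_{i,N}|^4=\ell_N^2\int|\tilde u_{i,N}|^4$ and $\int|x|^{p_i}|u_{i,N}|^2=\ell_N^{-p_i}\int|x|^{p_i}|\tilde u_{i,N}|^2$ yields
$$
\mathcal{E}_N^{\rm mH}(u_{1,N},u_{2,N})=\sum_{i=1}^{2}c_i\big(\ell_N^{2}D_{i,N}+\ell_N^{-p_i}P_{i,N}\big)+\tfrac{c_1c_2a_{12}}{2}\mathcal{I}_N,
$$
where $D_{i,N}=\int|\nabla\tilde u_{i,N}|^2-\tfrac{\tilde a_{i,N}}{2}\int|\tilde u_{i,N}|^4\ge0$ by \eqref{ineq:GN-0}, $P_{i,N}=\int|x|^{p_i}|\tilde u_{i,N}|^2\ge0$, $\tilde a_{1,N}=a_{1,N}+c_2a_{12}$, $\tilde a_{2,N}=a_{2,N}+c_1a_{12}$, and $\mathcal{I}_N=\int\big[|u_{1,N}|^4+|u_{2,N}|^4-2|u_{1,N}|^2(w_N^{(12)}\star|u_{2,N}|^2)\big]\ge0$ by \eqref{ineq:interaction}. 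First I would show $\{\tilde u_{1,N}\}$ and $\{\tilde u_{2,N}\}$ are bounded in $H^1(\mathbb R^2)$: the chain \eqref{lower-bound-1-2-I-3}--\eqref{bounded-1-2-I-3} uses only the normalization and \eqref{ineq:interaction}, so it applies verbatim to the approximate minimizers once \eqref{behavior:approximate-modified-hartree-energy-1-2-I} is inserted; note that the interaction is precisely what bounds the second component through the first.

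Next I would argue that all nonnegative pieces are individually sharp. Since \eqref{behavior:approximate-modified-hartree-energy-1-2-I} gives $\mathcal{E}_N^{\rm mH}(u_{1,N},u_{2,N})=c\,\ell_N^{-p_0}(1+o(1))$ for some $c>0$ and every summand above is nonnegative, each summand is $O(\ell_N^{-p_0})$. As $\tilde a_{i,N}<a_*$, the Gagliardo--Nirenberg defect $\int|\nabla\tilde u_{i,N}|^2-\tfrac{a_*}{2}\int|\tilde u_{i,N}|^4$ is dominated by $D_{i,N}=O(\ell_N^{-p_0-2})$, hence tends to $0$ for $i\in\{1,2\}$. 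From $\mathcal{I}_N=O(\ell_N^{-p_0})$, after replacing $w_N^{(12)}\star|u_{2,N}|^2$ by $|u_{2,N}|^2$ through \eqref{upper-bound:hartree-nls-2}--\eqref{upper-bound:hartree-nls-3} (the correction being negligible under $\gamma<\tfrac{p_0+2}{p_0+3}\beta$, exactly as in \eqref{upper-bound-hartree-1-2-I}) and undoing the scaling $\int(|u_{1,N}|^2-|u_{2,N}|^2)^2=\ell_N^{2}\int(|\tilde u_{1,N}|^2-|\tilde u_{2,N}|^2)^2$, I would obtain the density matching $\int(|\tilde u_{1,N}|^2-|\tilde u_{2,N}|^2)^2\to0$.

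Then I would identify the common profile. A sequence bounded in $H^1$, of fixed mass, saturating \eqref{ineq:GN-0} is relatively compact in $L^2\cap L^4$ modulo translations and dilations, and by the uniqueness of $Q$ up to these symmetries its limits are dilates of $Q_0$; over-concentration is excluded by the $H^1$-bound and spreading by the potential $P_{i,N}$ with $p_i=p_0$. To remove the two remaining freedoms I would use the \emph{crucial identity} $c_1(a_*-\tilde a_{1,N})+c_2(a_*-\tilde a_{2,N})=a_*-a_N$: inserting a common profile of scale $\lambda$ and centre $y$ and using the density matching, the kinetic part becomes $\ell_N^2\lambda^2\tfrac{a_*-a_N}{a_*}(1+o(1))=\tfrac{\Lambda^{p_0+2}}{a_*}\ell_N^{-p_0}\lambda^2(1+o(1))$ while the dominant $p_0$-potential contributes $\tfrac{2}{p_0}\tfrac{\Lambda^{p_0+2}}{a_*}\ell_N^{-p_0}\lambda^{-p_0}(1+o(1))$, so the energy reduces to $\tfrac{\Lambda^{p_0+2}}{a_*}\ell_N^{-p_0}\big(\lambda^2+\tfrac{2}{p_0}\lambda^{-p_0}\big)(1+o(1))$ plus a nonnegative remainder. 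This is minimized exactly at $\lambda=1$, $y=0$, with value $\tfrac{p_0+2}{p_0}\tfrac{\Lambda^{p_0+2}}{a_*}\ell_N^{-p_0}$, so matching with \eqref{behavior:approximate-modified-hartree-energy-1-2-I} forces $\lambda\to1$ and $y\to0$, and the density matching transfers this to both species. Since the limit is then $Q_0$ for every subsequence, the whole sequence converges and $\tilde u_{1,N}\to Q_0$ and $\tilde u_{2,N}\to Q_0$ in $L^2(\mathbb R^2)$.

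The hard part is this last step. The limiting Gagliardo--Nirenberg problem is invariant under translations and dilations, so saturation alone pins the profile only up to these noncompact symmetries, and---unlike the one-component situation of \cite{LewNamRou-17-proc}---the two blow-up scales are a priori unrelated. The device that rescues the argument is the interaction defect $\mathcal{I}_N\to0$, which fuses the two components into a single effective profile; combined with the weighted-sum identity that funnels all the subcriticality into the single parameter $a_*-a_N$ appearing in the kinetic term, it lets the subcritical trap select the dilation $\lambda=1$ and the centre $y=0$ for both species simultaneously.
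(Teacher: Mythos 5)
Your overall skeleton coincides with the paper's own proof: the same rescaling $\tilde u_{i,N}=\ell_N^{-1}u_{i,N}(\ell_N^{-1}\cdot)$, the same $H^1$-bounds via \eqref{lower-bound-1-2-I-3}--\eqref{bounded-1-2-I-3}, the same density matching $\||\tilde u_{1,N}|^2-|\tilde u_{2,N}|^2\|_{L^2}\to 0$ extracted from the nonnegative interaction defect after replacing $w_N^{(12)}\star|u_{2,N}|^2$ by $|u_{2,N}|^2$ (this is exactly \eqref{approx-nls-ground-state-1-2-I-3}), and the same endgame: the weighted identity $c_1(a_*-\tilde a_{1,N})+c_2(a_*-\tilde a_{2,N})=a_*-a_N$ funnels the energy expansion into $\lambda^2+\frac{2}{p_0}\lambda^{-p_0}$ (up to the normalization by $\Lambda$), whose unique minimizer $\lambda=1$, together with the rearrangement inequality for the centre, pins down the profile.

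There is, however, a genuine flaw in the middle step. You assert that ``a sequence bounded in $H^1$, of fixed mass, saturating \eqref{ineq:GN-0} is relatively compact in $L^2\cap L^4$ modulo translations and dilations.'' But the only saturation you have established is that the \emph{difference} $D_{i,N}=\int|\nabla\tilde u_{i,N}|^2-\frac{\tilde a_{i,N}}{2}\int|\tilde u_{i,N}|^4$ tends to $0$, and difference-saturation is strictly weaker than ratio-saturation: taking $u_N(x)=\epsilon_N\phi(\epsilon_N x)$ with $\epsilon_N\to0$ and $\phi$ an \emph{arbitrary} normalized $H^1$ function gives $D_N=\epsilon_N^2\bigl(\int|\nabla\phi|^2-\frac{a}{2}\int|\phi|^4\bigr)\to0$, yet modulo dilations this sequence converges to $\phi$, which is not a dilate of $Q_0$. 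So the compactness claim fails exactly in the vanishing/spreading regime, and non-vanishing ($\liminf\int|\tilde u_{i,N}|^4>0$) is not available at the point where you invoke the claim. The paper avoids this trap by reversing the order of operations: it first uses the bound $\int|x|^{p_i}|\tilde u_{i,N}|^2=O(1)$ for the component with $p_i=p_0$ to get tightness, hence strong $L^2\cap L^r$ convergence of that component to a limit $W_0$ with \emph{full mass} $\|W_0\|_{L^2}=1$ (this is what rules out vanishing, with no symmetry quotient needed); the density matching then transfers strong $L^4$ convergence to the other component; and only then does it pass to the limit in the defect inequality \eqref{bounded-1-2-I-2} via Fatou, so that \eqref{ineq:GN-0} together with $\|W_0\|_{L^2}=1$ forces $W_0$ to be a Gagliardo--Nirenberg optimizer, i.e.\ $W_0=(a_*)^{-1/2}bQ(b\cdot+x_0)$, after which your energy expansion yields $b=1$, $x_0=0$. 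Your proposal contains all the needed ingredients (the potential bound, the density matching), but they must be used to produce a full-mass strong limit \emph{before} appealing to the uniqueness of optimizers, not after; as written, the step you rest the profile identification on is false.
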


\begin{proof}
	Denote $\tilde{u}_{1,N}=\ell_{N}^{-1}u_{1,N}(\ell_{N}^{-1}\cdot)$ and $\tilde{u}_{2,N}=\ell_{N}^{-1}u_{2,N}(\ell_{N}^{-1}\cdot)$; then, $\|\tilde{u}_{1,N}\|_{L^{2}} = 1 = \|\tilde{u}_{2,N}\|_{L^{2}}$. By the same argument as in the proof of Lemma \ref{lem:modified-hartree-energy-1-2-I}, we can prove that both sequences $\{\tilde{u}_{1,N}\}$ and $\{\tilde{u}_{2,N}\}$ are bounded in $H^{1}(\mathbb R^2)$. Thus, $\tilde{u}_{1,N}$ (resp. $\tilde{u}_{2,N}$) converges to a function $W_{1}$ (resp. $W_{2}$) weakly in $H^1(\mathbb{R}^{2})$ and pointwise almost everywhere in $\mathbb{R}^{2}$. We will show that $W_{1} \equiv W_{2}$. Indeed, by applying \eqref{upper-bound:hartree-nls-2}, \eqref{upper-bound:hartree-nls-3} with $i=1$, $j=2$ and using \eqref{ineq:GN-0} we obtain
	$$
	\mathcal{E}_{N}^{\rm mH} (u_{1,N},u_{2,N}) \geq \frac{c_{1}c_{2}a_{12}}{2}\int_{\mathbb R^{2}}\left[|u_{1,N}(x)|^{2} - |u_{2,N}(x)|^{2}\right]^{2}{\rm d}x - CN^{-\beta}\ell_{N}^3,
	$$
	which in turn implies that
	\begin{equation}\label{approx-nls-ground-state-1-2-I-3}
	\||\tilde{u}_{1,N}|^{2}-|\tilde{u}_{2,N}|^{2}\|_{L^{2}}^{2} \leq 2c_{1}^{-1}c_{2}^{-1}a_{12}^{-1}\left(\ell_{N}^{-2}\mathcal{E}_{N}^{\rm mH} (u_{1,N},u_{2,N}) + CN^{-\beta}\ell_{N}\right).
	\end{equation}
	By taking the limit $N\to\infty$ in \eqref{approx-nls-ground-state-1-2-I-3} and using \eqref{behavior:approximate-modified-hartree-energy-1-2-I}, we conclude that $W_{1} = W_{0} = W_{2}$ almost everywhere in $\mathbb R^{2}$.
	
	On the other hand, since $p_{0} = \min\{p_{1},p_{2}\}$, we deduce from \eqref{approx-nls-ground-state-1-2-I-1}, \eqref{behavior:approximate-modified-hartree-energy-1-2-I} that either $\int_{\mathbb R^{2}} |x|^{p_{1}}|\tilde{u}_{1,N}(x)|^{2}{\rm d}x$ or $\int_{\mathbb R^{2}} |x|^{p_{2}}|\tilde{u}_{2,N}(x)|^{2}{\rm d}x$ is bounded. It then follows that either $\tilde{u}_{1,N}$ or $\tilde{u}_{2,N}$ converges to $W_{0}$ strongly in $L^{r}(\mathbb{R}^{2})$, for $2\leq r<\infty$. In particular, we have $\|W_{0}\|_{L^{2}}=1$. Moreover, by the Cauchy--Schwarz inequality and Minkowski's inequality we have
	\begin{align*}
	|\|\tilde{u}_{1,N}\|_{L^{4}}^{4} - \|\tilde{u}_{2,N}\|_{L^{4}}^{4}| & \leq \||\tilde{u}_{1,N}|^{2}+|\tilde{u}_{2,N}|^{2}\|_{L^{2}} \||\tilde{u}_{1,N}|^{2}-|\tilde{u}_{2,N}|^{2}\|_{L^{2}} \\
	& \leq (\|\tilde{u}_{1,N}\|_{L^{4}}^{2} + \|\tilde{u}_{2,N}\|_{L^{4}}^{2}) \||\tilde{u}_{1,N}|^{2}-|\tilde{u}_{2,N}|^{2}\|_{L^{2}},
	\end{align*}
	which implies that
	\begin{equation}\label{approx-nls-ground-state-1-2-I-4}
	|\|\tilde{u}_{1,N}\|_{L^{4}}^{2} - \|\tilde{u}_{2,N}\|_{L^{4}}^{2}| \leq \||\tilde{u}_{1,N}|^{2}-|\tilde{u}_{2,N}|^{2}\|_{L^{2}}.
	\end{equation}
	It follows from \eqref{behavior:approximate-modified-hartree-energy-1-2-I}, \eqref{approx-nls-ground-state-1-2-I-3} and \eqref{approx-nls-ground-state-1-2-I-4} that both $\tilde{u}_{1,N}$ and $\tilde{u}_{2,N}$ converge to $W_{0}$ strongly in $L^{4}(\mathbb{R}^{2})$. In fact, those convergences hold in $L^{r}(\mathbb{R}^{2})$, for $4\leq r<\infty$, by the $H^1(\mathbb{R}^{2})$-boundedness of $\{\tilde{u}_{1,N}\}$ and $\{\tilde{u}_{2,N}\}$. Taking the limit $N\to\infty$ in \eqref{bounded-1-2-I-2} and using  Fatou's lemma and the Hardy--Littlewood--Sobolev inequality, we obtain
	$$
	\int_{\mathbb R^{2}}|\nabla W_{0}(x)|^{2} {\rm d}x - \frac{a_{*}}{2}\int_{\mathbb{R}^{2}}|W_{0}(x)|^{4}{\rm d}x \leq 0.
	$$
	Thus, $W_{0}$ is an optimizer for \eqref{ineq:GN-0}. Recall that \eqref{ineq:GN-0} admits a unique optimizer, up to translation and dilations. Therefore, a simple scaling and the uniqueness (up to translation) of positive solutions of \eqref{eq:GN} allows us to conclude that
	$$
	W_{0}(x) = (a_{*})^{-\frac{1}{2}}bQ(bx+x_{0})
	$$
	for some constant $b\in\mathbb R^+$ and $x_{0}\in\mathbb R^{2}$. Here, $Q$ is the unique (up to translation) solution of \eqref{eq:GN}. We will show that\ $b=1$ and $x_{0}=0$. Indeed, it follows from \eqref{behavior:approximate-modified-hartree-energy-1-2-I}, \eqref{approx-nls-ground-state-1-2-I-1}, \eqref{ineq:GN-0} and Fatou's lemma that
	\begin{equation}\label{approx-nls-ground-state-1-2-I-6}
	\frac{p_{0}+2}{p_{0}}\Lambda^{2} \geq \frac{b^{2}\Lambda^{2}}{a_{*}}\int_{\mathbb R^{2}}|\nabla Q(x)|^{2} {\rm d}x + \frac{\nu}{b^{p_{0}}\Lambda^{p_{0}}}\int_{\mathbb{R}^{2}}|x|^{p_{0}}|Q(x+b^{-1}x_{0})|^{2}{\rm d}x.
	\end{equation}
	Here, we have used the fact that $c_{1}+c_{2}=1$ and the assumption $p_{0}=\min\{p_{1},p_{2}\}$. Note that $\|\nabla Q\|_{L^{2}}^{2} = \|Q\|_{L^{2}}^{2} = a_{*}$ and
	\begin{equation}\label{approx-nls-ground-state-1-2-I-7}
	\int_{\mathbb{R}^{2}}|x|^{p_{0}}|Q(x+b^{-1}x_{0})|^{2}{\rm d}x \geq \int_{\mathbb{R}^{2}}|x|^{p_{0}}|Q(x)|^{2}{\rm d}x,
	\end{equation}
	by the Hardy--Littlewood rearrangement inequality as $Q$ is a radial symmetric decreasing function. Thus, \eqref{approx-nls-ground-state-1-2-I-6} reduces to
	\begin{equation}\label{approx-nls-ground-state-1-2-I-8}
	\frac{p_{0}+2}{p_{0}}\Lambda^{2} \geq b^{2}\Lambda^{2} + \frac{\nu}{b^{p_{0}}\Lambda^{p_{0}}}\int_{\mathbb{R}^{2}}|x|^{p_{0}}|Q(x)|^{2}{\rm d}x.
	\end{equation}
	It is elementary to check that
	$$
	\inf_{\lambda>0}\left(\lambda^{2} + \frac{\nu}{\lambda^{p_{0}}}\int_{\mathbb{R}^{2}}|x|^{p_{0}}|Q(x)|^{2}{\rm d}x\right) = \frac{p_{0}+2}{p_{0}}\Lambda^{2}
	$$
	with the unique optimal value $\lambda=\Lambda$. Therefore, the equality in \eqref{approx-nls-ground-state-1-2-I-8} must occurs, and hence, $b=1$. This also implies that the equality in \eqref{approx-nls-ground-state-1-2-I-7} must occurs, and hence, $x_{0}=0$.
\end{proof}

\begin{remark} By the arguments in the proof of Theorem \ref{thm:blowup-modified-hartree-approximate-1-2-I}, we obtain the same blow-up behavior as in \eqref{behavior:approximate-modified-hartree-ground-state-1-2-I} for the Hartree and NLS (approximate) ground states.
	
\end{remark}

In the case of the totally attractive system, we have seen in Theorems \ref{thm:blow-up-nls-ground-states-I}  and \ref{thm:blowup-modified-hartree-approximate-1-2-I} that the two components of NLS exact/approximate ground states have the same behavior when $a_{12}>0$ is fixed and $(a_{1},a_{2}) \nearrow (a_{*}-c_{2}a_{12},a_{*}-c_{1}a_{12})$. It is the same situation when $0<a_{1},a_{2}<a_{*}$ are fixed and $a_{12}>0$ tends to a critical number. In that case, the limit behavior of the NLS \emph{exact} ground states has not been studied in \cite{GuoZenZho-17-dcds}. But it is somehow similar to the previous case. As mentioned in the introduction, we will assume that $c_{1}(a_{*}-a_{1}) = c_{1}c_{2}\alpha_{*} = c_{2}(a_{*}-a_{2})$. Note that, with this assumption, Theorem \ref{thm:existence-nls-ground-states-I} gives a complete classification of the existence and non-existence of ground states for \eqref{energy:nls-two-component}. In the following, we address the limiting profile of the general NLS \emph{approximate} ground states in the limit regime that the inter-species interactions tend to the critical number $\alpha_{*}$. We first note that we have the following estimate
\begin{equation}\label{behavior:nls-energy-12-I-upper}
\limsup_{a_{12}\nearrow\alpha_{*}}\frac{E^{\rm NLS}}{(\alpha_{*}-a_{12})^{\frac{p_{0}}{p_{0}+2}}} \leq 2c_{1}c_{2}\frac{p_{0}+2}{p_{0}} \cdot \frac{\Theta^{2}}{a_{*}},
\end{equation}
where $p_{0}=\min\{p_{1},p_{2}\}$ and $\Theta$ is given by \eqref{lambda:12-I}. To see this, we simply take
$$
u_{1}(x) = u_{2}(x)= (a_{*})^{\frac{1}{2}}\lambda(\alpha_{*}-a_{12})^{-\frac{1}{p_{0}+2}}Q(\lambda(\alpha_{*}-a_{12})^{-\frac{1}{p_{0}+2}}x)
$$
as a trial function for $E^{\rm NLS}$ in \eqref{energy:nls-two-component} and minimize it over $\lambda > 0$.

The estimate \eqref{behavior:nls-energy-12-I-upper} suggested the asymptotic formula of the NLS energy in the limit regime $0<a_{12} := \alpha_{N} \nearrow \alpha_{*}$ as $N\to\infty$. This will imply the asymptotic formula of the (modified) Hartree energy \eqref{energy:modified-hartree-two-component}. More precisely, we have
\begin{equation}\label{asymptotic:modifiled-hartree-energy-12-I}
E_{N}^{\rm mH} = E^{\rm NLS} + o(E^{\rm NLS})_{N\to\infty} = (\alpha_{*}-\alpha_{N})^{\frac{p_{0}}{p_{0}+2}} \left(2c_{1}c_{2}\frac{p_{0}+2}{p_{0}} \cdot \frac{\Theta^{2}}{a_{*}}+o(1)_{N\to\infty}\right),
\end{equation}
provided that 
$$
0<\beta<1/2 \quad \text{and} \quad \alpha_{N} = \alpha_{*} - N^{-\gamma} \quad \text{with} \quad 0 < \gamma < \frac{p_{0}+2}{p_{0}+3}\beta.
$$
In addition, if $\gamma < \frac{p_{0}+2}{p_{0}}(1-2\beta)$, then we also obtain the same asymptotic formula as in \eqref{asymptotic:modifiled-hartree-energy-12-I} for the quantum energy. Furthermore, the blow-up behavior of the (modified) Hartree and the NLS (approximate) ground states $(u_{1,N},u_{2,N})$ is also obtained and we have
\begin{equation}\label{behavior:approximate-modified-hartree-ground state-12-I}
\lim_{N\to\infty} \ell_{N}^{-1} u_{1,N}(\ell_{N}^{-1}\cdot) = Q_{0} = \lim_{N\to\infty} \ell_{N}^{-1}u_{2,N}(\ell_{N}^{-1}\cdot)
\end{equation}
strongly in $L^{2}(\mathbb R^2)$, where $\ell_{N} = \Theta(\alpha_{*}-\alpha_{N})^{-\frac{1}{p_{0}+2}}$ and $Q_{0}$ is given by \eqref{GN:normalized}. The proofs of \eqref{asymptotic:modifiled-hartree-energy-12-I} and \eqref{behavior:approximate-modified-hartree-ground state-12-I} follow the same (even simpler) from that of \eqref{asymptotic:modifiled-hartree-energy-1-2-I} and \eqref{behavior:approximate-modified-hartree-ground-state-1-2-I}. Here, we omit the details for brevity. Moreover, the following estimate was derived in the energy estimate
\begin{equation}\label{behavior:nls-energy-12-I-lower}
\liminf_{a_{12}\nearrow\alpha_{*}}\frac{E^{\rm NLS}}{(\alpha_{*}-a_{12})^{\frac{p_{0}}{p_{0}+2}}} \geq 2c_{1}c_{2}\frac{p_{0}+2}{p_{0}} \cdot \frac{\Theta^{2}}{a_{*}},
\end{equation}
which together with \eqref{behavior:nls-energy-12-I-upper} yields the asymptotic behavior of the NLS energy.

Now, we are in the position to give the proofs of Theorems \ref{thm:blow-up-bec-1-2-I} and \ref{thm:blow-up-bec-12-I}. We only prove Theorem \ref{thm:blow-up-bec-1-2-I} since the proof of Theorem \ref{thm:blow-up-bec-12-I} is analogous.

\emph{Proof of Theorem \ref{thm:blow-up-bec-1-2-I}.} Let $\eta>0$ be a small parameter and $A$ be a bounded self-adjoint operator on $L^{2}(\mathbb R^{2})$. Consider the perturbed Hamiltonian in the group of $N_{1}$ particles
\begin{align}
H_{N,\eta} = &  \sum_{i=1}^{N_{1}}\big(-\Delta_{x_{i}}+V_{1}(x_{i})+\eta A_{x_{i}}\big)-\frac{1}{N_{1}-1}\sum_{1 \leq i < j \leq N_{1}}w_{N}^{(1)}(x_{i}-x_{j}) \nonumber \\
& + \sum_{r=1}^{N_{2}}\big(-\Delta_{y_{r}}+V_{2}(y_{r})\big)-\frac{1}{N_{2}-1}\sum_{1 \leq r < s \leq N_{2}}w_{N}^{(2)}(y_{r}-y_{s}) \nonumber\\
& - \frac{1}{N}\sum_{i=1}^{N_{1}}\sum_{r=1}^{N_{2}}w_{N}^{(12)}(x_{i}-y_{r}). \label{hamiltonian:perturbed-1}
\end{align}
with the ground state energy per particle denoted by $E_{\eta}^{\rm Q}$ hereafter. The associated (modified) Hartree functional is
$$
\mathcal{E}_{N,\eta}^{\rm mH} (u_{1},u_{2}) = \mathcal{E}_{N}^{\rm mH} (u_{1},u_{2}) + c_{1}\eta\langle u_{1},Au_{1}\rangle,
$$
with the corresponding (modified) Hartree energy $E_{N,\eta}^{\rm mH}$. Note that $\mathcal{E}_{N}^{\rm mH} = \mathcal{E}_{N,0}^{\rm mH}$ and $E_{N}^{\rm mH} = E_{N,0}^{\rm mH}$. From the arguments in the proof of Theorem \ref{thm:quantum-hartree-energy-general}, we have
\begin{equation}\label{estimate:energy-quantum-hartree-operator}
E_{\eta}^{\rm Q} \geq \inf_{\substack{\gamma_{1}=\gamma_{1}^{*}\geq 0, \gamma_{2}=\gamma_{2}^{*}\geq 0\\ \tr\gamma_{1} = 1 = \tr\gamma_{2}}} \tilde{\mathcal{E}}_{N,\eta}^{\rm H}(\gamma_{1},\gamma_{2}) - CN^{2\beta-1},
\end{equation}
where
\begin{align}
\tilde{\mathcal{E}}_{\eta}^{\rm H}(\gamma_{1},\gamma_{2}) & = c_{1}\left[\tr(-\Delta+V_{1}+\eta A)\gamma_{1} - \frac{a_{1}}{2}\int_{\mathbb R^2}\rho_{\gamma_{1}}(x)(w_{N}^{(1)}\star\rho_{\gamma_{1}})(x){\rm d}x\right] \nonumber \\
& \quad  + c_{2}\left[\tr(-\Delta+V_{2})\gamma_{2} - \frac{a_{2}}{2}\int_{\mathbb R^2}\rho_{\gamma_{2}}(x)(w_{N}^{(2)}\star\rho_{\gamma_{2}})(x){\rm d}x\right] \nonumber \\
& \quad  - c_{1}c_{2}a_{12}\int_{\mathbb R^{2}}\rho_{\gamma_{1}}(x)(w_{N}^{(12)}\star\rho_{\gamma_{2}})(x){\rm d}x, \label{functional:hartree-operator}
\end{align}
Here, $\rho_{\gamma}(x) := \gamma(x, x)$ for every $\gamma = \gamma^{*}\geq 0$ with $\tr\gamma = 1$. Now, we remark that the infimum on the right hand side of \eqref{estimate:energy-quantum-hartree-operator} is bounded from below by the (modified) Hartree energy $E_{N,\eta}^{\rm mH}$. To see this, we write $\gamma_{1}$ and $\gamma_{2}$ in terms of spectral representations
$$
\gamma_{\sigma} = \sum_{j}n_{j}^{(\sigma)}|u_{j}^{(\sigma)}\rangle\langle u_{j}^{(\sigma)}|, \quad \sigma\in\{1,2\},
$$
with $0\leq n^{(\sigma)}_{j} \leq 1$ and $\sum_{i}n^{(1)}_{i} = 1 = \sum_{j}n^{(2)}_{j}$. In other words, $\gamma_{\sigma}$ are convex combinations of the projections onto the orthogonal eigenfunctions $\{u_{j}^{(\sigma)}\}_{j}$. Furthermore, one can easily check that
\begin{equation}\label{density:hartree-operator}
\rho_{\gamma_{\sigma}} = \sum_{j}n_{j}^{(\sigma)}|u_{j}|^{2} \leq \sqrt{\sum_{j}n_{j}^{(\sigma)}|u_{j}|^{4}}.
\end{equation}
%On the other hand, we notice that, by Young's inequality and \eqref{assumption:potential-2},
%$$
%\|w_{N}^{(12)}\star\rho_{\gamma_{\sigma}}\|_{L^\infty} \leq \|w_{N}^{(12)}\|_{L^\infty} \|\rho_{\gamma_{\sigma}}\|_{L^1} \leq C\|w_{N}^{(12)}\|_{L^{1}}\leq CN^{2\beta}.
%$$
Fixing $\gamma_{2}$, applying \eqref{ineq:interaction} to $\kappa=1$, $i=1=j$ with $u_{1} = \sqrt{\rho_{\gamma_{1}}}$ and using \eqref{density:hartree-operator} for $\rho_{\gamma_{1}}$, we obtain
\begin{align}
\tilde{\mathcal{E}}_{\eta}^{\rm H}(\gamma_{1},\gamma_{2}) & \geq c_{1}\sum_{j}n_{j}^{(1)}\left[\langle u_{j}^{(1)},-\Delta+V_{1}+\eta A - c_{2}a_{12}w_{N}^{(12)}\star\rho_{\gamma_{2}},u_{j}^{(1)}\rangle - \frac{a_{1}}{2}\int_{\mathbb R^2}|u_{j}^{(1)}(x)|^{4}{\rm d}x\right] \nonumber \\
& \quad  + c_{2}\left[\tr(-\Delta+V_{2})\gamma_{2} - \frac{a_{2}}{2}\int_{\mathbb R^2}\rho_{\gamma_{2}}(x)(w_{N}^{(1)}\star\rho_{\gamma_{2}})(x){\rm d}x\right] \nonumber \\
& \geq c_{1}\inf_{\substack{u\in H^{1}(\mathbb R^2)\\ \|u\|_{L^{2}}=1}}\left\{\langle u,-\Delta+V_{1}+\eta A - c_{2}a_{12}w_{N}^{(12)}\star\rho_{\gamma_{2}},u\rangle - \frac{a_{1}}{2}\int_{\mathbb R^2}|u(x)|^{4}{\rm d}x\right\} \label{functional:modified-nls-one-component-1} \\
& \quad  + c_{2}\left[\tr(-\Delta+V_{2})\gamma_{2} - \frac{a_{2}}{2}\int_{\mathbb R^2}\rho_{\gamma_{2}}(x)(w_{N}^{(1)}\star\rho_{\gamma_{2}})(x){\rm d}x\right], \nonumber
\end{align}
where we have used $\sum_{j}n^{(1)}_{j} = 1$. Let $u_{N}^{(1)}\in H^{1}(\mathbb R^2)$ be a (fixed) ground state for the variational problem \eqref{functional:modified-nls-one-component-1}, which exists under the condition $0<a_{1}<a_{*}$. Applying \eqref{ineq:interaction} to $\kappa=1$, $i=2=j$ with $u_{2} = \sqrt{\rho_{\gamma_{2}}}$ and using \eqref{density:hartree-operator} for $\rho_{\gamma_{2}}$, we obtain
\begin{align}
\tilde{\mathcal{E}}_{\eta}^{\rm H}(\gamma_{1},\gamma_{2}) & \geq c_{1}\left[\langle u_{N}^{(1)},-\Delta+V_{1}+\eta A,u_{N}^{(1)}\rangle - \frac{a_{1}}{2}\int_{\mathbb R^2}|u_{N}^{(1)}(x)|^{4}{\rm d}x\right] \nonumber \\
& \quad + c_{2}\sum_{j}n_{j}^{(2)}\left[\langle u_{j}^{(2)},-\Delta+V_{2} - c_{1}a_{12}w_{N}^{(12)}\star |u_{N}^{(1)}|^{2},u_{j}^{(2)}\rangle - \frac{a_{2}}{2}\int_{\mathbb R^2}|u_{j}^{(2)}(x)|^{4}{\rm d}x\right] \nonumber \\
& \geq c_{1}\left[\langle u_{N}^{(1)},-\Delta+V_{1}+\eta A,u_{N}^{(1)}\rangle - \frac{a_{1}}{2}\int_{\mathbb R^2}|u_{N}^{(1)}(x)|^{4}{\rm d}x\right] \nonumber \\
& \quad + c_{2}\inf_{\substack{u\in H^{1}(\mathbb R^2)\\ \|u\|_{L^{2}}=1}}\left\{\langle u,-\Delta+V_{2} - c_{1}a_{12}w_{N}^{(12)}\star |u_{N}^{(1)}|^{2},u\rangle - \frac{a_{2}}{2}\int_{\mathbb R^2}|u(x)|^{4}{\rm d}x\right\}, \label{functional:modified-nls-one-component-2}
\end{align}
where we have used $\sum_{j}n^{(2)}_{j} = 1$. Let $u_{N}^{(2)}\in H^{1}(\mathbb R^2)$ be a ground state for the variational problem \eqref{functional:modified-nls-one-component-2}, which exists under the condition $0<a_{2}<a_{*}$. Then, it follows that
\begin{equation}\label{estimate:energy-quantum-hartree-operator-modified}
\tilde{\mathcal{E}}_{\eta}^{\rm H}(\gamma_{1},\gamma_{2}) \geq 	\mathcal{E}_{N,\eta}^{\rm mH} (u_{N}^{(1)},u_{N}^{(2)}) \geq E_{N,\eta}^{\rm mH},
\end{equation}
which holds for any $\gamma_{1}=\gamma_{1}^{*}\geq 0$ and $\gamma_{2}=\gamma_{2}^{*}\geq 0$ such that $\tr\gamma_{1} = 1 = \tr\gamma_{2}$.  Finally, we conclude from \eqref{estimate:energy-quantum-hartree-operator-modified} and \eqref{estimate:energy-quantum-hartree-operator} that  we must have
\begin{equation}\label{lower:quantum-modified-hartree}
E_{\eta}^{\rm Q} \geq E_{N,\eta}^{\rm  mH} - CN^{2\beta-1}.
\end{equation}

Let $(u_{1,N,\eta},u_{2,N,\eta})$ and $\Psi_{N}$ be ground states for $E_{N,\eta}^{\rm mH}$ and $H_{N}=H_{N,0}$, respectively. We have
\begin{align}
\eta c_{1}\tr(A\gamma_{\Psi_{N}}^{(1,0)}) & = N^{-1}\langle\Psi_{N},H_{N,\eta}\Psi_{N}\rangle - N^{-1}\langle\Psi_{N},H_{N}\Psi_{N}\rangle  \nonumber\\
& \geq E_{\eta}^{\rm Q} - E_{N}^{\rm Q} \nonumber\\
& \geq E_{N,\eta}^{\rm mH} - E_{N}^{\rm mH} + \mathcal{O}(N^{\frac{3}{p_{0}+2}\gamma-\beta}) + \mathcal{O}(N^{2\beta-1}) \nonumber\\
& \geq \eta c_{1}\langle u_{1,N,\eta},Au_{1,N,\eta}\rangle + \mathcal{O}(N^{\frac{3}{p_{0}+2}\gamma-\beta}) + \mathcal{O}(N^{2\beta-1}), \label{FH}
\end{align}
where $p_{0}=\min\{p_{1},p_{2}\}$. In the above, the first and the last inequalities are the variational principles. The second estimate used \eqref{lower:quantum-modified-hartree} and \eqref{upper:quantum-modified-hartree}. Now, under the assumption that
$$
0<\gamma<\min\left\{\frac{p_{0}+2}{p_{0}+3}\beta,\frac{p_{0}+2}{p_{0}}(1-2\beta)\right\},
$$
one can pick $\eta=\eta_{N}=o\big((a_{*}-a_{N})^{\frac{p_{0}}{p_{0}+2}}\big)$ such that
\begin{equation} \label{approximate-ground-state-I}
\lim_{N\to\infty}\eta^{-1}(N^{\frac{3}{p_{0}+2}\gamma-\beta} + N^{2\beta-1}) = 0.
\end{equation}
Then, it follows from \eqref{FH} and repeating the argument with $A$ replaced by $-A$ that
\begin{equation} \label{conv:ground-state-1-2-I}
\langle u_{1,N,\eta},Au_{1,N,\eta}\rangle + o(1)_{N\to\infty} \leq \tr(A\gamma_{\Psi_{N}}^{(1,0)}) \leq \langle u_{1,N,-\eta},Au_{1,N,-\eta}\rangle + o(1)_{N\to\infty}.
\end{equation}
On the other hand, since $(u_{1,N,\eta},u_{2,N,\eta})$ is a ground state for $E_{N,\eta}^{\rm mH}$ (recall that $\mathcal{E}_{N}^{\rm mH} = \mathcal{E}_{N,0}^{\rm mH}$ and $E_{N}^{\rm mH} = E_{N,0}^{\rm mH}$), we have, with the choice of $\eta$ in \eqref{approximate-ground-state-I},
$$
E_{N}^{\rm mH} \leq \mathcal{E}_{N}^{\rm mH}(u_{1,N,\eta},u_{2,N,\eta}) \leq E_{N,\eta}^{\rm mH} + \eta\|A\| \leq \mathcal{E}_{N,\eta}^{\rm mH}(u_{1,N,0},u_{2,N,0}) + \eta\|A\| \leq E_{N}^{\rm mH} + 2\eta\|A\|.
$$
It follows from the above that $(u_{1,N,\eta},u_{2,N,\eta})$ and $(u_{1,N,-\eta},u_{2,N,-\eta})$ are sequences of quasi-ground states for $E_{N,\eta}^{\rm mH}$. We may apply Theorem \ref{thm:blowup-modified-hartree-approximate-1-2-I} together with \eqref{conv:ground-state-1-2-I} to  get the trace-class weak-$\star$ convergence of $\gamma_{\Psi_{N}}^{(1,0)}$ to $|Q_{N}\rangle \langle Q_{N}|$, where $\ell_{N}$ is defined as in Theorem \ref{thm:blow-up-bec-1-2-I} and $Q_{N} = \ell_{N}Q_{0}(\ell_{N}\cdot)$ with $Q_{0}$ given by \eqref{GN:normalized}. Since no mass is lost in the limit, the convergence must hold in trace-class norm. 

Now, we consider the perturbed Hamiltonian in the group of $N_{2}$ particles
\begin{align}
H_{N,\eta} = &  \sum_{i=1}^{N_{1}}\big(-\Delta_{x_{i}}+V_{1}(x_{i})\big)-\frac{1}{N_{1}-1}\sum_{1 \leq i < j \leq N_{1}}w_{N}^{(1)}(x_{i}-x_{j}) \nonumber \\
& + \sum_{r=1}^{N_{2}}\big(-\Delta_{y_{r}}+V_{2}(y_{r})+\eta A_{y_{r}}\big)-\frac{1}{N_{2}-1}\sum_{1 \leq r < s \leq N_{2}}w_{N}^{(2)}(y_{r}-y_{s}) \nonumber \\
& - \frac{1}{N}\sum_{i=1}^{N_{1}}\sum_{r=1}^{N_{2}}w_{N}^{(12)}(x_{i}-y_{r}).\label{hamiltonian:perturbed-2}
\end{align}
At this stage, by repeating the above argument, we also obtain the convergence of $\gamma_{\Psi_{N}}^{(0,1)}$ to $|Q_{N}\rangle \langle Q_{N}|$ in trace-class norm. Equivalently, both $\gamma_{\Phi_{N}}^{(1,0)}$ and $\gamma_{\Phi_{N}}^{(0,1)}$ converge in trace-class norm to a rank-one operator $|Q_{0}\rangle \langle Q_{0}|$, where $\Phi_{N} = \ell_{N}^{-N}\Psi_{N}(\ell_{N}^{-1}\cdot)$. It is well known that this implies the convergence of generic higher order density matrices to tensor powers of the limiting operator (see, e.g., the discussion following \cite[Section 3]{MicOlg-17}).

\subsection{Proof of Theorem \ref{thm:blow-up-bec-II}}\label{subsec:proof-II}

The purpose of this subsection is to prove Theorem \ref{thm:blow-up-bec-II}, which gives the blow-up profile of the many-body system \eqref{hamiltonian} when the interaction strength of intra-species among particles in each component tends to a critical number. We first revisit the blow-up phenomenon for the NLS minimization problem \eqref{functional:nls-two-component}. In the case $a_{12}<0$, the existence of the NLS ground states follows the standard direct method in the calculus of variations. In \cite[Theorem 1.1]{GuoZenZho-18}, the authors proved that if $0<a_{1},a_{2}<a_{*}$, then $E^{\rm NLS}\geq 0$ and it has at least one ground state. On the other hand, $E^{\rm NLS}=-\infty$ when either $a_{1}>a_{*}$ or $a_{2}>a_{*}$. This is somehow similar to the one-component setting (see \cite{GuoSei-14}).

The next result concerns the limit behavior of NLS energy and its ground states. For the system of attractive intra-species interactions and repulsive inter-species interactions, we are not able to determine the accurate blow-up rate of ground states when $z_{1} = z_{2}$, due to the absence of a refined energy estimate. When $z_{1} \ne z_{2}$, the decays property \eqref{decay:exponential} allows us to control the cross-term in \eqref{functional:nls-two-component}. In this case, the authors in \cite{GuoZenZho-18} proved that for any fixed $a_{12}<0$, we have 
\begin{equation}\label{behavior:nls-energy-II}
c_{1}E_{1}^{\rm NLS} + c_{2}E_{2}^{\rm NLS} \leq E^{\rm NLS} \leq c_{1}E_{1}^{\rm NLS} + c_{2}E_{2}^{\rm NLS} + C\left(e^{-\mu_{0}(a_{*}-a_{1})^{-\frac{1}{p_{2}+2}}} + e^{-\mu_{0}(a_{*}-a_{2})^{-\frac{1}{p_{2}+2}}}\right).
\end{equation}
Here, $E_{i}^{\rm NLS}$ is defined in \eqref{energy:nls-one-component}  and  $\mu_{0} = \mu|z_{1}-z_{2}|>0$ with $\mu>0$ given by \eqref{decay:exponential}. The limiting profile of the NLS \emph{exact} ground states follows from the energy estimate \eqref{behavior:nls-energy-II} and the asymptotic behavior of the NLS energy in the one-component setting, which is given by \eqref{lambda:II}. For the detailed analysis, we refer the reader to \cite[Theorem 1.2]{GuoZenZho-18}.

In this subsection, we only study the blow-up profile of the many-body system \eqref{hamiltonian} in the case $z_{1} \ne z_{2}$. The asymptotic formula of the quantum energy which is given in Theorem \ref{thm:blow-up-bec-II} follows from that of the NLS energy and the estimate
\begin{equation}\label{asymptotic:quantum-energy-II}
c_{1}E_{1}^{\rm NLS} + c_{2}E_{2}^{\rm NLS} - CN^{2\beta-1} \leq E_{N}^{\rm H} - CN^{2\beta-1} \leq E_{N}^{\rm Q} \leq E_{N}^{\rm H} \leq c_{1}E_{1}^{\rm NLS} + c_{2}E_{2}^{\rm NLS} + \mathcal{R}_{5},
\end{equation}
where we abbreviated by $\mathcal{R}_{5}$ the error terms, with $\ell_{i,N}=\Lambda_{i} (a_{*}-a_{i,N})^{-\frac{1}{p_{i}+2}}$ for $i\in\{1,2\}$,
\begin{equation}\label{err:energy}
\mathcal{R}_{5} := CN^{-\beta}(\ell_{1,N}^{3} + \ell_{2,N}^{3}) + C(e^{-\mu_{0}\ell_{1,N}} + e^{-\mu_{0}\ell_{2,N}}).
\end{equation}
In \eqref{asymptotic:quantum-energy-II}, the first inequality follows from \eqref{ineq:interaction} and the non-negativity of the inter-species interactions. The second and the third inequalities used \eqref{lower:quantum-hartree-energy}. Finally, the last inequality follows from \eqref{upper:hartree-nls-energy} and \eqref{behavior:nls-energy-II}.

Similarly to what was done in Section \ref{subsec:proof-I}, we need to extend the blow-up result in \cite{GuoZenZho-18} to the (modified) Hartree ground states in order to obtain the convergence of the many-body ground states. Fortunately, the inter-species interactions are repulsive and we can ignore it in the energy estimate. Therefore, an extension to the blow-up behavior of the NLS \emph{approximate} ground states is sufficient for our purpose. We have the following.

\begin{theorem}\label{lem:blowup-nls-approximate-II}
	Assume that $V_{i}$, for $i\in\{1,2\}$, are defined as in \eqref{potentials:external} with $z_{i}\in\mathbb R^2$. Let $a_{i,N}\nearrow a_{*}$ as $N\to\infty$. Let $u_{i,N} \in H^{1}(\mathbb{R}^{2})$ be a sequence of positive functions such that $\|u_{i,N}\|_{L^{2}}=1$ and 
	\begin{equation}\label{behavior:approximate-nls-energy-II}
	\mathcal{E}_{i}^{\rm NLS}(u_{i,N}) = E_{i}^{\rm NLS} + o(E_{i}^{\rm NLS})_{N\to\infty} = (a_{*}-a_{i,N})^{\frac{p_{i}}{p_{i}+2}}\left(\frac{p_{i}+2}{p_{i}} \cdot \frac{\Lambda_{i}^{2}}{a_{*}}+o(1)_{N\to\infty}\right)
	\end{equation}
	where $\mathcal{E}_{i}^{\rm NLS}$ are defined in \eqref{functional:nls-one-component} and $\Lambda_{i}$ are given by \eqref{lambda:II}. Then, we have
	\begin{equation}\label{conv:approximate-nls-one-component}
	\lim_{N\to\infty} \ell_{i,N}^{-1} u_{i,N}(\ell_{i,N}^{-1}\cdot+z_{i}) = Q_{0}
	\end{equation}
	strongly in $L^{2}(\mathbb R^2)$, where $\ell_{i,N}=\Lambda_{i} (a_{*}-a_{i,N})^{-\frac{1}{p_{i}+2}}$ and $Q_{0}$ is given by \eqref{GN:normalized}.
\end{theorem}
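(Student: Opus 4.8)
The plan is to transcribe the argument of Theorem~\ref{thm:blowup-modified-hartree-approximate-1-2-I} to the single-component setting, where it simplifies considerably since there is neither an inter-species cross-term nor a second profile to reconcile. First I would introduce the rescaled functions $\tilde u_{i,N}(x) := \ell_{i,N}^{-1}u_{i,N}(\ell_{i,N}^{-1}x + z_i)$, which satisfy $\|\tilde u_{i,N}\|_{L^2}=1$, so that \eqref{conv:approximate-nls-one-component} amounts to $\tilde u_{i,N}\to Q_0$ in $L^2(\mathbb R^2)$. Using $V_i(x)=|x-z_i|^{p_i}$ and $\ell_{i,N}=\Lambda_i(a_*-a_{i,N})^{-1/(p_i+2)}$, a change of variables rewrites the energy as
\[
\mathcal{E}_i^{\rm NLS}(u_{i,N}) = \ell_{i,N}^{2}\,T_N + \ell_{i,N}^{-p_i}\int_{\mathbb R^2}|x|^{p_i}\,|\tilde u_{i,N}(x)|^2\,{\rm d}x,\qquad T_N := \int_{\mathbb R^2}\Big(|\nabla\tilde u_{i,N}|^2 - \tfrac{a_{i,N}}{2}|\tilde u_{i,N}|^4\Big)\,{\rm d}x,
\]
where $T_N\ge\frac{a_*-a_{i,N}}{a_*}\int_{\mathbb R^2}|\nabla\tilde u_{i,N}|^2\ge0$ by \eqref{ineq:GN-0} and $a_{i,N}<a_*$, so both summands are non-negative.

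The second step is compactness. Dividing the identity above by $\ell_{i,N}^{2}(a_*-a_{i,N})$ and inserting the prescribed asymptotics \eqref{behavior:approximate-nls-energy-II} yields
\[
\frac{p_i+2}{p_i\,a_*} + o(1) \ge \frac{1}{a_*}\int_{\mathbb R^2}|\nabla\tilde u_{i,N}|^2\,{\rm d}x + \Lambda_i^{-p_i-2}\int_{\mathbb R^2}|x|^{p_i}\,|\tilde u_{i,N}|^2\,{\rm d}x,
\]
which bounds $\{\tilde u_{i,N}\}$ in $H^1(\mathbb R^2)$ and bounds its $p_i$-th moment uniformly in $N$. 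Passing to a subsequence, $\tilde u_{i,N}\rightharpoonup W$ weakly in $H^1$; the moment bound gives tightness ($\int_{|x|>R}|\tilde u_{i,N}|^2\le R^{-p_i}$ times a constant), so Rellich's theorem together with interpolation against the $H^1$ bound upgrades this to strong convergence in $L^r(\mathbb R^2)$ for every $2\le r<\infty$, whence $\|W\|_{L^2}=1$.

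Next I would identify $W$. Since the moment term is non-negative, $T_N\le\ell_{i,N}^{-2}\mathcal{E}_i^{\rm NLS}(u_{i,N})=(a_*-a_{i,N})\big(\tfrac{p_i+2}{p_i a_*}+o(1)\big)\to0$; combining weak lower semicontinuity of the Dirichlet term, strong $L^4$ convergence, and $a_{i,N}\to a_*$ gives $\int_{\mathbb R^2}|\nabla W|^2-\frac{a_*}{2}\int_{\mathbb R^2}|W|^4\le0$. By \eqref{ineq:GN-0} this must be an equality, so $W$ is an optimizer of the Gagliardo--Nirenberg inequality; by its uniqueness up to translation and dilation, $W(x)=(a_*)^{-1/2}b\,Q(bx+x_0)$ for some $b>0$ and $x_0\in\mathbb R^2$. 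Finally I would pin down $b=1$, $x_0=0$ exactly as in Theorem~\ref{thm:blowup-modified-hartree-approximate-1-2-I}: inserting this $W$ into the $\liminf$ of the displayed inequality, using the rearrangement inequality $\int_{\mathbb R^2}|y-x_0|^{p_i}|Q|^2\ge\int_{\mathbb R^2}|y|^{p_i}|Q|^2$ and the definition of $\Lambda_i$ in \eqref{lambda:II}, reduces the bound to $\tfrac{p_i+2}{p_i}\ge b^2+\tfrac{2}{p_i}b^{-p_i}$, whose right-hand side is minimized only at $b=1$ with value $\tfrac{p_i+2}{p_i}$. Equality then forces $b=1$ and, through strictness of the rearrangement, $x_0=0$, so $W=Q_0$; strong $L^2$ convergence has already been secured in the compactness step.

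The main obstacle is conceptual rather than computational: because the $u_{i,N}$ are only \emph{approximate} ground states, the Euler--Lagrange equation is unavailable, and the entire argument must be powered by the fact that the prescribed energy asymptotics \eqref{behavior:approximate-nls-energy-II} exactly saturates the sharp Gagliardo--Nirenberg constant. The most delicate point is the final rigidity step, where the dilation $b$ and the translation $x_0$ are pinned down simultaneously from a single scalar inequality; here the sharpness of \eqref{ineq:GN-0} and the strictness of the rearrangement inequality (valid since $Q$ is radially symmetric and strictly decreasing) leave no slack.
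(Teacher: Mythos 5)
Your proof is correct: the rescaling identity, the lower bound $T_N\ge\frac{a_*-a_{i,N}}{a_*}\int|\nabla\tilde u_{i,N}|^2$ from \eqref{ineq:GN-0}, the normalized inequality obtained after dividing by $\ell_{i,N}^2(a_*-a_{i,N})$ (using $\ell_{i,N}^{-p_i-2}=\Lambda_i^{-p_i-2}(a_*-a_{i,N})$), and the final scalar bound $\frac{p_i+2}{p_i}\ge b^2+\frac{2}{p_i}b^{-p_i}$ (using $\int|x|^{p_i}|Q|^2=\frac{2}{p_i}\Lambda_i^{p_i+2}$ from \eqref{lambda:II} and $\|\nabla Q\|_{L^2}^2=a_*$) all check out, and the rigidity step correctly pins down $b=1$, $x_0=0$. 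However, you should know that the paper does not prove this theorem internally at all: its entire proof is the citation to \cite[Theorem 2.1]{LewNamRou-17-proc}. What you have produced is a self-contained reconstruction of that external result, obtained by specializing the paper's own two-component argument (the proof of Theorem \ref{thm:blowup-modified-hartree-approximate-1-2-I}) to one component, where it indeed simplifies because there is no cross-term and no second profile to match; this is the same circle of ideas (compactness from the moment bound, identification of the limit as a Gagliardo--Nirenberg optimizer, rigidity via the uniquely minimized scalar function) used in the cited reference. Your route buys self-containedness at the cost of length, while the paper's citation buys brevity. One small point worth adding explicitly: since you extract a subsequence but the limit $Q_0$ is unique and independent of the subsequence extracted, the standard argument (every subsequence admits a further subsequence converging in $L^2$ to $Q_0$) upgrades the conclusion to convergence of the full sequence, which is what the statement \eqref{conv:approximate-nls-one-component} asserts; the paper's Theorem \ref{thm:blowup-modified-hartree-approximate-1-2-I} has the same implicit feature.
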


\begin{proof}
	Proof of Theorem \ref{lem:blowup-nls-approximate-II} can be found in \cite[Theorem 2.1]{LewNamRou-17-proc}.
\end{proof}

Now, we are in the position to give the proof of Theorem \ref{thm:blow-up-bec-II} which closely follows that of Theorem \ref{thm:blow-up-bec-1-2-I}. For the reader's convenience, we will mention only the differences in the following.

\begin{proof}[Proof of Theorem \ref{thm:blow-up-bec-II}.]
	Let $\eta>0$ be a small parameter and $A$ be a bounded self-adjoint operator on $L^{2}(\mathbb R^{2})$. Consider the perturbed Hamiltonian in the group of $N_{1}$ particles as in \eqref{hamiltonian:perturbed-1} with the ground state energy per particle denoted by $E_{\eta}^{\rm Q}$. The associated NLS functional is
	$$
	\mathcal{E}_{\eta}^{\rm NLS} (u_{1},u_{2}) = c_{1}\mathcal{E}_{1,\eta}^{\rm NLS}(u_{1}) + c_{2}\mathcal{E}_{2}^{\rm NLS}(u_{2}) - c_{1}c_{2}a_{12}\int_{\mathbb R^{2}}|u_{1}(x)|^{2}|u_{2}(x)|^{2}{\rm d}x,
	$$
	where we have introduced the NLS energy functional
	$$
	\mathcal{E}_{1,\eta}^{\rm NLS}(u) = \mathcal{E}_{1}^{\rm NLS}(u) + \eta\langle u,Au \rangle
	$$
	with the corresponding NLS energy $E_{1,\eta}^{\rm NLS}$. Note that $\mathcal{E}_{1,0}^{\rm NLS} = \mathcal{E}_{1}^{\rm NLS}$ and $E_{1,0}^{\rm NLS} = E_{1}^{\rm NLS}$. Let $u_{1,N,\eta}$ be a ground state for $E_{1,\eta}^{\rm NLS}$. Let $\Psi_{N}$ be a ground state for $H_{N}=H_{N,0}$. Then, it follows from \eqref{asymptotic:quantum-energy-II} that
	\begin{align*}
	\eta c_{1}\tr(A\gamma_{\Psi_{N}}^{(1,0)}) & = N^{-1}\langle\Psi_{N},H_{N,\eta}\Psi_{N}\rangle - N^{-1}\langle\Psi_{N},H_{N}\Psi_{N}\rangle \\
	& \geq E_{\eta}^{\rm Q} - E_{N}^{\rm Q} \\
	& \geq c_{1}E_{1,\eta}^{\rm NLS} + c_{2}E_{2}^{\rm NLS} - c_{1}E_{1}^{\rm NLS} - c_{2}E_{2}^{\rm NLS} - CN^{2\beta-1} - \mathcal{R}_{5}\\
	& \geq \eta c_{1}\langle u_{1,N,\eta},Au_{1,N,\eta}\rangle - CN^{2\beta-1} - \mathcal{R}_{5}.
	\end{align*}
	Here, $\mathcal{R}_{5}$ is defined as in \eqref{err:energy}. Under the assumption that $a_{*}-a_{1,N} = N^{-\gamma_{1}}$ and $a_{*}-a_{2,N} = N^{-\gamma_{2}}$ with
	$$
	\frac{\gamma_{1}}{\gamma_{2}} = \frac{p_{1}+2}{p_{2}+2} \quad \text{and} \quad 0<\gamma_{i}<\min\left\{\frac{p_{i}+2}{p_{i}+3}\beta,\frac{p_{i}+2}{p_{i}}(1-2\beta)\right\},\quad i\in\{1,2\},
	$$
	one can pick $\eta=\eta_{1,N}=o\big((a_{*}-a_{1,N})^{\frac{p_{1}}{p_{1}+2}}\big)$ such that
	\begin{equation} \label{approximate-ground-state-II}
	\lim_{N\to\infty}\eta^{-1}(CN^{2\beta-1} - \mathcal{R}_{5}) = 0.
	\end{equation}
	The rest of the proof is the same as for Theorem \ref{thm:blow-up-bec-II}. The convergence of the many-body ground states follows from that of the one-component NLS \emph{approximate} ground states $u_{1,N,\eta}$  and $u_{2,N,\eta}$, which was given in Theorem \ref{lem:blowup-nls-approximate-II}. We omit the details for brevity.
\end{proof}

\section*{Acknowledgements}

The author is indebted to the referee for many useful suggestions which improved significantly the presentation of the paper. Also, he is very grateful to T. K\"onig for his proofreading of the manuscript. He cordially thanks A. Triay and X. Zeng for some  helpful discussions. This work was funded by the Deutsche Forschungsgemeinschaft (DFG, German Research Foundation) under Germany's Excellence Strategy EXC-2111-390814868.


\begin{thebibliography}{10}
	
	\bibitem{AndEnsMatWieCor-95}
	{ M.~H. Anderson, J.~R. Ensher, M.~R. Matthews, C.~E. Wieman, and E.~A.
		Cornell}, {\em Observation of {B}ose--{E}instein condensation in a dilute
		atomic vapor}, Science,  (1995), pp.~198--201.
	
	\bibitem{BaoCai-11}
	{ W.~Bao and Y.~Cai}, {\em {Ground states of two-component Bose--Einstein
			condensates with an internal atomic Josephson junction}}, East Asian Journal
	on Applied Mathematics, 1 (2011), pp.~49--81.
	
	\bibitem{BaoCai-13}
	{ W.~Bao and Y.~Cai}, {\em {Mathematical theory and numerical methods for
			Bose--Einstein condensation}}, Kinetic \& Related Models, 6 (2013),
	pp.~1--135.
	
	\bibitem{BayPet-96}
	{ G.~Baym and C.~Pethick}, {\em Ground-state properties of magnetically trapped
		{B}ose-condensed rubidium gas}, Physical Review Letters, 76 (1996), p.~6.
	
	\bibitem{BraSacTolHul-95}
	{ C.~C. Bradley, C.~Sackett, J.~Tollett, and R.~G. Hulet}, {\em Evidence of
		{B}ose--{E}instein condensation in an atomic gas with attractive
		interactions}, Physical Review Letters, 75 (1995), p.~1687.
	
	\bibitem{CorWie-02}
	{ E.~A. Cornell and C.~E. Wieman}, {\em Nobel lecture: {B}ose--{E}instein
		condensation in a dilute gas, the first 70 years and some recent
		experiments}, Reviews of Modern Physics, 74 (2002), p.~875.
	
	\bibitem{DalStr-96}
	{ F.~Dalfovo and S.~Stringari}, {\em Bosons in anisotropic traps: Ground state
		and vortices}, Physical Review A, 53 (1996), p.~2477.
	
	\bibitem{DavMewAndDruDurKurKet-95}
	{ K.~B. Davis, M.-O. Mewes, M.~R. Andrews, N.~J. van Druten, D.~S. Durfee,
		D.~Kurn, and W.~Ketterle}, {\em {B}ose--{E}instein condensation in a gas of
		sodium atoms}, Physical Review Letters, 75 (1995), p.~3969.
	
	\bibitem{DonClaCorRobCorWie-01}
	{ E.~A. Donley, N.~R. Claussen, S.~L. Cornish, J.~L. Roberts, E.~A. Cornell,
		and C.~E. Wieman}, {\em Dynamics of collapsing and exploding
		{B}ose--{E}instein condensates}, Nature, 412 (2001), p.~295.
	
	\bibitem{GerStrProHul-00}
	{ J.~M. Gerton, D.~Strekalov, I.~Prodan, and R.~G. Hulet}, {\em Direct
		observation of growth and collapse of a {B}ose--{E}instein condensate with
		attractive interactions}, Nature, 408 (2000), p.~692.
	
	\bibitem{GuoLiWeiZen-19a}
	{ Y.~Guo, S.~Li, J.~Wei, and X.~Zeng}, {\em {Ground states of two-component
			attractive Bose--Einstein condensates I: Existence and uniqueness}}, Journal
	of Functional Analysis, 276 (2019), pp.~183--230.
	
	\bibitem{GuoLiWeiZen-19b}
	{ Y.~Guo, S.~Li, J.~Wei, and X.~Zeng}, {\em {Ground states of two-component
			attractive Bose--Einstein condensates II: Semi-trivial limit behavior}},
	Transactions of the American Mathematical Society, 371 (2019),
	pp.~6903--6948.
	
	\bibitem{GuoSei-14}
	{ Y.~Guo and R.~Seiringer}, {\em {On the Mass Concentration for
			{B}ose--{E}instein Condensates with Attractive Interactions}}, Letters in
	Mathematical Physics, 104 (2014), pp.~141--156.
	
	\bibitem{GuoZenZho-17-dcds}
	{ Y.~Guo, X.~Zeng, and H.-S. Zhou}, {\em Blow-up solutions for two coupled
		{G}ross--{P}itaevskii equations with attractive interactions}, Discrete \&
	Continuous Dynamical Systems-A, 37 (2017), pp.~3749--3786.
	
	\bibitem{GuoZenZho-18}
	{ Y.~Guo, X.~Zeng, and H.-S. Zhou}, {\em Blow-up behavior of ground states for
		a nonlinear {S}chr\"{o}dinger system with attractive and repulsive
		interactions}, Journal of Differential Equations, 264 (2018), pp.~1411--1441.
	
	\bibitem{Hof-77}
	{ M.~{Hoffmann-Ostenhof} and T.~{Hoffmann-Ostenhof}}, {\em {{S}chr{\"o}dinger
			inequalities and asymptotic behavior of the electron density of atoms and
			molecules}}, Physical Review A, 16 (1977), pp.~1782--1785.
	
	\bibitem{KagMurShl-98}
	{ Y.~Kagan, A.~Muryshev, and G.~Shlyapnikov}, {\em {Collapse and Bose--Einstein
			condensation in a trapped Bose gas with negative scattering length}},
	Physical Review Letters, 81 (1998), p.~933.
	
	\bibitem{Ketterle-02}
	{ W.~Ketterle}, {\em Nobel lecture: When atoms behave as waves:
		{B}ose--{E}instein condensation and the atom laser}, Reviews of Modern
	Physics, 74 (2002), p.~1131.
	
	\bibitem{LevLeb-69}
	{ J.-M. L{\'e}vy-Leblond}, {\em Nonsaturation of gravitational forces}, Journal
	of Mathematical Physics, 10 (1969), pp.~806--812.
	
	\bibitem{Lewin-ICMP}
	{ M.~Lewin}, {\em {M}ean-{F}ield limit of {B}ose systems: rigorous results}, in
	Proceedings of the {I}nternational {C}ongress of {M}athematical {P}hysics,
	ICMP, 2015.
	
	\bibitem{LewNamRou-14}
	{ M.~Lewin, P.~T. Nam, and N.~Rougerie}, {\em {Derivation of {H}artree's theory
			for generic mean-field {B}ose systems}}, Advances in Mathematics, 254 (2014),
	pp.~570--621.
	
	\bibitem{LewNamRou-16}
	{ M.~Lewin, P.~T. Nam, and N.~Rougerie}, {\em {The mean-field approximation and
			the nonlinear Schr\"odinger functional for trapped {B}ose gases}},
	Transactions of the American Mathematical Society, 368 (2016),
	pp.~6131--6157.
	
	\bibitem{LewNamRou-17-proc}
	{ M.~Lewin, P.~T. Nam, and N.~Rougerie}, {\em Blow-up profile of rotating 2d
		focusing {B}ose gases}, in Workshop on Macroscopic Limits of Quantum Systems,
	Springer, 2017, pp.~145--170.
	
	\bibitem{LewNamRou-17}
	{ M.~Lewin, P.~T. Nam, and N.~Rougerie}, {\em A note on 2d focusing many-boson
		systems}, Proceedings of the American Mathematical Society, 145 (2017),
	pp.~2441--2454.
	
	\bibitem{LieLos-01}
	{ E.~H. Lieb and M.~Loss}, {\em {Analysis}}, vol.~14 of {Graduate Studies in
		Mathematics}, American Mathematical Society, Providence, RI, 2nd~ed., 2001.
	
	\bibitem{LieSei-10}
	{ E.~H. Lieb and R.~Seiringer}, {\em {The {S}tability of {M}atter in {Q}uantum
			{M}echanics}}, Cambridge University Press, 2010.
	
	\bibitem{LieThi-84}
	{ E.~H. Lieb and W.~E. Thirring}, {\em {Gravitational collapse in quantum
			mechanics with relativistic kinetic energy}}, Annals of Physics, 155 (1984),
	pp.~494--512.
	
	\bibitem{LieYau-87}
	{ E.~H. Lieb and H.-T. Yau}, {\em {{The {C}handrasekhar theory of Stellar
				Collapse as the Limit of Quantum Mechanics}}}, Communications in Mathematical
	Physics, 112 (1987), pp.~147--174.
	
	\bibitem{Maeda-10}
	{ M.~Maeda}, {\em {On the symmetry of the ground states of nonlinear
			{S}chr{\"o}dinger equations with potential}}, Advanced Nonlinear Studies, 10
	(2010), pp.~895--925.
	
	\bibitem{MicNamOlg-19}
	{ A.~Michelangeli, P.~T. Nam, and A.~Olgiati}, {\em Ground state energy of
		mixture of bose gases}, Reviews in Mathematical Physics, 31 (2019),
	p.~1950005.
	
	\bibitem{MicOlg-17}
	{ A.~Michelangeli and A.~Olgiati}, {\em Mean-field quantum dynamics for a
		mixture of {B}ose--{E}instein condensates}, Analysis and Mathematical
	Physics, 7 (2017), pp.~377--416.
	
	\bibitem{MueBay-00}
	{ E.~J. Mueller and G.~Baym}, {\em Finite-temperature collapse of a {B}ose gas
		with attractive interactions}, Physical Review A, 62 (2000), p.~053605.
	
	\bibitem{NamRou-20}
	{ P.~T. Nam and N.~Rougerie}, {\em {Improved stability for 2D attractive Bose
			gases}}, Journal of Mathematical Physics, 61 (2020), p.~021901.
	
	\bibitem{Nguyen-19b}
	{ D.-T. Nguyen}, {\em {Many-Body Blow-Up Profile of Boson Stars with External
			Potentials}}, Review in Mathematical Physics, 31 (2019), p.~1950034.
	
	\bibitem{Onsager-39}
	{ L.~Onsager}, {\em {Electrostatic Interaction of Molecules}}, Journal of
	Physical Chemistry, 43 (1939), pp.~189--196.
	
	\bibitem{SacStoHul-98}
	{ C.~Sackett, H.~Stoof, and R.~Hulet}, {\em {Growth and collapse of a
			Bose--Einstein condensate with attractive interactions}}, Physical review
	letters, 80 (1998), p.~2031.
	
	\bibitem{UedLeg-98}
	{ M.~Ueda and A.~J. Leggett}, {\em Macroscopic quantum tunneling of a
		{B}ose--{E}instein condensate with attractive interaction}, Physical Review
	Letters, 80 (1998), p.~1576.
	
	\bibitem{Weinstein-83}
	{ M.~I. Weinstein}, {\em {Nonlinear {S}chr{\"o}dinger equations and sharp
			interpolation estimates}}, Communications in Mathematical Physics, 87 (1983),
	pp.~567--576.
	
	\bibitem{Zhang-00}
	{ J.~Zhang}, {\em {Stability of Attractive {B}ose--{E}instein Condensates}},
	Journal of Statistical Physics, 101 (2000), pp.~731--746.
	
\end{thebibliography}
\end{document}